\newcommand{\textscrb}[1]{%
  \text{\usefont{LS1}{stixscr}{m}{n}#1}%
}
\keywords{FIFO machines, reachability, underapproximation, counter machines}
\DeclareMathAlphabet{\mymathbb}{U}{BOONDOX-ds}{m}{n}
\DeclareMathOperator{\boundedL}{\textscrb{L}}
\DeclareFontFamily{OMX}{MnSymbolE}{}
\DeclareSymbolFont{MnLargeSymbols}{OMX}{MnSymbolE}{m}{n}
\DeclareFontShape{OMX}{MnSymbolE}{m}{n}{
    <-6>  MnSymbolE5
   <6-7>  MnSymbolE6
   <7-8>  MnSymbolE7
   <8-9>  MnSymbolE8
   <9-10> MnSymbolE9
  <10-12> MnSymbolE10
  <12->   MnSymbolE12
}{}
\DeclareFontShape{OMX}{MnSymbolE}{b}{n}{
    <-6>  MnSymbolE-Bold5
   <6-7>  MnSymbolE-Bold6
   <7-8>  MnSymbolE-Bold7
   <8-9>  MnSymbolE-Bold8
   <9-10> MnSymbolE-Bold9
  <10-12> MnSymbolE-Bold10
  <12->   MnSymbolE-Bold12
}{}
\let\llangle\@undefined
\let\rrangle\@undefined
\DeclareMathDelimiter{\llangle}{\mathopen}%
                     {MnLargeSymbols}{'164}{MnLargeSymbols}{'164}
\DeclareMathDelimiter{\rrangle}{\mathclose}%
                     {MnLargeSymbols}{'171}{MnLargeSymbols}{'171}
\newcommand{\IBounded}{IB }
\newcommand{\veca}{\fifocontents{a}}
\newcommand{\Sigmabot}{\Sigma^\bot}
\newcommand{\abL}[1]{L_{#1}^{\mathsf{last}}}
\newcommand{\pairs}[1]{G(#1)}
\newcommand{\exend}{\hfill \ensuremath{\lhd}}
\newcommand{\sem}[1]{\llbracket {#1} \rrbracket}
\newcommand{\cvalue}[1]{\llangle #1 \rrangle}
\newcommand{\fvalue}[2]{\sem{#1}_{#2}}
\newcommand{\crun}[1]{\llangle #1 \rrangle}
\newcommand{\frun}[1]{\sem{#1}}
\newcommand{\ChContents}{\prod_{\ch \in \Ch{\nch}} \alphabet_\ch^\ast}
\renewcommand{\xRightarrow}[2][]{\ext@arrow 0359\Rightarrowfill@{#1}{#2}}
\newcommand \fifo{\mathcal{S}}
\newcommand\fifostates{Q}
\newcommand\alphabet{\Sigma}
\newcommand\fifotransition{T}
\newcommand\fifocontents[1]{\mathbf{#1}}
\newcommand\init{q_0}
\newcommand\reachset[1]{\textit{Reach}_{#1}}
\newcommand\NP{\textit{NP}}
\newcommand\letter[1]{\mathsf{#1}}
\newcommand\notletter[1]{\bar{\mathsf{#1}}}
\newcommand\inc[1]{\mathsf{inc}(#1)}
\newcommand\dec[1]{\mathsf{dec}(#1)}
\newcommand\N{\mathbb{N}}
\newcommand\sendL{\mathord{\boundedL_!}}
\newcommand\recL{\mathord{\boundedL_?}}
\newcommand{\TestL}[2]{V_{#1}(\textscrb{R})}
\newcommand{\Good}{\textscrb{G}}
\newcommand{\Rel}{\textscrb{R}}
\newcommand{\Parikhimg}[1]{\textup{Parikh}(#1)}
\newcommand{\bpair}[2]{(#1\hspace{0.1em},\hspace{0.1em}#2)}
\newcommand{\boundedLp}{\mathop{\boundedL\hspace{0.01em}'}}
\newcommand{\boundedLpem}{\mathop{\boundedL_!\hspace{-0.2em}'}}
\newcommand\Nat{\mathbb{N}}
\newcommand{\nch}{m}
\newcommand{\action}{\beta}
\newcommand{\ch}{c}
\newcommand{\chb}{d}
\newcommand{\Cnt}{\mathit{Cnt}}
\renewcommand{\epsilon}{\varepsilon}
\newcommand{\CS}{\mathcal{C}}
\newcommand{\cnt}[2]{x_{(#1,#2)}}
\renewcommand{\fifo}{{M}}
\newcommand{\Alphw}{\mathit{Alph}}
\newcommand{\indexw}[2]{\mathit{i}_{#2}}
\newcommand{\Traces}[1]{\mathit{Traces}(#1)}
\newcommand{\aletter}{\fifocontents{a}}
\newcommand{\contents}{\fifocontents{w}}
\newcommand{\chcontents}{\fifocontents{v}}
\newcommand{\cscontents}{\fifocontents{v}}
\newcommand{\contentsp}[1]{\fifocontents{w}_{#1}}
\newcommand{\cscontentsp}[1]{\fifocontents{v}_{#1}}
\newcommand{\contentspp}[1]{\fifocontents{w}_{#1}'}
\newcommand{\cscontentspp}[1]{\fifocontents{v}_{#1}'}
\newcommand{\initcontents}{\bm{\epsilon}}
\newcommand{\initcntcontents}{\boldsymbol{0}}
\newcommand{\reachsetL}[2]{\mathit{Reach}_{#1}(#2)}
\newcommand{\A}{\mathcal{A}}
\newcommand{\Ch}[1]{\mathit{Ch}}
\newcommand{\sendaction}[2]{\langle #2!#1\rangle}
\newcommand{\recaction}[2]{\langle #2?#1\rangle}
\newcommand{\sendtrans}[4]{(#3,\sendaction{#1}{#2},#4)}
\newcommand{\rectrans}[4]{(#3,\recaction{#1}{#2},#4)}
\newcommand{\Sendtransp}[3]{\xrightarrow{\sendaction{#1}{#2}}_{#3}}
\newcommand{\Rectransp}[3]{\xrightarrow{\recaction{#1}{#2}}_{#3}}
\newcommand{\incdec}{\mathit{op}}
\newcommand{\incdectrans}[4]{(#3,(\incdec(#1),#2),#4)}
\newcommand{\incdectransp}[3]{\xrightarrow{(\incdec(#1),#2)}}
\newcommand{\ts}{\mathcal{T}}
\newcommand{\tsrel}[1]{\mathcal{T}}
\newcommand{\TSof}[1]{\mathcal{T}_{#1}}
\newcommand{\fifoAlpha}[1]{A_{#1}}
\newcommand{\csAlpha}[1]{A_{#1}}
\newcommand{\tsstates}{S}
\newcommand{\tsinit}{\mathit{init}}
\newcommand{\tstrans}{\to}
\newcommand{\tstransp}[1]{\xrightarrow{#1}}
\newcommand{\tsstateb}{t}
\newcommand{\tscomp}[2]{#1_#2}
\newcommand{\Suf}{\mathit{Suf}}
\newcommand{\Inf}{\mathit{Infix}}
\newcommand{\runCS}{\tau}
\newcommand{\runFIFO}{\sigma}
\newcommand{\FIFOaction}{\beta}
\newcommand{\CSaction}{\alpha}
\newcommand{\qdead}{q_{\textsf{end}}}
\newcommand{\clive}{d}
\newcommand{\CSmod}{\mathcal{C'}}
\newcommand{\CSmodp}{\mathcal{C''}}
\newcommand{\OBounded}{OB }
\newcommand{\ILBounded}{ILB-}
\begin{document}

\title[Bounded Reachability Problems are Decidable in FIFO Machines]{Bounded Reachability Problems are Decidable\texorpdfstring{\\}{} in FIFO Machines}

\author{Benedikt Bollig\rsuper{a}}
\author{Alain Finkel\rsuper{a,b}}
\author{Amrita Suresh\rsuper{a}}

\address{Universit\'e Paris-Saclay, ENS Paris-Saclay, CNRS, LMF, 91190, Gif-sur-Yvette, France}
\email{\{benedikt.bollig,alain.finkel,amrita.suresh\}@ens-paris-saclay.fr}{}{}%
\address{Institut Universitaire de France}

\begin{abstract}
The undecidability of basic decision problems for general FIFO machines such as reachability and unboundedness is well-known. In this paper, we provide an underapproximation for the general model by considering only runs that are input-bounded (i.e.\ the sequence of messages sent through a particular channel belongs to a given bounded language). We prove, by reducing this model to a counter machine with restricted zero tests, that the rational-reachability problem (and by extension, control-state reachability, unboundedness, deadlock, etc.) is decidable. This class of machines subsumes input-letter-bounded machines, flat machines, linear FIFO nets, and monogeneous machines, for which some of these problems were already shown to be decidable. These theoretical results can form the foundations to build a tool to verify general FIFO machines based on the analysis of input-bounded machines.
\end{abstract}

\maketitle

\section{Introduction}\label{sec: introduction}

 {\bf Context.} Asynchronous distributed processes communicating using First In First Out (FIFO) channels are being widely used for distributed and concurrent programming, and more recently, for web service choreographies.
 Since systems of processes communicating through (at least two) one-directional FIFO channels, or equivalently, machines having a unique control-structure  with a single FIFO channel (acting as a buffer) simulate Turing machines, most properties,
 such as unboundedness of a channel, are undecidable for such systems
 \cite{DBLP:journals/tcs/VauquelinF80,brand1983communicating,DBLP:journals/tcs/MemmiF85}.
 
\paragraph{Reachability in FIFO machines.}
If one restricts
 to runs
 with $B$-bounded channels
 (the number of messages in every channel does not exceed $B$), then reachability becomes decidable for existentially-bounded and universally-bounded FIFO systems \cite{DBLP:journals/fuin/GenestKM07}.
 When limiting the number of phases, the bounded-context reachability problem is in $2$-EXPTIME, even for recursive FIFO systems \cite{DBLP:conf/tacas/TorreMP08,DBLP:conf/fossacs/HeussnerLMS10}. For non-confluent topology, reachability is in EXPTIME for recursive FIFO systems with $1$-bounded channels \cite{DBLP:conf/fossacs/HeussnerLMS10}.
 The notion of $k$-synchronous computations was introduced in \cite{DBLP:conf/cav/BouajjaniEJQ18}.
 Reachability under this restriction and checking $k$-synchronizability are
 both PSPACE-complete \cite{DBLP:conf/fossacs/GiustoLL20}.
 Reachability is in PTIME in half-duplex systems \cite{DBLP:journals/iandc/CeceF05} with two processes (moreover, the reachability set is recognizable and effectively computable), but the natural extension to three processes leads to undecidability. Lossy FIFO systems  (where the channels can lose
 messages) \cite{DBLP:journals/fmsd/AbdullaCBJ04,AF-DistC-94} have been shown to be well-structured and have a decidable (but non-elementary) reachability problem \cite{DBLP:conf/lics/ChambartS08}.
 In \cite{MadhusudanP11,AiswaryaGK14}, uniform criteria for decidability of
 reachability and model-checking
 questions are established
 for communicating recursive systems whose
 restricted architecture or communication mechanism
 gives rise to behaviours of bounded tree-width.

 \newcommand{\Pref}{\mathit{Pref}}
 
\paragraph{Input-bounded FIFO machines.} 
 Many papers, starting in the 80s until today,
 have studied FIFO machines in which the input-language of a channel (i.e.\ the set of words that
 record the messages entering a channel) is included in the set $\Pref(w_1^*w_2^*\ldots w_n^*)$ of prefixes of a bounded language $w_1^*w_2^*\ldots w_n^*$.
 We call this class of FIFO machines \emph{input-bounded}.
 
 If the \emph{set of letters} that may enter a channel $c$ is reduced to a unique letter $a_c$, then the input-language of $c$ is included in $a_c^*$ and this subclass trivially reduces to VASS and Petri nets  \cite{DBLP:journals/ipl/YuG83}.
 Also note that, in general, the behaviour of those FIFO machines does not have bounded tree-width.
 \emph{Monogeneous} FIFO nets \cite{F-Petri82,DBLP:journals/tcs/MemmiF85,DBLP:journals/tcs/FinkelS01} (input-languages of channels $c$ are included in $\Pref(u_cv_c^*)$ where $u_c,v_c$ are two words associated with $c$) and \emph{linear} FIFO nets \cite{choquet1987simulation} (input-languages are included in $\Pref(a_1^*a_2^*\ldots a_n^*)$ where each $a_i$ is a letter and $a_i \neq a_j$ iff $i \neq j$) both generalize Petri nets with still a decidable reachability problem.  
 A variant of the reachability problem, the deadlock problem, is shown decidable for input-\emph{letter}-bounded FIFO systems in \cite{gouda1987deadlock} by reducing to reachability for VASS, 
 but the extension to general input-bounded machines was left open.
 
 \emph{Flat}  machines are another subclass of input-bounded machines in which 
 the language of their control-graph, considered as a finite automaton, is a bounded language.
 For flat FIFO machines, control-state reachability is NP-complete \cite{DBLP:conf/lics/EsparzaGM12}; this result has recently been extended to reachability, channel unboundedness, and other classical properties \cite{DBLP:conf/concur/FinkelP19}.
 
 To the best of our knowledge, the decidability status of control-state reachability, reachability, deadlock, and termination was not known for input-bounded FIFO machines,
 which strictly include all the
 classes discussed above such as flat, input-letter-bounded, monogeneous, and linear FIFO machines (the last three types contain VASS and they are all incomparable).
 The unboundedness problem of input-bounded FIFO machines 
 was shown decidable in \cite{DBLP:journals/tcs/JeronJ93} by using the well-structured concepts but with no extension to decidability of reachability.
 
\paragraph{Contributions} 
Our contributions can be summarized as follows:
 \begin{itemize}
 	\item We solve a problem that was left open in \cite{gouda1987deadlock}, the decidability of the reachability problem for input-bounded FIFO machines. We present a simulation of input-bounded FIFO machines by counter machines with restricted zero tests. The main idea is to associate a counter with each word in the bounded language, and to ensure that the counters are incremented and decremented in a way that corresponds to the FIFO order. Since we can have repeated letters, and ambiguities in the FIFO machine, we first need to construct a normal form of the FIFO machine. Furthermore, we ensure that for every run in the FIFO machine, we can construct an equivalent run in the counter machine and vice-versa. 
 	\item
 	As we actually solve the general rational-reachability problem, we can deduce the decidability
 	of other verification properties like control-state reachability, deadlock, unboundedness, and termination.
 	
 	\item We unify various definitions from the literature, survey the (not well-known) results, and generalize them.
 	
 	\item We study the natural dual of the input-bounded reachability problem, which are systems of output-bounded languages in which the set of words received by each channel is constrained to be bounded, and are able to deduce the reachability, unboundedness, termination and control-state reachability for the same. 
 	
 	\item We obtain better upper bounds for the input-bounded reachability of FIFO machines with a single channel (reachability is still undecidable for FIFO machines with a single channel). This is done by reducing it to reachability in \emph{unary ordered multi-pushdown systems} (a class that was previously analyzed in \cite{DBLP:journals/ijfcs/AtigBH17}). It is, hence, solvable in EXPTIME.
 	
 	\item Following the bounded verification paradigm, applied to FIFO machines (for instance in \cite{DBLP:conf/lics/EsparzaGM12,DBLP:conf/concur/FinkelP19}), we open the way to a methodology that would apply existing results on input-bounded FIFO machines to general FIFO machines.
 \end{itemize}
 
\paragraph{Plan} In Section~\ref{sec:prel}, we present counter and FIFO machines, with the connection-deconnection protocol as an example. Section~\ref{sec:bounded-reachability} contains the main result, which states the decidability of rational-reachability for FIFO machines restricted to input-bounded languages. Sections~\ref{sec:variants} and \ref{sec:unboundterm}
 consider variants of the rational reachability problem such as deadlock, unboundedness and termination. In Section~\ref{sec:opbounded}, we consider the output-bounded reachability problem, and other variants such as boundedness and termination. We look at the input-bounded problems FIFO machines with a single channel, and obtain some lower bounds in Section~\ref{sec:singlechannel}.
 Finally, in Section~\ref{sec:conclusion}, we mention further results, state some open problems,
 and discuss a possible theory of boundable FIFO machines.
 
A preliminary version of this work has been presented at the 31st International Conference on Concurrency Theory (CONCUR '20).  Our contributions extend the conference version as follows:

 This work  additionally considers the symmetric case of output-bounded languages in which the set of words received by each channel is constrained to be bounded. We are able to deduce the reachability, unboundedness, termination and control-state reachability for the same. 
It also studies the special case of input-bounded FIFO machines with a single channel (reachability is still undecidable for general FIFO machines with a single channel). This is done by reducing it to reachability in unary ordered multi-pushdown systems (a class that was previously analyzed). It is, hence, solvable in EXPTIME.
 We also expand on the proof idea for boundedness and termination for input-bounded FIFO machines, and provide the explicit construction for the decidability of the same.

\section{Preliminaries}\label{sec:prel}

\subsection*{Words and Languages.}

Let $A$ be a finite alphabet. 
As usual, $A^\ast$ is the set of finite words over $A$,
and $A^+$ the set of non-empty finite words.
We let $|w|$ denote the length of $w \in A^\ast$.
For the empty word $\epsilon$, we have $|\epsilon| = 0$.
Given $a \in A$,
let $|w|_a$ denote the number of occurrences of $a$ in $w$.
With this, we let $\Alphw(w) = \{a \in A \mid |w|_a \ge 1\}$.
The concatenation of two words $u,v \in A^\ast$ is denoted by $u \cdot v$
or $u.v$ or simply $uv$.
The sets of prefixes, suffixes, and infixes of $w \in A^\ast$ are
denoted by $\Pref(w)$, $\Suf(w)$, and $\Inf(w)$, resp.
Note that $\{\varepsilon,w\} \subseteq \Pref(w) \cap \Suf(w) \cap \Inf(w)$.
For a set $X$, 
any mapping $f: A^\ast \to 2^X$ can be extended to
$f: 2^{A^\ast} \to 2^X$ letting, for
$L \subseteq A^\ast$, $f(L) = \bigcup_{w \in L} f(w)$.
In particular,
$\Alphw$, $\Pref$, $\Suf$, and $\Inf$ are extended in that way.

\begin{defiC}[\cite{10.2307/1994067}]
	Let $w_1,\ldots,w_n \in A^+$ be non-empty words where $n \ge 1$.
	A \emph{bounded language} over $(w_1,\ldots,w_n)$
	is a language
	$L \subseteq w_1^\ast \ldots w_n^\ast$.
\end{defiC}

We always assume that a bounded language $L$ is given together with
its tuple $(w_1,\ldots,w_n)$ and that $\Alphw(L) = \Alphw(w_1 \ldots w_n)$.
We say that $L$ is \emph{distinct-letter} if
$|w_1 \ldots w_n|_a \le 1$ for all $a \in A$. 
If $|w_1| = \cdots = |w_n| = 1$, i.e. $w_1, \ldots, w_n \in A$, then $L$ is a \emph{letter-bounded language.} Let us remark that the set of bounded languages is closed under $\Pref$ and $\Suf$.

\subsection*{Semi-Linear Sets.}
A \emph{linear} set $X$ (of dimension $d \geq 1$) is defined as a subset of $\mathbb{N}^d$ for which there exist a basis $\mathbf{b} \in \mathbb{N}^d$ and a finite set of periods $\{ \mathbf{p}_1, \ldots, \mathbf{p}_m\} \subseteq \mathbb{N}^d$ such that 
$X = \{ \mathbf{b} + \sum_{i=1}^{m} \lambda_i\mathbf{p}_i \mid \lambda_1, \ldots, \lambda_m \in \mathbb{N}\}$.
A \emph{semi-linear} set is defined as a finite union of linear sets.

\subsection*{Transition Systems.}

A \emph{labeled transition system} is a quadruple $\ts = (\tsstates,A,\tstrans,\tsinit)$ where $S$ is the (potentially infinite) set of
\emph{configurations}\footnote{We say \emph{configurations}
	rather than \emph{states}
	to distinguish them from the \emph{control states} used in FIFO and counter machines.}, $A$ is a finite alphabet, $\tsinit \in S$ is the \emph{initial configuration}, and ${\tstrans} \subseteq \tsstates \times A \times \tsstates$
is the \emph{transition relation}.

For $s,s' \in S$, let $s \tstrans s'$ if $s \tstransp{a} s'$ for some $a \in A$.
For $w \in A^\ast$, we write $s \tstransp{w} s'$ if there is a $w$-labeled
path from $s$ to $s'$. Formally, $s \tstransp{\epsilon} s'$ if $s = s'$, and
$s \tstransp{a w} s'$ if there is $\tsstateb \in S$ such that
$s \tstransp{a} \tsstateb$ and $\tsstateb \tstransp{w} s'$.
We let $\Traces{\ts} = \{w \in A^\ast \mid \tsinit \tstransp{w} s$ for some $s \in \tsstates\}$.

Given $w \in A^\ast$, we let $\reachsetL{\ts}{w} = \{s \in \tsstates \mid \tsinit \tstransp{w} s\}$.
Moreover, for $L \subseteq A^\ast$, $\reachsetL{\ts}{L} = \bigcup_{w \in L} \reachsetL{\ts}{w}$
is the set of configurations that are reachable via a word from $L$.
Finally, the \emph{reachability set} of $\ts$ is defined as $\reachset{\ts} = \reachsetL{\ts}{A^\ast}$. We call $\ts$ \emph{finite} if $\reachset{\ts}$ is finite (and this is the case if $\tsstates$ is finite). Otherwise, $\ts$ is called \emph{infinite}.

\subsection*{FIFO Machines.}

We consider FIFO machines having a sequential control graph rather than systems of communicating processes that are distributed systems. It is clear that, given a distributed system, one may compute the Cartesian product of all processes to obtain a FIFO machine (the converse is not always true).

\begin{defi}\label{def:fifomachine}
	A \emph{FIFO machine} is a tuple
	$\fifo = (\fifostates, \Ch{\nch}, \alphabet, \fifotransition, \init)$ where
	$\fifostates$ is a finite set of \emph{control states}, $\init \in \fifostates$ is an \emph{initial control state}, and
	$\Ch{\nch}$ is a finite set of \emph{channels}.
	Moreover, $\alphabet$ is a finite \emph{message alphabet}. It is partitioned into
	$\alphabet = \biguplus_{\ch \in \Ch{\nch}} \alphabet_\ch$ where $\alphabet_\ch$ contains the messages that can be sent
	through channel $\ch$. Finally, $\fifotransition \subseteq \fifostates \times \fifoAlpha{\fifo} \times \fifostates$ is
	a \emph{transition relation} where $\fifoAlpha{\fifo} = \{\sendaction{a}{\ch} \mid \ch \in \Ch{\nch}$ and $a \in \alphabet_\ch\} \cup \{\recaction{a}{\ch} \mid \ch \in \Ch{\nch}$ and $a \in \alphabet_\ch\}$ is the set of send and receive actions.
\end{defi}

\newcommand{\msgc}{e}

\begin{exa}[Connection-Deconnection Protocol]\label{conn-deconn} A model for the (simplified) connection-deconnection protocol, CDP, between two processes is described as follows (see Figure~\ref{fig:input-fifo-reach-word}): We model the protocol with two automata (representing the two processes) and two (infinite) channels. The first processes (on the left) can open a session (this is denoted by sending the message ``$a$'' through channel $\ch_1$ to the other process). Once a session is open, the first process can close it (by sending message ``$b$'' to the other process), or on the demand of the second process (if it receives the message ``$\msgc$''). This protocol has been studied in \cite{DBLP:conf/stacs/Jeron91}.
	
	In the example, it is natural to have two separate processes.
	However, following Definition~\ref{def:fifomachine}, we formalize this
	in terms of the Cartesian product of the two processes.
	That is, the CDP is modeled as the FIFO machine
	$\fifo = (\fifostates, \Ch{\nch}, \alphabet, \fifotransition, \init)$
	where $\fifostates = \{0,1\} \times \{0,1\}$ (the Cartesian product of the local state spaces)
	with initial state $\init = (0,0)$,
	$\Ch{\nch} = \{\ch_1,\ch_2\}$, $\Sigma = \Sigma_{\ch_1} \uplus \Sigma_{\ch_2}$ with $\Sigma_{\ch_1} = \{a,b\}$
	and $\Sigma_{\ch_2} = \{e\}$. Moreover, the transition relation $\fifotransition$ contains,
	amongst others, $((0,0),\sendaction{a}{c_1},(1,0))$ and $((1,0),\recaction{a}{c_1},(1,1))$.
	\exend
\end{exa}

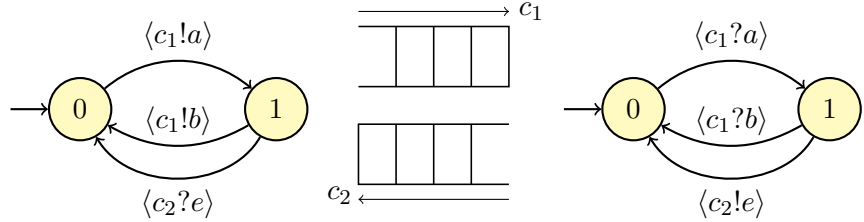
\begin{figure}[h]
	\centering
	\begin{tikzpicture}[->, node distance=1.75cm, auto, thick]
		\node[initial left, initial text=, place, fill=yellow!30] (p1) {$0$};
		\node[place, fill=yellow!30] (p2) [right= of p1]{$1$};
		
		\begin{scope}[shift={(0.7,1.1)}]
			\node[above] at (5.3,-0.05) {$c_1$};  
			\node (a) at (2.8, 0) {};
			\node (b) at (3.2, -0.4) {};
			\node (c) at (4.8, -0.42) {}; 
			\node (d) at (4.3, -0.4) {};
			\draw[-,line width=0.6pt] 
			(3.0,0) -- ++(2cm,0) --  ++(0,-0.8cm) -- ++(-2cm,0);
			\draw[-,line width=0.6pt]
			(4.0,0) -- ++ (0, -0.8cm);
			\draw[-,line width=0.6pt]
			(4.5,0) -- ++ (0, -0.8cm);
			\draw[-,line width=0.6pt]
			(3.5,0) -- ++ (0, -0.8cm);
			\draw[->, line width=0.4pt] 
			(3.0,0.2) -- ++(2cm,0);
		\end{scope}	
		
		\begin{scope}[shift={(0.7,1.8)}]	
			\node[above] at (2.75,-3.2) {$c_2$};  
			\node (a) at (3.3, -2.5) {}; 
			\node (b) at (3.2, -2.4) {};
			\node (c) at (4.8, -2.4) {};
			\node (d) at (5.2, -2.1) {};
			\draw[-,line width=0.6pt] 
			(5.0,-2) -- ++(-2cm,0) --  ++(0,-0.8cm) -- ++(2cm,0);
			\draw[-,line width=0.6pt]
			(4.0,-2) -- ++ (0, -0.8cm);
			\draw[-,line width=0.6pt]
			(4.5,-2) -- ++ (0, -0.8cm);
			\draw[-,line width=0.6pt]
			(3.5,-2) -- ++ (0, -0.8cm);
			\draw[->, line width=0.4pt] 
			(5.0,-3) -- ++(-2cm,0); 
		\end{scope}

		\node[initial left, initial text=, place, fill=yellow!30] (q1)[right= 6.5cm of p1] {$0$};
		\node[place, fill=yellow!30] (q2)[right= of q1]{$1$};
		
		\path[->]
		(p1) edge [bend left=40] node[] {$\sendaction{a}{c_1}$} (p2)
		(p2) edge [bend left] node[swap] {$\sendaction{b}{c_1}$} (p1)
		(p2) edge [bend left=60] node[] {$\recaction{\msgc}{c_2}$} (p1)
		(q1) edge [bend left=40] node[] {$\recaction{a}{c_1}$} (q2)
		(q2) edge [bend left] node[swap] {$\recaction{b}{c_1}$} (q1)
		(q2) edge [bend left=60] node[] {$\sendaction{\msgc}{c_2}$} (q1)
		;
		
	\end{tikzpicture}
	\caption{The model of the connection-deconnection protocol} \label{fig:input-fifo-reach-word}
\end{figure}

A FIFO machine $\fifo = (\fifostates, \Ch{\nch}, \alphabet, \fifotransition, \init)$
induces a (potentially infinite) transition system
$\TSof{\fifo} = (\tscomp{\tsstates}{\fifo},\fifoAlpha{\fifo},\tscomp{\tstrans}{\fifo},\tscomp{\tsinit}{\fifo})$.
Its set of configurations is $\tscomp{\tsstates}{\fifo} = \fifostates \times \ChContents$.
In $(q,\contents) \in \tscomp{\tsstates}{\fifo}$, the
first component $q$
denotes the current control state and $\contents = (\contentsp{\ch})_{\ch \in \Ch{\nch}}$
determines the contents $\contentsp{\ch} \in \Sigma_\ch^\ast$ for every channel $\ch \in  \Ch{\nch}$.
The initial configuration is $\tscomp{\tsinit}{\fifo} = (\init,\initcontents)$
where $\initcontents = (\varepsilon,\ldots,\varepsilon)$, i.e., every
channel is empty.
The transitions are given as follows:
\begin{itemize}
	\item $(q,\contents) \Sendtransp{a}{\ch}{\fifo} (q',\contents')$ if
	$\sendtrans{a}{\ch}{q}{q'} \in \fifotransition$,
	$\contentspp{\ch} = \contentsp{\ch} \cdot a$,
	and $\contentspp{\chb} = \contentsp{\chb}$ for all $\chb \in \Ch{\nch} \setminus \{\ch\}$;
	\item $(q,\contents) \Rectransp{a}{\ch}{\fifo} (q',\contents')$ if
	$\rectrans{a}{\ch}{q}{q'} \in \fifotransition$,
	$\contentsp{\ch} = a \cdot \contentspp{\ch}$,
	and $\contentspp{\chb} = \contentsp{\chb}$ for all $\chb \in \Ch{\nch} \setminus \{\ch\}$.
\end{itemize}
The index $\fifo$ may be omitted whenever $\fifo$ is clear from the context.

The \emph{reachability set} of $\fifo$ is defined as
the reachability set of $\TSof{\fifo}$, i.e.,
$\reachset{\fifo} = \reachset{\TSof{\fifo}}$ and,
for $L \subseteq \fifoAlpha{\fifo}^\ast$,
$\reachsetL{\fifo}{L} = \reachsetL{\TSof{\fifo}}{L}$.
Moreover, we let $\Traces{\fifo} = \Traces{\TSof{\fifo}}$.

\begin{exa}
	An example run of the FIFO machine $\fifo$ from Example~\ref{conn-deconn}
	and Figure~\ref{fig:input-fifo-reach-word} is
	$((0,0), (\varepsilon, \varepsilon)) \xrightarrow{\sendaction{a}{\ch_1}} ((1,0), (a, \varepsilon)) \xrightarrow{\recaction{a}{\ch_1}} ((1,1), (\varepsilon, \varepsilon)) \xrightarrow{\sendaction{\msgc}{\ch_2}} ((1,0), (\varepsilon, \msgc))$.
	As for the reachability set, we have, e.g.,
	$((1,1),((ba)^\ast, \epsilon))) \subseteq \reachset{{\fifo}}$ and
	$((0,0),(b(ab)^\ast, \msgc)) \subseteq \reachset{{\fifo}}$.
	Let us remark that CDP is not half-duplex because there are reachable configurations with both channels non-empty, e.g., $((0,0), (b,\msgc))$; moreover, it is neither monogeneous, nor linear, nor input-letter-bounded.
	\exend
\end{exa}

\newcommand{\proj}[2]{\mathit{proj}_{#2}(#1)}
\newcommand{\projsend}[2]{\mathit{proj}_{#2!}(#1)}
\newcommand{\projrec}[2]{\mathit{proj}_{#2?}(#1)}

\newcommand{\projsendmap}[1]{\mathit{proj}_{#1!}}
\newcommand{\projrecmap}[1]{\mathit{proj}_{#1?}}

\newcommand{\tupleL}{\bar{L}}

\newcommand{\sendmap}[1]{\mathit{sent}_{#1}}
\newcommand{\recmap}[1]{\mathit{recd}_{#1}}

\subsection*{Counter Machines.} We next recall the notion of counter machines,
where multiple counters can take non-negative integer values, be incremented
and decremented, and be tested for zero (though in a restricted fashion).

\newcommand{\counter}{x}
\newcommand{\counterb}{y}

\begin{defi}
	A \emph{counter machine} (with zero tests) is a tuple
	$\CS = (\fifostates, \Cnt, \fifotransition, \init)$.
	Like in a FIFO machine, $\fifostates$ is the finite set of \emph{control states}
	and $\init \in \fifostates$ is the \emph{initial control state}.
	Moreover, $\Cnt$ is a finite set of \emph{counters}
	and $\fifotransition \subseteq \fifostates \times \csAlpha{\CS} \times \fifostates$
	is the transition relation where $\csAlpha{\CS} =
	\{\inc{\counter},\dec{\counter} \mid \counter \in \Cnt\} \times 2^\Cnt$.
\end{defi}

The counter machine $\CS$ induces a transition system
$\TSof{\CS} = (\tscomp{\tsstates}{\CS},\csAlpha{\CS},\tscomp{\tstrans}{\CS},\tscomp{\tsinit}{\CS})$
with set of configurations $\tscomp{\tsstates}{\CS} = \fifostates \times \N^\Cnt$.
In $(q,\chcontents) \in \tscomp{\tsstates}{\CS}$,
$q$ is the current control state and
$\cscontents = (\cscontentsp{\counter})_{\counter \in \Cnt}$
represents the counter values.
The initial configuration is
$\tscomp{\tsinit}{\CS} = (\init,\initcntcontents)$
where $\initcntcontents$ maps all counters to $0$.
For $\incdec \in \{\mathsf{inc},\mathsf{dec}\}$, $\counter \in \Cnt$, and $Z \subseteq \Cnt$ (the counters tested for zero), there is a transition
$
(q,\cscontents) \incdectransp{\counter}{Z}{\fifo}_\CS (q',\cscontents')
$
if $\incdectrans{\counter}{Z}{q}{q'} \in \fifotransition$,
$\cscontentsp{\counterb} = 0$ for all $\counterb \in Z$ (applies the zero tests),
$\cscontentspp{\counter} = \cscontentsp{\counter} + 1$ if $\incdec = \mathsf{inc}$ and
$\cscontentspp{\counter} = \cscontentsp{\counter} - 1$ if $\incdec = \mathsf{dec}$, and
$\cscontentspp{\counterb} = \cscontentsp{\counterb}$ for all $\counterb \in \Cnt \setminus \{\counter\}$.

\medskip

\newcommand{\BCounters}[1]{L_{#1}^{\textsf{\textup{zero}}}}

The \emph{reachability set} of $\CS$ is defined as
$\reachset{\CS} = \reachset{\TSof{\CS}}$.
For $L \subseteq \csAlpha{\CS}^\ast$, we also let
$\reachsetL{\CS}{L} = \reachsetL{\TSof{\CS}}{L}$.
Moreover, $\Traces{\CS} = \Traces{\TSof{\CS}}$.
To get decidability of reachability in counter machines, we impose the restriction
that, once a counter has been tested for zero, it cannot be incremented or decremented anymore. This is clearly an extension of VASS\@.
To define this, let $\smash{\BCounters{\CS}} $ be the set of words
$(\incdec_1(\counter_1),Z_1) \ldots (\incdec_n(\counter_n),Z_n) \in \csAlpha{\CS}^\ast$
such that, for every two positions $1 \le i \le j \le n$, we have $\counter_j \not\in Z_i$.

\begin{thm}\label{thm:reach-counters}
	The following problem is decidable (though inherently non-elementary):
	Given a counter machine $\CS = (\fifostates, \Cnt, \fifotransition, \init)$,
	a regular language $L \subseteq \csAlpha{\CS}^\ast$, a control state
	$q \in \fifostates$, and a semi-linear set $V \subseteq \N^\Cnt$,
	do we have $(q,\cscontents) \in \reachsetL{\CS}{\BCounters{\CS} \cap L}$
	for some $\cscontents \in V$?
\end{thm}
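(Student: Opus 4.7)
The plan is to reduce the problem to plain reachability in a vector addition system with states (VASS). Reachability in VASS is decidable by the classical Mayr--Kosaraju theorem, and the non-elementary lower bound follows from the recent Ackermann-hardness results of Czerwi\'nski--Orlikowski and Leroux, accounting for the complexity caveat in the statement. The reduction exploits the structural property enforced by $\BCounters{\CS}$: at any point in a legal run, counters partition into \emph{live} ones (still modifiable) and \emph{frozen} ones (already tested for zero and hence, by the syntactic restriction, never touched again), and the frozen set grows monotonically along the run. The crucial observation that enables eliminating zero tests altogether is that, since a frozen counter is never modified after freezing, its value at the end of any run coincides with its value at the freezing moment; hence the zero test performed at freezing time is equivalent to the condition that the counter has value zero in the \emph{final} configuration, a condition that can be expressed through the target specification of the VASS reachability query.

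Concretely, I would construct a VASS $\mathcal{N}$ whose control states are pairs $(q,F)$ with $q \in \fifostates$ and $F \subseteq \Cnt$ the currently frozen set. For each transition $(q,(\op(x),Z),q') \in \fifotransition$ of $\CS$, $\mathcal{N}$ has a transition $(q,F) \to (q',F \cup Z)$ with counter effect $\op(x)$, enabled only when $x \notin F \cup Z$; this is precisely the syntactic condition captured by $\BCounters{\CS}$ (the case $i<j$ of $x_j \notin Z_i$ forcing $x \notin F$, and the case $i=j$ forcing $x \notin Z$). I would incorporate the regular constraint $L$ by taking the synchronous product of $\mathcal{N}$ with a finite automaton for $L$. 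The original query then becomes: does $\mathcal{N}$ reach some configuration $((q,F),\cscontents)$ with $\cscontentsp{y}=0$ for all $y \in F$ and $(\cscontentsp{y})_{y\in\Cnt} \in V$? Since semi-linear sets are closed under intersection, this is VASS reachability with a semi-linear target constraint, which reduces to plain VASS reachability by a standard gadget attached at the target: it nondeterministically subtracts a basis vector and then arbitrarily many copies of the associated periods of one linear component of the target, ending in a fresh accept state with all counters equal to zero.

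Correctness of the simulation is a routine bisimulation-style argument: every $\BCounters{\CS}$-run of $\CS$ lifts to an $\mathcal{N}$-run by annotating each step with the set of counters frozen so far, and conversely any $\mathcal{N}$-run that reaches a target configuration with zero-valued frozen counters projects back to a legitimate $\BCounters{\CS}$-run reaching the same configuration in $\CS$. I expect no real obstacle in this simulation itself; the heavyweight step is the appeal to decidability of VASS reachability, which simultaneously accounts for the non-elementary complexity announced in the statement.
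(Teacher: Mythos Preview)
Your proposal is correct and follows essentially the same approach as the paper's proof sketch: record in the control state the set of counters that have been tested for zero (your frozen set $F$), defer the actual zero checks to the final configuration, take a product with an automaton for $L$, and handle the semi-linear target by a standard decrement gadget reducing to plain VASS reachability. Your write-up is more detailed than the paper's sketch, but the underlying idea and the reduction target are the same.
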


\begin{proof}[Proof sketch]
	Reachability in presence of a semi-linear target set and
	restricted zero tests
	straightforwardly reduces to configuration-reachability
	in counter machines without zero tests (i.e., VASS and Petri nets).
	The latter is decidable \cite{Mayr84}, though inherently
	non-elementary \cite{CzerwinskiLLLM19}.
	First, zero tests are postponed to the very end of an execution
	and, to this aim, stored in the control-state.
	Second, to check whether a counter valuation is contained in $V$,
	we can branch, whenever we are in the given control-state $q$,
	into a new component that decrements counters accordingly and
	eventually checks whether they are all zero.
\end{proof}

\section{The Input-Bounded Rational-Reachability Problem}\label{sec:bounded-reachability}

It is very well known that the following reachability problem is undecidable:
Given a FIFO machine
$\fifo = (\fifostates, \Ch{\nch}, \alphabet, \fifotransition, \init)$,
a configuration $(q,\contents) \in \tscomp{\tsstates}{\fifo}$,
and a regular language $L \subseteq \fifoAlpha{\fifo}^\ast$, do we have
$(q,\contents) \in \reachsetL{\fifo}{L}$?
Of course, the problem is already undecidable when we impose $L = \fifoAlpha{\fifo}^\ast$.
Motivated by this negative result, we are looking for language classes
$\mathfrak{C}$ that render the problem decidable under the
restriction that $L \in \mathfrak{C}$.

We say that a FIFO machine $\fifo = (\fifostates, \Ch{\nch}, \alphabet, \fifotransition, \init)$ has a \emph{bounded reachability set} if there is a tuple $(L_\ch)_{\ch \in \Ch{\nhc}}$ of regular bounded languages $L_\ch \subseteq \Sigma_\ch^\ast$ such that, for all $(q, \contents) \in \reachset{\fifo}$, we have $\contents \in \prod_{\ch \in \Ch{\nch}} L_\ch$. We observe that restricting the reachability set to be bounded is not sufficient to obtain a decidable reachability problem. We show this by simulating any two counter Minsky machine by a FIFO machine with fixed languages $L_\ch$.

\begin{thm}
	\label{thm:reachbound}
	The reachability problem is undecidable for FIFO machines with a (given) bounded reachability set.
\end{thm}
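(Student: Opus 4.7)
The plan is to reduce from the halting problem for two-counter Minsky machines, adapting a classical FIFO encoding so that channel contents always stay within a fixed bounded language. Given a Minsky machine $M$ with counters $n_1, n_2$, I would construct a FIFO machine $\fifo$ with two channels $c_1, c_2$, alphabets $\Sigma_{c_i} = \{a_i, b_i, e_i\}$, and target bounded languages $L_{c_i} = a_i^\ast b_i^\ast e_i^\ast$. The invariant maintained between Minsky instructions is that exactly one channel (the \emph{active} one, say $c_i$) carries the word $a_i^{n_1} b_i^{n_2} e_i$ while the other is empty, and the active channel alternates after each simulated instruction.

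A single Minsky instruction is simulated by a \emph{transfer phase} that reads the active channel $c_i$ letter by letter and rewrites the content onto the inactive one $c_j$, with $j = 3-i$, each letter $a_i, b_i, e_i$ being echoed as its $c_j$-counterpart $a_j, b_j, e_j$, up to a local modification. For instance, an increment of $n_1$ is simulated by first sending $a_j$ to $c_j$ and then echoing every received letter until the terminator is forwarded; an increment of $n_2$ echoes $a$'s verbatim and, upon receiving the first $b_i$ or $e_i$, inserts an extra $b_j$ before continuing; a decrement of $n_k$ consumes one occurrence of the relevant letter from $c_i$ without echoing, and fails to fire if the head of $c_i$ is not that letter; a zero test of $n_1$ is realised by requiring that the first receive of the transfer be $b_i$ or $e_i$, so that if $n_1 > 0$ the head is $a_i$, no transition is enabled, and the simulation deadlocks---precisely the desired behaviour, since such runs cannot reach the halting state. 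A conditional ``if $n_k > 0$ decrement and goto $\ell_1$, else goto $\ell_2$'' is then implemented by offering the decrement and zero-test branches simultaneously out of the current state.

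Throughout each transfer, the inactive channel holds a prefix of the new template and the active channel holds the corresponding suffix of the old one, so every reachable configuration lies in $L_{c_1} \times L_{c_2}$ component-wise; hence $\fifo$ has a bounded reachability set in the sense of the statement. Starting the simulation by sending $e_1$ to $c_1$ to represent $(n_1,n_2) = (0,0)$, and adding a cleanup phase after halting that drains the active channel letter by letter without echoing, we obtain that $\fifo$ reaches the specific target configuration $(q_{\mathrm{halted}}, (\varepsilon, \varepsilon))$ iff $M$ halts; undecidability of Minsky halting then yields the claim.

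The main obstacle will be arranging the control-state layout of each transfer phase so that no ``cheating'' run can reach the halting state without faithfully simulating $M$: only the appropriate receive action must be enabled in each state, so that any deviation---stopping the transfer early, echoing the wrong letter, or moving to the next-instruction phase before the terminator has been consumed---irrecoverably deadlocks the machine. This is a routine but delicate bookkeeping exercise, and the rest of the argument is essentially forced by the invariant.
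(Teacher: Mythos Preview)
Your reduction is correct and establishes the theorem, but it follows a different route from the paper's. The paper also reduces from two-counter Minsky machines, yet it uses a \emph{single} channel and a rotation trick: the configuration with counters $x_1,x_2$ is encoded as the channel word $\$a^{x_1}\#b^{x_2}\&$, and each instruction is simulated by cycling the whole word through the channel (reading the markers, rewriting the block with the appropriate local modification). The channel content therefore always lies in the letter-bounded language $\$^{*}a^{*}\#^{*}b^{*}\&^{*}\$^{*}a^{*}\#^{*}b^{*}\&^{*}$, which is bounded but \emph{not} distinct-letter.

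Your two-channel transfer encoding trades one feature for another. On the plus side, your bounded languages $a_i^{*}b_i^{*}e_i^{*}$ are distinct-letter, so the construction is arguably cleaner and the invariant (active channel carries the template, inactive one is empty, and during a transfer one holds a suffix while the other holds a prefix) is very transparent. On the minus side, you need two channels, whereas the paper's proof shows the stronger fact that undecidability already holds for FIFO machines with a \emph{single} channel and a bounded reachability set; this matters later in the paper when single-channel machines are singled out. A small point to make explicit in a full write-up: since the active channel alternates, each Minsky control state is duplicated according to parity, and your final cleanup/halting state must account for both parities before reaching $(q_{\mathrm{halted}},(\varepsilon,\varepsilon))$.
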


\begin{proof} \label{app:reachbound-proof}
	
	We prove this by 
	simulating a (two) counter Minsky machine by a FIFO machine with a bounded reachability set. Intuitively, if the two counters have values $x_1$ and $x_2$ at a state $q$, the FIFO machine is at state $q$ with channel contents $\$a^{x_1}\#b^{x_2}\&$.  
	
	Consider a Minsky machine $\CS = (Q, \Cnt,T, \init)$, where $Q$ is the set of states, $\Cnt = \{x_1, x_2\}$ is the set of counters, $\init$ the initial state, and  $T = \{\delta_1, \ldots, \delta_n\}$ is the set of transition rules, which can be of two types from any state $\hat{q} \in Q$:

	\begin{itemize}
		\item $\delta_i: x_j:= x_j +1;  \texttt{ goto } q;$
		\item $\delta_i: \texttt{if } x_j>0  \texttt{ then } (x_j:=x_j-1; \texttt{ goto } q)  \texttt{ else}  \texttt{ goto } q';$
	\end{itemize}
	for $j = \{1,2\}$ and $q,q' \in Q$.	
	
	\smallskip
	
	We can construct a FIFO machine $\fifo = (Q', \Ch{\nch}, \Sigma, \fifotransition, q_0)$ with a bounded reachability set as follows: $Q' = Q \uplus Q''$, where $Q''$ is a set of intermediate states; $\Sigma = \{a, b, \#, \$, \&\}$. There is a single channel, hence $|\Ch{\nch}|=1$. We consider the language $L = \$^*a^*\#^*b^*\&^*\$^*a^*\#^*b^*\&^*$, and we start with queue contents as $\$\#\&$, where the letters $\#, \$, \&$ are used as markers during the test for zero. The number of occurrences of the letter $a$ (resp.\ $b$) corresponds to the counter valuations for $x_1$ (resp.\ $x_2$) in the configurations of the Minsky machine, and we rotate the tape contents in such a way that the contents always belong to $L$. For every state $q'' \in Q$ and every transition from it with the rule $\delta_i: x_1:= x_1 +1;  \texttt{ goto } q;$ for some $q \in Q$, we create the following transition sequence. We ensure that we read the markers, and add an additional letter $a$ to the channel. We see that at every intermediate configuration, the channel contents still belongs to $L$.
	
	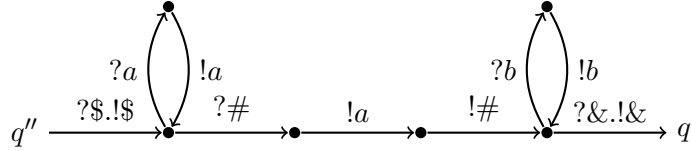
\begin{figure}[h]
		\centering
		\begin{tikzpicture}[->, node distance=1.5cm, auto, thick]
			\node[] (p1) {$q''$};
			\node[] (p3) [right= of p1, circle, fill, inner sep = 1.5pt] {};
			\node[] (p4) [above= of p3, circle, fill, inner sep = 1.5pt] {};
			\node[] (p5) [right= of p3, circle, fill, inner sep = 1.5pt] {};
			\node[] (p6) [right= of p5, circle, fill, inner sep = 1.5pt] {};
			\node[] (p7) [right= of p6, circle, fill, inner sep = 1.5pt] {};
			\node[] (p8) [above= of p7, circle, fill, inner sep = 1.5pt] {};
			\node[] (p9) [right= of p7] {$q$};
			\path[->]
			(p1) edge node[] {$?\$.!\$$} (p3)
	
			(p3) edge [bend left] node[] {$?a$} (p4)
			(p4) edge [bend left] node[] {$!a$} (p3)
			(p3) edge node[] {$?\#$} (p5)
			(p5) edge node[] {$!a$} (p6)
			(p6) edge node[] {$!\#$} (p7)
			(p7) edge [bend left] node[] {$?b$} (p8)
			(p8) edge [bend left] node[] {$!b$} (p7)
			(p7) edge node[] {$?\&.!\&$} (p9)
			;
			
		\end{tikzpicture}
		\caption{Incrementing $x_1$} \label{eq-minsky}
	\end{figure}
	
	Likewise, a transition of the form
	\begin{center}
		$\delta_i: \texttt{if } x_1>0  \texttt{ then } (x_1:=x_1-1; \texttt{ goto } q)$  $ \texttt{ else}  \texttt{ goto } q';$
	\end{center}
	can be constructed as show in Figure~\ref{eq-minsky-2}. Here, in the first case ($x_1 > 0$) we ensure that at least one letter $a$ is read from the channel contents, and rewrite the rest of the contents as is. If $x_1=0$, then we read only the end markers (and letters $b$, if any) and rewrite them once again.
	
	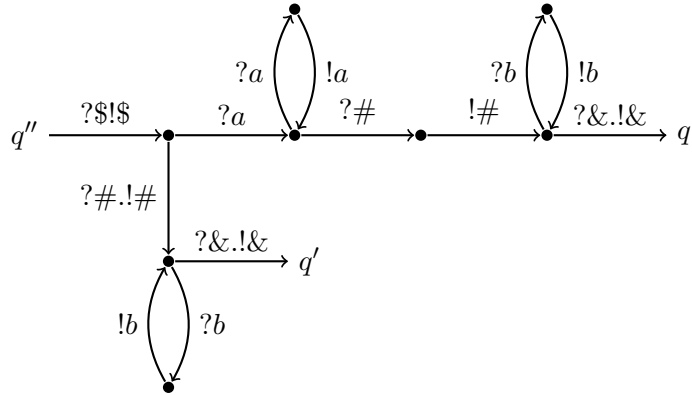
\begin{figure}[!h]
		\centering
		\begin{tikzpicture}[->, node distance=1.5cm, auto, thick]
			\node[] (p1) {$q''$};
		
			\node[] (p6) [right= of p1, circle, fill, inner sep = 1.5pt] {};
			\node[] (p3) [right= of p6, circle, fill, inner sep = 1.5pt] {};
			\node[] (p4) [above= of p3, circle, fill, inner sep = 1.5pt] {};
			\node[] (p5) [right= of p3, circle, fill, inner sep = 1.5pt] {};
		
			\node[] (p7) [right= of p5, circle, fill, inner sep = 1.5pt] {};
			\node[] (p8) [above= of p7, circle, fill, inner sep = 1.5pt] {};
			\node[] (p9) [right= of p7] {$q$};
			\node[] (p21) [below= of p6, circle, fill, inner sep = 1.5pt]{};
			\node[] (p23) [below= of p21, circle, fill, inner sep = 1.5pt]{};
			\node[] (p22) [right= of p21] {$q'$};
			\path[->]
	
			(p1) edge node[] {$?\$!\$$} (p6)
		
			(p6) edge node[] {$?a$} (p3)
		
			(p3) edge [bend left] node[] {$?a$} (p4)
			(p4) edge [bend left] node[] {$!a$} (p3)
			(p3) edge node[] {$?\#$} (p5)
		
			(p5) edge node[] {$!\#$} (p7)
			(p7) edge [bend left] node[] {$?b$} (p8)
			(p8) edge [bend left] node[] {$!b$} (p7)
			(p7) edge node[] {$?\&.!\&$} (p9)
			(p6) edge node[swap] {$?\#.!\#$} (p21)
			(p21) edge [bend left] node[] {$?b$} (p23)
			(p23) edge [bend left] node[] {$!b$} (p21)
			(p21) edge node[] {$?\&.!\&$} (p22)
			;
			
		\end{tikzpicture}
		\caption{Decrementing $x_1$} \label{eq-minsky-2}
	\end{figure}
	
	Similar constructions can be made for all transitions, and the queue contents are always contained in $\$^*a^*\#^*b^*\&^*\$^*a^*\#^*b^*\&^*$. Hence, the reachability set of the FIFO machine is letter-bounded (but it is not distinct-letter).
	\end{proof}

\begin{rem}
	We see, in the previous proof, that the input language of the above machine is not bounded: If we have a transition from $q$ in the original machine of the kind $\delta_i: x_j:= x_j +1;  \texttt{ goto } q;$ (a loop), the input language of the machine would be $(\$^*a^*\#b^*\&)^*$ for this transition, which is not bounded. Furthermore, the machine is not flat either, since there can be control states that are in more than one elementary loop.
\end{rem}

\newcommand{\actionword}{\alpha}

We therefore consider a different restriction to obtain decidability.
For a given FIFO machine $\fifo = (\fifostates, \Ch{\nch}, \alphabet, \fifotransition, \init)$, we are interested in $\reachsetL{\fifo}{L}$ where
$L \subseteq \fifoAlpha{\fifo}^\ast$ is \emph{input-bounded} in the following sense:
For every channel $\ch$, the sequence of messages that are sent
through channel $\ch$ is from a given regular bounded language $L_\ch \subseteq \Sigma_\ch^\ast$.

Let us be more formal.
For $\ch \in \Ch{\nch}$, we let $\projsendmap{\ch}: \fifoAlpha{\fifo}^\ast \to \Sigma_\ch^\ast$ be
the homomorphism
defined by
$\projsend{\sendaction{a}{\ch}}{\ch} = a$ for all $a \in \Sigma_\ch$, and
$\projsend{\action}{\ch} = \varepsilon$ if $\action \in \fifoAlpha{\fifo}$ is not of the form
$\sendaction{a}{\ch}$ for some $a \in \Sigma_\ch$. 
We define $\projrecmap{\ch}: \fifoAlpha{\fifo}^\ast \to \Sigma_\ch^\ast$ accordingly.

With this,
given a tuple $\boundedL = (L_\ch)_{\ch \in \Ch{\nhc}}$ of bounded languages
$L_\ch \subseteq \Sigma_\ch^\ast$,
we set
$\sendL = \{\runFIFO \in \fifoAlpha{\fifo}^\ast \mid
\projsend{\runFIFO}{\ch} \in L_\ch$
for all $\ch \in \Ch{\nch}\}$ 
and
$\boundedL_? = \{\runFIFO \in \fifoAlpha{\fifo}^\ast \mid
\projrec{\runFIFO}{\ch} \in L_\ch$
for all $\ch \in \Ch{\nch}\}$.
We observe that, if all $L_\ch$ are regular, then so are $\sendL$ and $\recL$.

\begin{defi}\label{def:singleconf}
	The \emph{input-bounded (IB) reachability problem} asks whether a given configuration
	$(q,\contents)$ is reachable
	along a sequence of actions from $\sendL$, i.e.,
	whether $(q,\contents) \in \reachsetL{\fifo}{\sendL}$.
\end{defi}

Note that, if $(\init,\initcontents) \xrightarrow{ \runFIFO }_\fifo (q, \contents)$
and $\runFIFO \in \sendL$, then we also have $\runFIFO \in \Pref(\recL)$
due to the FIFO policy. Thus,
$\reachsetL{\fifo}{\sendL} = \reachsetL{\fifo}{\sendL \cap \Pref(\recL)}$
so that we can restrict to action sequences from $\sendL \cap \Pref(\recL)$.
We will call $\sendL \cap \Pref(\recL)$ the set of \emph{valid} words.

\begin{exa}\label{ex:boundable}
	Let us come back to the protocol CDP $\fifo$ from Example~\ref{conn-deconn} and Figure~\ref{fig:input-fifo-reach-word},
	which is neither monogeneous nor linear nor flat.
	Since the ``input-languages'' of the two channels
	(i.e.\ the languages of words that
	record the messages entering a channel)
	contain $\{a,ab\}^*$ 
	and $\msgc^\ast$, resp., and since $\{a,ab\}^*$ is
	not a bounded language,
	we have $\Traces{\fifo} \not\subseteq \sendL$ for
	every pair of bounded languages $\boundedL$.
	In other words, $\fifo$ is not input-bounded.
	However, when we look at the reachability set obtained by considering the tuple of bounded languages $\boundedL = (L_{\ch_1}, L_{\ch_2})$ where $L_{\ch_1} = (ab)^*(a+\varepsilon)(ab)^*$ is a bounded language over $(ab,a,ab)$,
	and $L_{\ch_2} = \msgc^\ast$ is a bounded language over $(e)$, we still obtain the entire reachability set.
	That is, we have $\reachset{\fifo} = \reachsetL{\fifo}{\sendL}$.
	Hence, even though the input-languages of the system are not all bounded,
	we can still compute the reachability set by restricting our exploration
	to a tuple of (regular) bounded languages $\boundedL$.
	\exend
\end{exa}

Actually, instead of reachability of a single configuration as stated
in Definition~\ref{def:singleconf}, we study
a more general problem, called the \emph{input-bounded rational-reachability problem}.
It asks whether
a configuration $(q,\contents)$ is reachable for some channel contents $\contents$
from a given \emph{rational} relation. So let us define rational relations.

\subsection*{Rational and Recognizable Relations.}

Consider a relation $\Rel \subseteq \ChContents$.
We say that $\Rel$ is \emph{rational}
if there is a regular word language $R \subseteq \Theta^\ast$ over the alphabet
$\Theta = \prod_{\ch \in \Ch{\nch}}  (\Sigma_\ch \cup \{\epsilon\})$ such that
$\Rel = \{{(\aletter_\ch^1 \cdot \cdots \cdot \aletter_\ch^n)}_{\ch \in \Ch{\nch}}
\mid \aletter^1 \cdots \aletter^n \in R$ with $n \in \N$ and $\aletter^i = {(\aletter^i_\ch)}_{\ch \in \Ch{\nch}} \in \Theta$ for $i \in \{1,\ldots,n\}\}$. Here,
$\aletter_\ch^1 \cdot \cdots \cdot \aletter_\ch^n \in \Sigma_\ch^\ast$ is the concatenation of
all $\aletter_\ch^i \in \Sigma_\ch \cup \{\epsilon\}$ while ignoring the neutral element $\epsilon$. For example, in the presence of two channels, $\Rel = \{(a^m,b^n) \mid m \ge n\}$ is a rational relation, witnessed by
$R = ((a,b) + (a,\epsilon))^\ast$.
In the following, we will always assume that a rational relation is given in terms
of a finite automaton for the underlying regular language $R$.

A relation $\Rel \subseteq \ChContents$ is called \emph{recognizable}
if it is the finite union of relations of the form
$\prod_{\ch \in \Ch{\nch}} R_\ch$ where all $R_\ch \subseteq \Sigma_\ch^\ast$
are regular languages. Note that every recognizable relation is rational while the converse is, in general, false.

We define the \emph{Parikh image} of a relation $\Rel \subseteq \ChContents$
as $\Parikhimg{\Rel} = \{(\pi_a)_{a \in \Sigma} \in \N^\Sigma \mid \exists\contents
= (\contentsp{\ch})_{\ch \in \Ch{\nch}} \in \Rel:
\pi_a = |\contents_\ch|_a$ for all $\ch \in \Ch{\nch}$ and $a \in \Sigma_\ch\}$.
It is well known that, if $\Rel$ is rational, then $\Parikhimg{\Rel}$ is semi-linear.

For more background on rational relations and their subclasses, we refer to \cite{Berstel79,Choffrut06}.

\subsection*{The \IBounded Rational-Reachability Problem.}

We are now prepared to define the input-bounded (IB) rational-reachability problem and to state its decidability:

\begin{defi}
	The \emph{\IBounded rational-reachability problem} is defined as follows:
	Given a FIFO machine $\fifo = (\fifostates, \Ch{\nch}, \alphabet, \fifotransition, \init)$,
	a tuple $\boundedL = (L_\ch)_{\ch \in \Ch{\nhc}}$ of non-empty regular bounded languages
	$L_\ch \subseteq \Sigma_\ch^\ast$ (each given in terms of a finite automaton),
	a control state $q \in Q$, and a rational relation $\Rel \subseteq \ChContents$.
	Do we have $(q,\contents) \in \reachsetL{\fifo}{\sendL}$ for some $\contents \in \Rel$?
\end{defi}

\begin{thm}\label{thm:general-I-bounded-reach}
	\IBounded rational-reachability is decidable for FIFO machines.
\end{thm}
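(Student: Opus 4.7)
The plan is to reduce the IB rational-reachability problem on $\fifo$ to the decidable counter problem of Theorem~\ref{thm:reach-counters}. Since, for each channel $\ch$, the sequence sent through $\ch$ belongs to a bounded language $L_\ch \subseteq w_{\ch,1}^\ast \cdots w_{\ch,n_\ch}^\ast$, the contents of $\ch$ at any point of an input-bounded run is completely determined by three pieces of data: a \emph{send pointer} $(i_s,j_s)$ locating the next letter to be emitted inside some $w_{\ch,i_s}$, a \emph{receive pointer} $(i_r,j_r)$ similarly locating the next letter to be consumed inside $w_{\ch,i_r}$, and, for each block index $i$, the number of complete copies of $w_{\ch,i}$ that still sit between the two pointers. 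The first two pieces lie in a finite domain and will be stored in the control state of the simulating counter machine; the third will be represented by dedicated counters $\cnt{\ch}{i}$.

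A preliminary step is to put $\fifo$ into a normal form that removes the ambiguities arising from a letter appearing in several $w_{\ch,i}$'s and from a sent word admitting several decompositions into $w_{\ch,1}^{a_1}\cdots w_{\ch,n_\ch}^{a_{n_\ch}}$. The idea is to take a product of $\fifo$ with a DFA that, on reading a send action, deterministically advances a canonical send pointer and accepts exactly the words of $L_\ch$. After this transformation, every input-bounded run has a unique block-and-offset view of its sends, which is exactly what the counter construction needs.

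The counter machine $\CS$ enlarges the control state of the normalised $\fifo$ with the tuples $(i_s^\ch,j_s^\ch)$ and $(i_r^\ch,j_r^\ch)$ for every $\ch$. A send action $\sendaction{a}{\ch}$ is allowed if the $j_s^\ch$-th letter of $w_{\ch,i_s^\ch}$ equals $a$; the send pointer then advances and, when a copy is completed, either starts another copy of $w_{\ch,i_s^\ch}$ or moves on to $w_{\ch,i_s^\ch+1}$, in either case incrementing $\cnt{\ch}{i_s^\ch}$. A receive action $\recaction{a}{\ch}$ is simulated symmetrically on the receive side and decrements $\cnt{\ch}{i_r^\ch}$ whenever a full copy has been consumed; moving the receive pointer past block $i_r^\ch$ requires the zero test $\cnt{\ch}{i_r^\ch} = 0$. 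Crucially, because the FIFO discipline forces the send pointer to be ahead of the receive pointer, by the time this zero test fires the send pointer has already left block $i_r^\ch$, so $\cnt{\ch}{i_r^\ch}$ is never touched again; every accepting execution of $\CS$ therefore lies in $\BCounters{\CS}$.

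To finish, one must (i) verify, by induction on run length, a bisimulation between input-bounded runs of $\fifo$ and executions of $\CS$ exhibiting the promised correspondence between channel contents and counter valuations, and (ii) encode the rational target relation $\Rel$. For (ii), for each of the finitely many pointer configurations that can appear in the control state of $\CS$, the channel contents is a tuple of words in bounded languages whose letter counts depend linearly on the counters $\cnt{\ch}{i}$; the set of counter valuations that yield a tuple in $\Rel$ is then effectively semi-linear by standard Parikh arguments on rational relations and bounded languages. Applying Theorem~\ref{thm:reach-counters} to $\CS$, the control state encoding $q$, and this semi-linear target $V$ (for each pointer configuration in turn) yields the decision procedure. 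The main obstacle I expect is the technical bookkeeping around the normal form together with the bisimulation proof, in particular showing that the zero tests fire at precisely the right moments relative to the send-before-receive FIFO constraint so that the two simulation directions match exactly.
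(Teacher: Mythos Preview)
Your outline is correct and matches the paper's strategy: normalise, simulate by a counter machine with one counter per block $w_{\ch,i}$, use zero tests to enforce that reception from block $j$ happens only after blocks $i<j$ are drained, and convert the rational target into a semi-linear counter constraint via Parikh images before invoking Theorem~\ref{thm:reach-counters}.

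The one substantive difference is the granularity of the counters. The paper first renames letters so that every letter occurs in exactly one $w_{\ch,i}$ (the \emph{distinct-letter} normal form); the counter $\cnt{\ch}{i}$ then simply counts how many block-$i$ letters are currently in channel $\ch$, and is incremented/decremented on \emph{every} send/receive. Your encoding instead counts \emph{completed} copies of $w_{\ch,i}$ and stores the intra-word offsets $(i_s,j_s),(i_r,j_r)$ in the control state. That also works, but be careful with the timing you describe: if the decrement of $\cnt{\ch}{i_r}$ happens only ``when a full copy has been consumed'', the counter machine can begin consuming a phantom partial copy (advance $j_r$ from $0$ with $\cnt{\ch}{i_r}=0$ and the send pointer not ahead) and only fail at the end-of-copy decrement; the intermediate states have no FIFO counterpart, so the step-by-step bisimulation you announce would not hold. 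The fix is either to decrement at the \emph{start} of a copy (so an empty block blocks immediately), or to adopt the paper's per-letter accounting, which makes the correspondence exact at every step. The paper's ``last sent letter'' tuple $\veca$ plays precisely the role of your send pointer when reconstructing channel contents from counter values.
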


The remainder of this section is devoted to the proof of Theorem~\ref{thm:general-I-bounded-reach}.

\smallskip

Let $\fifo = (\fifostates, \Ch{\nch}, \alphabet, \fifotransition, \init)$ and
let $\boundedL = (L_\ch)_{\ch \in \Ch{\nhc}}$ be a tuple of
non-empty regular bounded languages
$L_\ch \subseteq \Sigma_\ch^\ast$ over $(w_{\ch,1}, \ldots, w_{\ch,n_\ch})$.
We proceed by reduction to counter machines. The rough idea is to represent
the contents of channel $\ch$ in terms of several counters, one for every
component $w_{\ch,i}$. To have a faithful simulation, we rely on a normal form
of $\fifo$ and its bounded languages,
which can be achieved at the expense of an exponential blow-up
of the FIFO machine.

\newcommand{\cAlph}[1]{\Sigma_{#1}}
\newcommand{\ciAlph}[2]{\Sigma_{#1,#2}}

\newcommand{\Valid}{\textscrb{V}}

\begin{defi}\label{def:normal-form}
	We say that $\fifo$ and $\boundedL$ are in \emph{normal form} if the following hold:
	\begin{enumerate}
		\item For all $\ch \in \Ch{\nch}$, $\Sigma_\ch \subseteq \Alphw(L_c)$ and $L_\ch$ is distinct-letter.
		
		\item
		We have $\Traces{(\fifostates,\fifoAlpha{\fifo},\fifotransition,\init)} \subseteq \Pref(\Valid)$ where $\Valid = \sendL \cap \Pref(\recL)$.
		Note that $(\fifostates,\fifoAlpha{\fifo},\fifotransition,\init)$ is the finite transition system induced by
		the control graph of $\fifo$.
	\end{enumerate}
\end{defi}

Given a FIFO machine $\hat \fifo = (\hat\fifostates, \Ch{\nch}, \hat\alphabet, \hat\fifotransition, \hat q_0)$
and the tuple $\hat{\boundedL} = (\hat L_\ch)_{\ch \in \Ch{\nhc}}$ of non-empty regular bounded languages
$\hat L_\ch \subseteq \hat\Sigma_\ch^\ast$,
we now construct $\fifo = (\fifostates, \Ch{\nch}, \alphabet, \fifotransition, \init)$
and $\boundedL = {(L_\ch)}_{\ch \in \Ch{\nhc}}$
in normal form such that a reachability query in the former can be transformed into a reachability
query in the latter (made precise in Lemma~\ref{lem:normal-form} below).

\subsection*{Distinct-Letter Property.}

Consider the bounded language $\hat L_\ch$
over $(\hat w_{\ch,1}, \ldots, \hat w_{\ch,n_\ch})$.
For $i \in \{1,\ldots,n_\ch\}$, let $m_i = |\hat w_{\ch,1}| + \ldots + |\hat w_{\ch,i}|$
be the number of letters in the first $i$ words.
Moreover, $m = m_{n_\ch}$.
Let $\Sigma_\ch$ denote the alphabet $\{a_1^\ch,\ldots,a_{m}^\ch\}$.
It contains the ``distinct'' letters for the bounded language $L_\ch$
over $(w_{\ch,1}, \ldots, w_{\ch,n_\ch})$, where we let
$w_{\ch,1} = a_1^\ch \ldots a_{m_1}^\ch$ and
$w_{\ch,i} = a_{m_{i-1}+1}^\ch \ldots a_{m_i}^\ch$ for
$i \ge 2$.
In other words, the letters in $(w_{\ch,1}, \ldots, w_{\ch,n_\ch})$ are numbered consecutively.
In order to obtain the language $L_\ch$, we first consider the homomorphism
$h_\ch: \Sigma_\ch^\ast \to \hat\Sigma_\ch^\ast$ where
$h_\ch(a_i^\ch)$ is the $i$-th letter in the word $\hat w_{\ch,1} \ldots \hat w_{\ch,n_\ch}$.
We obtain $L_\ch$ as
$h_\ch^{-1}(\hat L_\ch) \cap (w_{\ch,1})^\ast \ldots (w_{\ch,n_\ch})^\ast$,
hence preserving regularity and boundedness.
We then remove those words from $(w_{\ch,1}, \ldots, w_{\ch,n_\ch})$
(and their letters from $\Sigma_\ch$) whose
letters do not occur in $L_\ch$.
We have $\Sigma_\ch \subseteq \Alphw(L_c)$.

\newcommand{\sname}[3]{$\begin{array}{c}#1\\[-0.8ex]{\scalebox{0.7}{#2,#3}}\end{array}$}
\newcommand{\Aname}{\textup{M}}
\newcommand{\Bname}{\textup{L}}
\newcommand{\Cname}{\textup{R}}
\newcommand{\decy}{$\begin{array}{c}\dec{y}\\[-0.5ex]x=0\end{array}$}

\begin{exa}\label{ex:nformone}
	For example, suppose we have one channel $\ch$ and $\hat L_\ch = (ab)^\ast bb^\ast$
	over $\bpair{ab}{b}$. We determine the language $L_\ch$
	over $\bpair{a_1 a_2}{a_3}$ (omitting the superscript $\ch$ in the letters).
	The homomorphism $h_\ch: \{a_1,a_2,a_3\}^\ast \to \{a,b\}^\ast$
	is given by $h_\ch(a_1) = a$ and $h_\ch(a_2) = h_\ch(a_3) = b$.
	We have $h_\ch^{-1}(\hat L_\ch) =
	(a_1(a_2 + a_3))^\ast(a_2 + a_3)(a_2 + a_3)^\ast$, which we intersect with
	$(a_1a_2)^\ast a_3^\ast$. We thus get the regular bounded language $L_\ch =
	(a_1a_2)^\ast a_3a_3^\ast$ over $\bpair{a_1a_2}{a_3}$.
	All letters from $\{a_1,a_2,a_3\}$ occur in $L_\ch$ so that we are done.
\end{exa}

\subsection*{Trace Property.}

\newcommand{\newa}{\textscrb{e}}

In the next step, we build the FIFO machine $\fifo = (\fifostates, \Ch{\nch}, \alphabet, \fifotransition, \init)$ such that
$\Traces{(\fifostates,\fifoAlpha{\fifo},\fifotransition,\init)} \subseteq \Pref(\Valid)$ with
$\Valid = \sendL \mathrel{\cap} \Pref(\recL)$.
First, to take care of the homomorphisms $h_\ch$, we
define the transition relation $h^{-1}({\hat\fifotransition})
= \{\sendtrans{\newa}{\ch}{q}{q'} \mid \sendtrans{a}{\ch}{q}{q'} \in \hat\fifotransition$ and $\newa \in h_\ch^{-1}(a)\}
\mathrel{\cup}
\{\rectrans{\newa}{\ch}{q}{q'} \mid \rectrans{a}{\ch}{q}{q'} \in \hat\fifotransition$ and $\newa \in h_\ch^{-1}(a)\}$.
Thus, the set of actions of $\fifo$ will be $\fifoAlpha{\fifo} = \{\sendaction{\newa}{\ch} \mid \ch \in \Ch{\nch}$ and $\newa \in \alphabet_\ch\} \cup \{\recaction{\newa}{\ch} \mid \ch \in \Ch{\nch}$ and $\newa \in \alphabet_\ch\}$.

To continue our above example,
a transition $\sendtrans{b}{\ch}{q}{q'}$
would be replaced with the two transitions
$\sendtrans{a_2}{\ch}{q}{q'}$ and $\sendtrans{a_3}{\ch}{q}{q'}$, and
similarly for $\rectrans{b}{\ch}{q}{q'}$.

To guarantee trace inclusion in $\Pref(\Valid)$,
we will consider a deterministic (not necessarily complete)
finite automaton $\A = (\fifostates_\A,\fifoAlpha{\fifo},\fifotransition_\A,q^0_\A,F_\A)$,
with set of final states $F_\A \subseteq Q_\A$, whose language is $L(\A) = \Valid$
and where, from every state, a final state is reachable
in the finite graph $(Q_\A,\fifotransition_\A)$.
With this, we define $\fifo$ as the product of the FIFO machine
$h^{-1}(\hat\fifo) = (\hat\fifostates, \Ch{\nch}, \alphabet, h^{-1}(\hat\fifotransition), \hat q_0)$
and $\A$ in the expected manner. In particular, the set of control states of $\fifo$ is $\hat Q \times Q_\A$, and its initial state
is the pair $(\hat q_0,q^0_\A)$.

\begin{exa}\label{ex:normal-form}
	Figure~\ref{fig:counter-from-fifo} illustrates the result of the normalization
	procedure for a FIFO machine $\hat M$ with one single channel $\ch$ (which is therefore
	omitted) and its bounded language
	$\hat L_\ch = (ab)^\ast bb^\ast$ over $\bpair{ab}{b}$.
	Recall from Example~\ref{ex:nformone} that the corresponding homomorphism $h_\ch$ maps $a_1$ to $a$ and both $a_2$
	and $a_3$ to $b$, and that we obtain $L_\ch = (a_1a_2)^\ast a_3a_3^\ast$. Moreover, $\fifo$ is
	the product of $h^{-1}(\hat\fifo)$ (depicted in the top center)
	and a finite automaton $\A$ for $\Valid = \sendL \cap \Pref(\recL)$
	(obtained as the shuffle of the two finite automata on the top right).
	The state names in $\fifo$ reflect the states of $\hat\fifo$ and $\A$ they originate
	from. We depict only accessible states of $\fifo$ from which we can still
	complete the word read so far to a word in $\Valid$.
	For example, $(q_1,\Aname,\Bname)$ and $(q_1,\Bname,\Cname)$
	would no longer allow us to reach the final state
	\Cname~of the $\sendL$-component.
	\exend
\end{exa}

\newcommand{\tinyplus}{{++}}
\newcommand{\tinyminus}{{--}}
\usetikzlibrary{positioning}

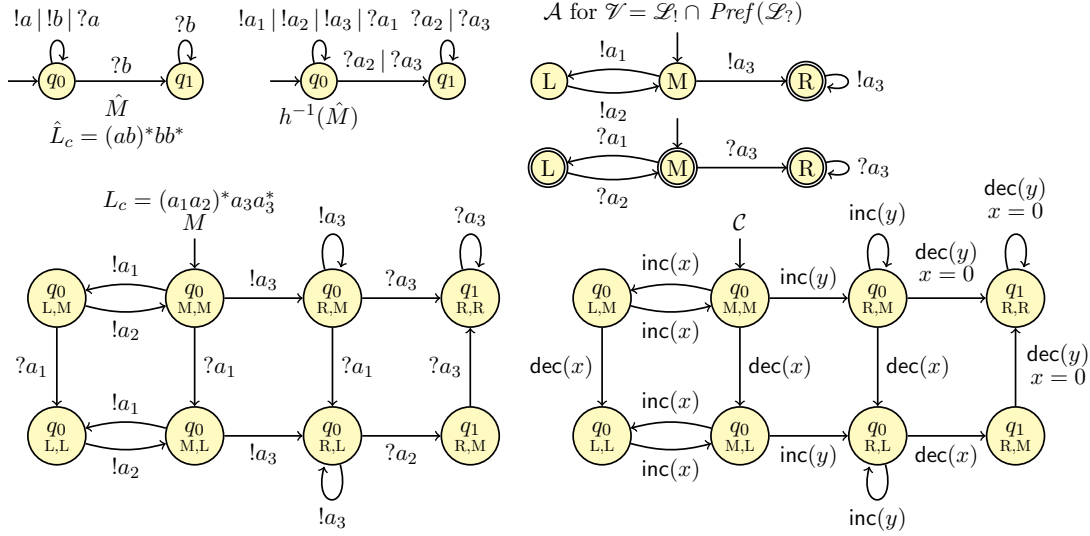
\begin{figure}[t]
	\hspace{-2em}
	\scalebox{0.80}{
		\begin{tikzpicture}[->, node distance=1.5cm, auto, thick, every state/.style={circle,inner sep=0pt, minimum size=0.6cm}]
			
			\node[initial left, initial text=, state, fill=yellow!30] (a0) {$q_0$};
			\node[state, fill=yellow!30] (a1) [right= of a0]{$q_1$};
			\node (b) at (1, -0.4) {$\hat\fifo$};
			\node (b) at (1, -0.9) {$\hat L_\ch = (ab)^\ast bb^\ast$};
			\node (b) at (2.2, -2) {$L_\ch = (a_1a_2)^\ast a_3a_3^\ast$};
			
			\path[->]
			(a0) edge [loop above] node[] {$!a\,|\,!b\,|\,?a$} (a0)
			(a0) edge node[] {$?b$} (a1)
			(a1) edge [loop above] node[] {$?b$} (a1)
			;
			
			\node[initial left, initial text=, state, fill=yellow!30] (q0) [right=1.6cm of a1] {$q_0$};
			\node[state, fill=yellow!30] (q1) [right= of q0]{$q_1$};
			\node[below=0.2cm, align=center] at (q0) { ~~~~~~~~~~~~~~~~~~~~$h^{-1}(\hat\fifo)$ };
			
			\node[initial above, initial text=, state, fill=yellow!30] [right=3.2cm of q1] (r0) {\Aname};
			\node[state, fill=yellow!30] (r0p) [left= of r0] {\Bname};
			\node[state, fill=yellow!30, accepting] (r1) [right= of r0] {\Cname};
			\node[above=0.8cm, align=center] at (r0) { $\A$ for $\Valid = \sendL \cap\, \Pref(\recL)$ };
			
			\node[initial above, initial text=, state, fill=yellow!30, accepting] [below=0.8cm of r0] (s0) {\Aname};
			\node[state, fill=yellow!30, accepting] (s0p) [left= of s0] {\Bname};
			\node[state, fill=yellow!30, accepting] (s1) [right= of s0] {\Cname};
			
			\path[->]
			(q0) edge [loop above] node[] {$!a_1\,|\,!a_2\,|\,!a_3\,|\,?a_1$~~~~~} (q0)
			(q0) edge node[] {$?a_2\,|\,?a_3$} (q1)
			(q1) edge [loop above] node[] {~~~$?a_2\,|\,?a_3$} (q1)
			;
			
			\path[->]
			(r0) edge[bend right=15, swap] node[] {$!a_1$} (r0p)
			(r0p) edge[bend right=15, swap] node[] {$!a_2$} (r0)
			(r0) edge node[] {$!a_3$} (r1)
			(r1) edge [loop right] node[] {$!a_3$} (r1)
			;
			
			\path[->]
			(s0) edge[bend right=15, swap] node[] {$?a_1$} (s0p)
			(s0p) edge[bend right=15, swap] node[] {$?a_2$} (s0)
			(s0) edge node[] {$?a_3$} (s1)
			(s1) edge [loop right] node[] {$?a_3$} (s1)
			;
			
			
			\tikzset{node distance=1.3cm, every state/.style={circle,inner sep=0pt, minimum width=0.3cm}}
			
			\node[state, fill=yellow!30, label=center:\sname{q_0}{\Bname}{\Aname}] (q1) [below=2.8cm of a0]{};
			\node[initial above, initial text=$\fifo$, state, fill=yellow!30, label=center:\sname{q_0}{\Aname}{\Aname}] (q2) [right= of q1]{};
			\node[state, fill=yellow!30, label=center:\sname{q_0}{\Cname}{\Aname}] (q3) [right= of q2]{};
			\node[state, fill=yellow!30, label=center:\sname{q_1}{\Cname}{\Cname}] (q4) [right= of q3]{};
			\node[state, fill=yellow!30, label=center:\sname{q_0}{\Bname}{\Bname}] (q5) [below= of q1]{};
			\node[state, fill=yellow!30, label=center:\sname{q_0}{\Aname}{\Bname}] (q6) [below= of q2]{};
			\node[state, fill=yellow!30, label=center:\sname{q_0}{\Cname}{\Bname}] (q7) [below= of q3]{};
			\node[state, fill=yellow!30, label=center:\sname{q_1}{\Cname}{\Aname}] (q8) [below= of q4]{};
			
			\path[->]
			(q2) edge[bend right=15, swap] node[] {$!a_1$} (q1)
			(q1) edge[bend right=15, swap] node[] {$!a_2$} (q2)
			(q6) edge[bend right=15, swap] node[] {$!a_1$} (q5)
			(q5) edge[bend right=15, swap] node[] {$!a_2$} (q6)
			(q1) edge node[swap] {$?a_1$} (q5)
			(q2) edge node[] {$!a_3$} (q3)
			(q6) edge node[swap] {$!a_3$} (q7)
			(q2) edge node[] {$?a_1$} (q6)
			(q3) edge node[] {$?a_1$} (q7)
			(q3) edge [loop above] node[] {$!a_3$} (q3)
			(q7) edge [loop below] node[] {$!a_3$} (q7)
			(q3) edge node[] {$?a_3$} (q4)
			(q7) edge node[swap] {$?a_2$} (q8)
			(q4) edge [loop above] node[] {$?a_3$} (q4)
			(q8) edge node[] {$?a_3$} (q4)
			;
			
			\node[state, fill=yellow!30, label=center:\sname{q_0}{\Bname}{\Aname}] (q1) [right=1.2cm of q4]{};
			\node[initial above, initial text=$\CS$, state, fill=yellow!30, label=center:\sname{q_0}{\Aname}{\Aname}] (q2) [right= of q1]{};
			\node[state, fill=yellow!30, label=center:\sname{q_0}{\Cname}{\Aname}] (q3) [right= of q2]{};
			\node[state, fill=yellow!30, label=center:\sname{q_1}{\Cname}{\Cname}] (q4) [right= of q3]{};
			\node[state, fill=yellow!30, label=center:\sname{q_0}{\Bname}{\Bname}] (q5) [below= of q1]{};
			\node[state, fill=yellow!30, label=center:\sname{q_0}{\Aname}{\Bname}] (q6) [below= of q2]{};
			\node[state, fill=yellow!30, label=center:\sname{q_0}{\Cname}{\Bname}] (q7) [below= of q3]{};
			\node[state, fill=yellow!30, label=center:\sname{q_1}{\Cname}{\Aname}] (q8) [below= of q4]{};
			
			\path[->]
			(q2) edge[bend right=15, swap] node[] {$\inc{x}$} (q1)
			(q1) edge[bend right=15, swap] node[] {$\inc{x}$} (q2)
			(q6) edge[bend right=15, swap] node[] {$\inc{x}$} (q5)
			(q5) edge[bend right=15, swap] node[] {$\inc{x}$} (q6)
			(q1) edge node[swap] {$\dec{x}$} (q5)
			(q2) edge node[] {$\inc{y}$} (q3)
			(q6) edge node[swap] {$\inc{y}$} (q7)
			(q2) edge node[] {$\dec{x}$} (q6)
			(q3) edge node[] {$\dec{x}$} (q7)
			(q3) edge [loop above] node[] {$\inc{y}$} (q3)
			(q7) edge [loop below] node[] {$\inc{y}$} (q7)
			(q3) edge node[] {\decy} (q4)
			(q7) edge node[swap] {$\dec{x}$} (q8)
			(q4) edge [loop above] node[] {\decy} (q4)
			(q8) edge node[swap] {\!\!\decy} (q4)
			;

		\end{tikzpicture}
	}
	
	\caption{For a FIFO machine $\hat\fifo$ with a single channel $\ch$ and the bounded language $\hat L_\ch = (ab)^\ast bb^\ast$ over $\bpair{ab}{b}$ (top leftmost), we construct a FIFO machine $\fifo$  (bottom left), together with $L_\ch = (a_1a_2)^\ast a_3a_3^\ast$, in normal form as the product of $h^{-1}(\hat\fifo)$ and an automaton for $\Valid$ (top right). From $\fifo$, we then obtain the counter machine $\CS$ (bottom right).} \label{fig:counter-from-fifo}
\end{figure}

Now suppose we are given a reachability query for $\hat\fifo$ in terms
of $\hat q \in \hat Q$ and a rational relation $\hat\Rel \subseteq
\prod_{\ch \in \Ch{\nch}} \hat\Sigma_\ch^\ast$.
The lemma below shows how to reduce it to a reachability
query in $\fifo$. Here,
for $\contents = (\contents_\ch)_{\ch \in \Ch{\nch}} \in \ChContents$, we define
$h(\contents) = (h_\ch(\contents_\ch))_{\ch \in \Ch{\nch}} \in \prod_{\ch \in \Ch{\nch}} \hat\Sigma_\ch^\ast$.
Note that $h^{-1}(\hat\Rel)$ is rational. Furthermore, $h: \Sigma^\ast \to \hat\Sigma^\ast$ is defined by $h(a) = h_\ch(a)$ for all $\ch \in \Ch{\nch}$ and $a \in \Sigma_\ch$,
and we extend this to $h: A_\fifo^\ast \to A_{\hat\fifo}^\ast$ in the expected manner.

\begin{lem}
	\label{lem:normal-form}
	We have 
	$(\hat q,\hat\contents) \in \reachsetL{\hat\fifo}{\hat\sendL}$ for some
	$\hat\contents \in \hat\Rel$ iff
	$((\hat q,q_\A),\contents) \in \reachsetL{\fifo}{\sendL}$ for
	some $q_\A \in Q_\A$ and $\contents \in h^{-1}(\hat\Rel)$.
\end{lem}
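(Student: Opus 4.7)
The plan is to establish a two-way correspondence between runs of $\hat\fifo$ whose trace lies in $\hat\sendL$ and runs of $\fifo$ whose trace lies in $\sendL$, preserving the reached control state (up to the $\A$-component) and the channel contents (up to the homomorphism $h$). The key observation I will use is that $\fifo$ was built as the synchronous product of $h^{-1}(\hat\fifo)$ with a DFA $\A$ for $\Valid = \sendL \cap \Pref(\recL)$, so a run of $\fifo$ is exactly a run of $h^{-1}(\hat\fifo)$ whose trace stays in $\Pref(\Valid)$; and by the remark following Definition~\ref{def:singleconf}, any run of $h^{-1}(\hat\fifo)$ whose trace already belongs to $\sendL$ automatically lies in $\Pref(\recL)$ by the FIFO policy, hence in $\Valid$, so the $\A$-component will never be the limiting factor.

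For the forward direction, I would start from a run $\hat\sigma \in \hat\sendL$ of $\hat\fifo$ reaching $(\hat q, \hat\contents)$ and lift it channel by channel. For every channel $\ch$, pick any factorization of $\projsend{\hat\sigma}{\ch}$ as $\hat w_{\ch,1}^{k_1} \cdots \hat w_{\ch,n_\ch}^{k_{n_\ch}}$ witnessing its membership in $\hat L_\ch$. This factorization induces a position-by-position replacement of each send action $\sendaction{a}{\ch}$ in $\hat\sigma$ by $\sendaction{\newa}{\ch}$, where $\newa \in \Sigma_\ch$ is the unique distinct letter of $w_{\ch,1}^{k_1} \cdots w_{\ch,n_\ch}^{k_{n_\ch}}$ at the corresponding position; FIFO matching then forces each receive $\recaction{a}{\ch}$ to be relabeled by the distinct letter carried by its matching send. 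The resulting word $\sigma$ satisfies $h(\sigma) = \hat\sigma$ and $\projsend{\sigma}{\ch} = w_{\ch,1}^{k_1} \cdots w_{\ch,n_\ch}^{k_{n_\ch}} \in L_\ch$ for every $\ch$, so $\sigma \in \sendL$; moreover, $\sigma$ is a legitimate execution of $h^{-1}(\hat\fifo)$ reaching $(\hat q, \contents)$ with $h(\contents) = \hat\contents$. By the opening paragraph, $\sigma \in \Valid$, so $\A$ ends in some $q_\A \in F_\A$ and $\fifo$ reaches $((\hat q, q_\A), \contents)$ with $\contents \in h^{-1}(\hat\Rel)$.

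The backward direction will be a routine projection. Given a run $\sigma \in \sendL$ of $\fifo$ reaching $((\hat q, q_\A), \contents)$, I apply $h$ letter-wise to obtain $h(\sigma)$. Since every transition in $h^{-1}(\hat\fifotransition)$ projects back under $h$ to a transition in $\hat\fifotransition$ and $h$ commutes with channel writes and reads, $h(\sigma)$ will be a run of $\hat\fifo$ reaching $(\hat q, h(\contents))$. The inclusion $L_\ch \subseteq h_\ch^{-1}(\hat L_\ch)$ gives $\projsend{h(\sigma)}{\ch} = h_\ch(\projsend{\sigma}{\ch}) \in \hat L_\ch$, so $h(\sigma) \in \hat\sendL$, and $h(\contents) \in \hat\Rel$ holds by the very definition of $h^{-1}(\hat\Rel)$.

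The main obstacle I expect lies in the forward direction: the lift has to be simultaneously a valid execution of $h^{-1}(\hat\fifo)$ (where FIFO already dictates how the receive labels must match the send labels) and have its send projection on each channel land in $L_\ch$, not merely in $h_\ch^{-1}(\hat L_\ch)$. Committing to a single factorization of $\projsend{\hat\sigma}{\ch}$ over $(\hat w_{\ch,1}, \ldots, \hat w_{\ch,n_\ch})$ \emph{before} lifting, and then letting FIFO propagate the receive labels, is what reconciles both constraints in one stroke.
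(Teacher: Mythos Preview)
Your proposal is correct and follows essentially the same route as the paper's proof: in the forward direction you lift $\hat\sigma$ channel-wise by choosing a factorization in $\hat w_{\ch,1}^\ast \cdots \hat w_{\ch,n_\ch}^\ast$ and relabeling sends (with receives forced by FIFO matching), which is exactly how the paper obtains its ``unique $\sigma$ with $h(\sigma)=\hat\sigma$, $\projsend{\sigma}{\ch}=w_\ch$, $\projrec{\sigma}{\ch}\in\Pref(w_\ch)$''; the backward direction is the same letter-wise application of $h$ in both. Your explicit remark that $\sigma\in\sendL$ already forces $\sigma\in\Valid$ (so the $\A$-component never blocks) is precisely the paper's implicit use of $\runFIFO\in\sendL\cap\Pref(\recL)=L(\A)$.
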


\begin{proof}
	Let us assume we have $(\hat q,\hat\contents) \in \reachsetL{\hat\fifo}{\hat\sendL}$ for some
	$\hat\contents \in \hat\Rel$. Hence, there exists $\hat{\runFIFO} \in \hat\sendL$ such that $(\hat{q}_0, \initcontents) \xrightarrow{\hat{\runFIFO}} (\hat q,\hat\contents)$. For channel $\ch$, let $\hat w_\ch = \projsend{\hat \runFIFO}{\ch}$.
	Since $\hat{\runFIFO} \in \hat\sendL$,
	we have $\hat w_\ch \in \hat L_\ch$ for all $\ch \in \Ch{\nch}$.
	Let $w_\ch \in L_\ch = h_\ch^{-1}(\hat L_\ch) \cap (w_{\ch,1})^\ast \ldots (w_{\ch,n_\ch})^\ast$
	such that $h_\ch(w_\ch) = \hat w_\ch$.
	There is a unique $\runFIFO \in A_\fifo^\ast$ such that $h(\runFIFO) = \hat \runFIFO$
	and $\projsend{\runFIFO}{\ch} = w_\ch$ and $\projrec{\runFIFO}{\ch} \in \Pref(w_\ch)$ for all $\ch \in \Ch{\nch}$.
	Here, $h: \Sigma^\ast \to \hat\Sigma^\ast$ is defined by $h(a) = h_\ch(a)$ for all $\ch \in \Ch{\nch}$ and $a \in \Sigma_\ch$,
	and we extend this to $h: A_\fifo^\ast \to A_{\hat\fifo}^\ast$ in the expected manner.
	Note that $\runFIFO \in  \sendL \mathrel{\cap} \Pref(\recL)$.
	Hence, we know that in the FIFO machine $h^{-1}({\hat\fifo})$, one has $(\hat{q}_0, \initcontents) \xrightarrow{\runFIFO} (\hat q,\contents)$ for some $\contents$ (by construction of $h^{-1}({\hat\fifotransition})$), and that $\runFIFO \in L(\A)$.  Therefore, since $\fifo$ is a product of the two machines, we can deduce that there is a run in $\fifo$ of the kind $((\hat q_0, q_\A^0), \initcontents) \xrightarrow{\runFIFO} (\hat q,q_\A),\contents)$, for some value of $q_\A$.
	Furthermore, by $h(\runFIFO) = \hat\runFIFO$, we have $h(\contents) = \hat\contents$.
	Hence, $\contents \in h^{-1}(\hat\Rel)$.

	\medskip
	
	Conversely, let us assume that $((\hat q,q_\A),\contents) \in \reachsetL{\fifo}{\sendL}$ for
        some $q_\A \in Q_\A$ and channel contents $\contents \in h^{-1}(\hat\Rel)$. Then, we know that there exists $\runFIFO \in \sendL$ such that $((\hat q_0, {q_\A^0}), \initcontents) \xrightarrow{\runFIFO} ((\hat q,q_\A),\contents)$.
	Let $\hat{\runFIFO} = h(\runFIFO)$.
	Since $\runFIFO \in \sendL$, we have
	$\projsend{\runFIFO}{\ch} \in L_\ch$ for all $\ch \in \Ch{\nch}$.
	In particular, $\projsend{\runFIFO}{\ch} \in h_\ch^{-1}(\hat L_\ch)$ and,
	therefore, $h_\ch(\projsend{\runFIFO}{\ch}) = \projsend{h(\runFIFO)}{\ch} \in \hat L_\ch$.
	We deduce $\hat{\runFIFO} \in \hat{\sendL}$. 
	Furthermore, we can execute $\hat{\runFIFO}$ in $\hat{\fifo}$ (by construction) to reach configuration $(\hat q,\hat\contents)$ for some $\hat\contents$.
	By $\hat{\runFIFO} = h(\runFIFO)$, we have $\hat\contents = h(\contents)$. Therefore, $\hat\contents \in \hat\Rel$.
\end{proof}

\paragraph*{Reduction of Normal Form to Counter Machine}

Henceforth, we suppose that
$\fifo = (\fifostates, \Ch{\nch}, \alphabet, \fifotransition, \init)$ and $\boundedL = (L_\ch)_{\ch \in \Ch{\nhc}}$
are in normal form, where $L_\ch$ is a bounded language over $(w_{\ch,1}, \ldots, w_{\ch,n_\ch})$.
In particular, for every letter $a \in \Sigma_\ch$, there is a
unique index $i \in \{1,\ldots,n_\ch\}$ such that $a \in \ciAlph{\ch}{i}$
where $\ciAlph{\ch}{i} = \Alphw(w_{c,i})$. We denote this index $i$ by $\indexw{\ch}{a}$.

\medskip

We build a counter machine $\CS$ such that the \IBounded
rational-reachability problem for $\fifo$ can be solved
by answering a reachability query in $\CS$, using Theorem~\ref{thm:reach-counters}.
Each run in $\CS$ will simulate a run in $\fifo$.
In particular, we want a configuration of $\CS$ to allow
us to draw conclusions about the simulated configuration in $\fifo$.
The difficulty here is that counter values are just natural numbers
and a priori store less information than channel contents with their
messages.
To overcome this, the idea is to represent each word $w_{\ch,i}$
of a tuple $(w_{\ch,1}, \ldots, w_{\ch,n_\ch})$ as a counter
$\cnt{\ch}{i}$.
Since the set of possible action sequences is
``guided'' by a bounded language, we can replace
send actions with increments and receive actions with
decrements. More precisely, $\sendaction{a}{\ch}$
becomes $(\inc{\cnt{\ch}{\indexw{\ch}{a}}},\emptyset)$,
thus incrementing the counter associated with
the unique word $w_{\ch,i}$ in which $a$ occurs.
Similarly, $\recaction{a}{\ch}$ translates to
$(\dec{\cnt{\ch}{\indexw{\ch}{a}}},Z)$ (for suitable $Z$).

This alone does not put us in a position yet where, from a counter
valuation, we can infer a unique channel contents. However, when we
additionally keep track of the last messages that have
been sent for each channel, we can
reconstruct a unique channel contents.

There is one more thing to consider here.
While the counters $\cnt{\ch}{i}$ for a given channel $\ch$
are kind of independent, the FIFO policy would not allow
us to receive a letter from $w_{\ch,j}$ while a letter from
$w_{\ch,i}$ with $i < j$ is in transit. 
Translated to the counter setting, this means that
performing $\dec{\cnt{\ch}{j}}$ should require all counters 
$\cnt{\ch}{i}$ with $i < j$ to be $0$,
so this is where zero tests come into play.
As the $L_\ch$ are bounded languages and thanks to the normal form,
however, a counter that has been tested for zero does not need to
be modified anymore.

\smallskip

We can directly implement these ideas formally
and define $\CS = (\fifostates, \Cnt, \fifotransition', \init)$ as follows
(note that $Q$ and $\init$ remain unchanged):
\begin{itemize}
	\item The set of counters is $\Cnt = \{\cnt{\ch}{i} \mid \ch \in \Ch{\nch}$ and $i \in \{1,\ldots,n_\ch\}\}$.
	
	\item For every $\sendtrans{a}{\ch}{q}{q'} \in \fifotransition$, we have
	$(q,(\inc{\cnt{\ch}{\indexw{\ch}{a}}},\emptyset),q') \in \fifotransition'$.
	
	\item For every $\rectrans{a}{\ch}{q}{q'} \in \fifotransition$, we have
	$(q,(\dec{\cnt{\ch}{\indexw{\ch}{a}}},Z),q') \in \fifotransition'$ where the set of counters to be tested for zero is
	$Z = \{\cnt{\ch}{j} \mid j < \indexw{\ch}{a}\}$.
\end{itemize}

\begin{exa}
	Figure~\ref{fig:counter-from-fifo} illustrates the construction
	of $\CS$ from a FIFO machine $\fifo$ in normal form (cf.\ Example~\ref{ex:normal-form}).
	Recall that we have one channel $\ch$ and the bounded language
	$L_\ch = (a_1a_2)^\ast a_3a_3^\ast$ over $\bpair{a_1a_2}{a_3}$.
	Thus, $\CS$ will have two counters, say $x$ for $a_1a_2$ and
	$y$ for $a_3$. Note that performing $\dec{y}$ indeed comes with
	a test of $x$ for zero.
	
	Let us first observe that it is actually important that the
	FIFO machine satisfies the trace property. Suppose that, rather than from $\fifo$,
	we constructed the counter machine directly from $h^{-1}(\hat\fifo)$. Then, configuration
	$(q_1,(1,0))$ would be reachable in the counter machine via
	$\inc{x}\inc{x}\dec{x}$, which arises from
	$\sendaction{a_1}{\ch}\sendaction{a_2}{\ch}\recaction{a_2}{\ch}$.
	However the only corresponding
	trace from $\Pref(\Valid)$ is $\sendaction{a_1}{\ch}\sendaction{a_2}{\ch}\recaction{a_1}{\ch}$,
	which in the FIFO machine $h^{-1}(\hat\fifo)$ leads to $q_0$.
	
	So consider $\fifo$ and its counter machine $\CS$.
	A channel contents $\contents \in \Sigma_\ch^\ast$ (here, we have one channel) has a natural
	counter analogue $\cvalue{\contents} =
	(|\contents|_{a_1} + |\contents|_{a_2},|\contents|_{a_3})$.
	In fact, if $(\overline{q},\contents)$ is reachable in $\fifo$, then following
	the corresponding transitions in $\CS$ will lead us to
	$(\overline{q},\cvalue{\contents})$.
	For example, $((q_0,\Cname,\Bname),a_2a_3a_3)$ is reachable in $\fifo$
	along the trace $\sendaction{a_1}{\ch}\sendaction{a_2}{\ch}\recaction{a_1}{\ch}\sendaction{a_3}{\ch}\sendaction{a_3}{\ch}$, and so is
	$((q_0,\Cname,\Bname),(1,2))$ in $\CS$ along
	$\inc{x}\inc{x}\dec{x}\inc{y}\inc{y}$ (all zero tests are empty).
	
	But how about the converse? In general, one may associate
	with a counter valuation such as $(4,0)$ several channel contents.
	Actually, both $a_1a_2a_1a_2$ and $a_2a_1a_2a_1$ seem suitable.
	However, if we know the most recent message that has been sent, say $a_1$,
	then this leaves only one option, namely $a_2a_1a_2a_1$.
	In this way, we can associate with each counter valuation
	$\cscontents$ and message $a_i \in \Sigma_\ch$
	a unique (if it exists at all) possible channel contents $\sem{\cscontents}_{a_i}$.
	Suppose that $\runCS$ is a trace in $\CS$ arising from a trace
	$\runFIFO$ in $\fifo$ whose last sent message is $a_i$.
	If $(\overline{q},\cscontents)$ is reachable in $\CS$
	via $\runCS$, then $(\overline{q},\sem{\cscontents}_{a_i})$
	is reachable in $\fifo$ via $\runFIFO$.
	For example, $\runCS = \inc{x}\inc{x}\dec{x}$ allows us
	to go to configuration $((q_0,\Aname,\Bname),(1,0))$.
	It arises from $\runFIFO = \sendaction{a_1}{\ch}\sendaction{a_2}{\ch}\recaction{a_1}{\ch}
	\in \Pref(\Valid)$, whose last sent message is $a_2$.
	We have $\sem{(1,0)}_{a_2} = a_2$. Indeed,
	$\runFIFO$ leads to $((q_0,\Aname,\Bname),a_2)$.
	\exend
\end{exa}

\paragraph*{Relation between FIFO Machine and Counter Machine}

Recall that the FIFO machine
$\fifo = (\fifostates, \Ch{\nch}, \alphabet, \fifotransition, \init)$ and $\boundedL = (L_\ch)_{\ch \in \Ch{\nhc}}$
are in normal form, where $L_\ch$ is a bounded language over $(w_{\ch,1}, \ldots, w_{\ch,n_\ch})$.
Let $\CS = (\fifostates, \Cnt, \fifotransition', \init)$ be the associated counter machine.
We will now formalize the tight
forth-and-back correspondence that allows us to solve
reachability queries in $\fifo$ in terms of
reachability queries in $\CS$.

We start with a simple observation concerning the traces of $\fifo$ and $\CS$.

\begin{lem}
	\label{lem:traceobs}
	We have $\Traces{\fifo} \subseteq \Pref(\Valid)$
	and $\Traces{\CS} \subseteq \BCounters{\CS}$.
\end{lem}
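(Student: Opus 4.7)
The plan is to handle the two inclusions separately. The first inclusion $\Traces{\fifo} \subseteq \Pref(\Valid)$ is essentially immediate from the normal form: every $\runFIFO \in \Traces{\fifo}$ is, by the definition of $\TSof{\fifo}$, also a trace of the finite control graph $(\fifostates,\fifoAlpha{\fifo},\fifotransition,\init)$, because each send or receive step of $\TSof{\fifo}$ is witnessed by a corresponding transition in $\fifotransition$. Item~(2) of Definition~\ref{def:normal-form} then yields the inclusion.

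For the second inclusion $\Traces{\CS} \subseteq \BCounters{\CS}$, the first step is to transfer each trace $\runCS \in \Traces{\CS}$ back to a trace $\runFIFO$ of the control graph of $\fifo$ via the bijective relabelling used to build $\CS$, namely $(\inc{\cnt{\ch}{\indexw{\ch}{a}}},\emptyset) \mapsto \sendaction{a}{\ch}$ and $(\dec{\cnt{\ch}{\indexw{\ch}{a}}},Z) \mapsto \recaction{a}{\ch}$. Since the control graph of $\CS$ and that of $\fifo$ have the same shape, $\runFIFO$ is a trace of $(\fifostates,\fifoAlpha{\fifo},\fifotransition,\init)$, so by the normal form $\runFIFO \in \Pref(\Valid) = \Pref(\sendL \cap \Pref(\recL))$. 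In particular, for every channel $\ch$ \emph{both} $\projsend{\runFIFO}{\ch}$ and $\projrec{\runFIFO}{\ch}$ are prefixes of words in $L_\ch \subseteq w_{\ch,1}^\ast \cdots w_{\ch,n_\ch}^\ast$.

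Now I would argue by contradiction. Suppose two positions $p < p'$ in $\runCS$ violate the $\BCounters{\CS}$-condition, i.e.\ the action at position $p$ is $(\dec{\cnt{\ch}{j}},Z_p)$ with $\cnt{\ch}{i} \in Z_p$ for some $i < j$, while the action at position $p'$ modifies $\cnt{\ch}{i}$. Translating back to $\runFIFO$, position $p$ is a receive $\recaction{a}{\ch}$ with $a \in \ciAlph{\ch}{j}$, and position $p'$ is either a send $\sendaction{a'}{\ch}$ or a receive $\recaction{a'}{\ch}$ with $a' \in \ciAlph{\ch}{i}$. In the receive sub-case, $\projrec{\runFIFO}{\ch}$ would feature a letter from $\ciAlph{\ch}{j}$ strictly before a letter from $\ciAlph{\ch}{i}$ with $i < j$, contradicting that it is a prefix of a word in $w_{\ch,1}^\ast \cdots w_{\ch,n_\ch}^\ast$ (recall that in the normal form $L_\ch$ is distinct-letter, so the blocks $\ciAlph{\ch}{1},\ldots,\ciAlph{\ch}{n_\ch}$ are disjoint and linearly ordered). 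In the send sub-case, the fact that at position $p$ we have already received from $\ciAlph{\ch}{j}$ forces, by the FIFO policy, a matching send from $\ciAlph{\ch}{j}$ to occur before position $p$, hence before the position-$p'$ send from $\ciAlph{\ch}{i}$, again contradicting the bounded-language ordering of $\projsend{\runFIFO}{\ch}$.

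The only subtlety, which I expect to be the main point worth checking carefully, is that the argument uses the bounded-language structure on both projections simultaneously: this is exactly why the normal form demands $\Traces{(\fifostates,\fifoAlpha{\fifo},\fifotransition,\init)} \subseteq \Pref(\sendL \cap \Pref(\recL))$ rather than just $\Pref(\sendL)$, and why the example following the construction shows that without the trace property the lemma would fail.
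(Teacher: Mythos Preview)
Your argument is correct and follows the same line as the paper's proof: both use the normal-form trace property to constrain the corresponding path in $\fifo$'s control graph, and both exploit that a decrement of $\cnt{\ch}{j}$ in $\runCS$ must be preceded by an increment of the same counter. One wording issue: in the send sub-case you justify the prior send from $\ciAlph{\ch}{j}$ ``by the FIFO policy,'' but $\runFIFO$ is only a trace of the control graph $(\fifostates,\fifoAlpha{\fifo},\fifotransition,\init)$, not of $\TSof{\fifo}$, so FIFO semantics do not apply to it; the correct justification---which you have available and the paper uses---is that $\runCS \in \Traces{\CS}$, so the decrement at position $p$ is feasible only if $\cnt{\ch}{j}$ is positive there, forcing an earlier increment, which translates to the needed earlier send in $\runFIFO$. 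Also, the relabelling is not bijective on action \emph{labels} (distinct letters in the same $w_{\ch,i}$ collapse), only on \emph{transitions}, but that is what your argument actually uses.
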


\begin{proof}
	Observe that $\Traces{\fifo} \subseteq \Traces{(\fifostates,\fifoAlpha{\fifo},\fifotransition,\init)} \subseteq \Pref(\Valid)$.
	Thus, the first property holds.

	\smallskip
	
	For the second statement, consider
	\[(\init,\initcntcontents) =
	(q_0, \cscontents_0)
	\xrightarrow{\alpha_1}_\CS
	(q_1, \cscontents_1)
	\xrightarrow{\alpha_2}_\CS
	\ldots
	\xrightarrow{\alpha_n}_\CS
	(q_n, \cscontents_n)
	\]
	and let $\tau = \alpha_1 \ldots \alpha_n$.
	Thus, $\tau \in \Traces{\CS}$.
	Suppose that we apply a zero test at position $\ell \in \{1,\ldots,n\}$, i.e.,
	$\alpha_\ell = (\dec{\cnt{\ch}{j}},Z)$ for some $(\ch,j)$,
	where $Z$ contains the counters $\cnt{\ch}{i}$ with $i < j$.
	Then, there is $k < \ell$ such that $\alpha_k = (\inc{\cnt{\ch}{j}},\emptyset)$.
	By the construction of $\CS$, we have transitions
	$(q_{k-1},\sendaction{a}{\ch},q_k)$ and
	$(q_{\ell-1},\recaction{b}{\ch},q_\ell)$ in $\fifo$
	for some $a,b \in \Sigma_{\ch,j}$. By the trace property of $\fifo$,
	none of the actions ``reachable'' from $q_\ell$ in $\fifo$
	employs a message from $\Sigma_{\ch,i}$, for all $i < j$.
	Thus, none of the actions $\alpha_m$ with $\ell \le m$ modifies a
	counter from $Z$.
	We deduce that $\runCS \in \smash{\BCounters{\CS}}$.
\end{proof}

With every channel contents $\contents \in \ChContents$ of the FIFO machine $\fifo$, we associate a counter valuation $\cvalue{\contents} = \chcontents \in \N^\Cnt$ where, for each counter $\cnt{\ch}{i}$, we let $\cscontentsp{\cnt{\ch}{i}} = \sum_{a \in \ciAlph{\ch}{i}} | \contentsp{\ch}|_a$.
Furthermore, abusing notation, we define a homomorphism $\crun{\,.\,}:  \fifoAlpha{\fifo}^\ast \to\csAlpha{\CS}^\ast$ which maps a sequence of actions of $\fifo$ to a sequence of actions of $\CS$. It is defined by $ \crun{\sendaction{a}{\ch}} = (\inc{\cnt{\ch}{\indexw{\ch}{a}}}, \emptyset)$ and $\crun{\recaction{a}{\ch}} = (\dec{\cnt{\ch}{\indexw{\ch}{a}}},Z)$ where $Z = \{\cnt{\ch}{j} \mid j < \indexw{\ch}{a}\}$.

Conversely, we will associate, with counter values and traces of $\CS$
the corresponding objects in the FIFO machine.
Because of the inherent ambiguity, this is, however, less straightforward.
First, we define a partial mapping
$\frun{\,.\,} : \csAlpha{\CS}^\ast \to \fifoAlpha{\fifo}^\ast$ (that is not a homomorphism).
For $\runCS \in \csAlpha{\CS}^\ast$, we let $\frun{\runCS}$ be the unique (if it exists) word
$\runFIFO \in \Pref(\Valid)$ such that $\crun{\runFIFO} = \runCS$.

\smallskip

Next, we associate with a counter valuation a corresponding channel contents.
As explained above, there is no unique choice unless we make an assumption on
the last messages that have been sent.
For $\ch \in \Ch{\nch}$, we set $\Sigmabot_\ch = \Sigma_\ch \uplus \{\bot\}$.
Let $a \in \Sigmabot_\ch$ and $w \in \Sigma_\ch^\ast$.
We say that $a$ is \emph{good} for
$w$ if $w \in \Inf(L_\ch)$ and either
$w = \varepsilon$
or
$w = u.a$ for some $u \in \Sigma_\ch^\ast$.
Intuitively, it may be possible to obtain
contents $w$ in channel $\ch$ when $a$ is the last message sent
(no message was sent yet through $\ch$ if $a=\bot$).
Note that the set of words $w \in \Sigma_\ch^\ast$ such
that $a$ is good for $w$ is a regular language.
Moreover, with $\contents \in \ChContents$,
we associate the finite set
$\pairs{\contents} \subseteq \prod_{\ch \in \Ch{\nch}} \Sigmabot_\ch$
of tuples $\veca = (\veca_\ch)_{\ch \in \Ch{\nch}}$ such that,
for all $c \in \Ch{\ch}$, $\veca_\ch$ is good for $\contents_\ch$.

Let
$\cscontents \in \N^\Cnt$ and $\veca \in \prod_{\ch \in \Ch{\nch}} \Sigmabot_\ch$.
Abusing notation, we will
associate with $\cscontents$ and $\veca$ 
the channel contents
$\fvalue{\cscontents}{\veca} \in \ChContents$ (if it exists).
We let 
$\fvalue{\cscontents}{\veca} = \contents$ if
$\cvalue{\contents} = \cscontents$ and
$\veca \in \pairs{\contents}$.
There is at most one such $\contents$ so that this is well-defined.
Note that $\cvalue{\fvalue{\cscontents}{\veca}} = \cscontents$.

\begin{exa}
	If we have one channel $\ch$ and our bounded language is
	$L_\ch=(a_1a_2a_3)^\ast (a_4)^\ast$, then 
	$\fvalue{(4,0)}{a_2} = a_2a_3a_1a_2$
	and 
	$\fvalue{(2,1)}{a_4} = a_2a_3a_4$, whereas
	$\fvalue{(3,1)}{a_3}$ is undefined.
	Moreover, 
	$\pairs{a_2a_3} = \{a_3\}$ and
	$\pairs{\varepsilon} = \{a_1, a_2, a_3, a_4, \bot\}$.
	\exend
\end{exa}

Given $\cscontents$ and $\veca$, 
we can easily compute  $\fvalue{\cscontents}{\veca}$ since there are only finitely many words $\contents$ for a given $\cscontents$ such that $\cvalue{\contents} = \cscontents$. Furthermore, we can also compute $ \pairs{\contents}$ for a given $\contents$ as we have finitely many possibilities of $\veca$.

\newcommand{\Lcontents}[1]{L_{#1}}

Finally, for $\veca \in \prod_{\ch \in \Ch{\nch}} \Sigmabot_\ch$,
we let $\abL{\veca} \subseteq \fifoAlpha{\fifo}^\ast$ be the set of
words $\runFIFO$ such that, for all $\ch \in \Ch{\nch}$, $\veca_\ch$ 
is the last message sent to $\ch$ in $\runFIFO$ (no message was sent if $\veca_\ch = \bot$).
We are now ready to state that runs in the FIFO machine
are faithfully simulated by runs in the counter machine
(the proof is by induction on the length of the trace):

\begin{prop}
	\label{prop:fifo-cm}
	Let $\runFIFO  \in \fifoAlpha{\fifo}^\ast$. 
	For all $(q, \contents) \in \tscomp{\tsstates}{\fifo}$ and $\veca \in \prod_{\ch \in \Ch{\nch}} \Sigmabot_\ch$
	such that
	$\runFIFO \in \abL{\veca}$, we have:
	$
	(\init,\initcontents) \xrightarrow{ \runFIFO }_\fifo (q, \contents)
	\implies
	\bigl(
	(\init,\initcntcontents) \xrightarrow{\crun{\runFIFO}}_\CS (q, \cvalue{\contents})	
	\text{ and } \veca \in \pairs{\contents}
	\bigr)\,.
	$
\end{prop}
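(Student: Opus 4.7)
The plan is to proceed by induction on $|\runFIFO|$. In the base case $\runFIFO = \varepsilon$, we have $(q, \contents) = (\init, \initcontents)$, and the assumption $\runFIFO \in \abL{\veca}$ forces $\veca_\ch = \bot$ for every $\ch$; then $\crun{\varepsilon} = \varepsilon$, $\cvalue{\initcontents} = \initcntcontents$, and $\bot$ is good for $\varepsilon \in \Inf(L_\ch)$, so $\veca \in \pairs{\initcontents}$.

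For the inductive step, I would decompose $\runFIFO = \runFIFO' \cdot \action$ with $(\init,\initcontents) \xrightarrow{\runFIFO'}_\fifo (q', \contents') \xrightarrow{\action}_\fifo (q, \contents)$, and construct the natural $\veca'$: if $\action = \sendaction{a}{\ch}$, let $\veca'_\chb = \veca_\chb$ for $\chb \neq \ch$ and take $\veca'_\ch$ to be the previous last message sent to $\ch$ in $\runFIFO'$ (or $\bot$ if there is none); if $\action = \recaction{a}{\ch}$, set $\veca' = \veca$, since no send occurs. In both cases $\runFIFO' \in \abL{\veca'}$, so the induction hypothesis delivers $(\init,\initcntcontents) \xrightarrow{\crun{\runFIFO'}}_\CS (q', \cvalue{\contents'})$ together with $\veca' \in \pairs{\contents'}$, and it remains to append one step for $\crun{\action}$ and to verify $\veca \in \pairs{\contents}$.

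The send case is straightforward: the construction of $\CS$ provides the transition $(q', (\inc{\cnt{\ch}{\indexw{\ch}{a}}}, \emptyset), q)$, which increments exactly the counter whose block contains $a$, matching $\cvalue{\contents'_\ch \cdot a}$ for channel $\ch$ and leaving the others untouched. For the pairs condition, Lemma~\ref{lem:traceobs} places $\runFIFO$ in $\Pref(\Valid)$, from which $\contents_\ch \in \Inf(L_\ch)$, and $\contents_\ch = \contents'_\ch \cdot \veca_\ch$ by the choice $\veca_\ch = a$.

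The main obstacle is the receive case, where the translated transition $(\dec{\cnt{\ch}{\indexw{\ch}{a}}}, Z)$ carries zero tests on every $\cnt{\ch}{j}$ with $j < \indexw{\ch}{a}$. These tests must be discharged from $\cvalue{\contents'}$, and this is precisely where the normal form is used: $\contents'_\ch = a \cdot \contents_\ch \in \Inf(L_\ch)$ begins with $a \in \ciAlph{\ch}{\indexw{\ch}{a}}$, and since $L_\ch$ is distinct-letter over $(w_{\ch,1}, \ldots, w_{\ch,n_\ch})$, every letter of $\contents'_\ch$ must lie in some $\ciAlph{\ch}{i}$ with $i \ge \indexw{\ch}{a}$, so all lower counters are already zero in $\cvalue{\contents'}$. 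The decrement then produces $\cvalue{\contents}$, and the pairs condition transfers because $\veca_\ch$ is unchanged and remains good for the suffix $\contents_\ch$ of $\contents'_\ch$, which is still an infix of $L_\ch$. Everything else reduces to bookkeeping on the definitions of $\crun{\,.\,}$, $\cvalue{\,.\,}$, and $\pairs{\,.\,}$.
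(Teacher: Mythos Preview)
Your proposal is correct and follows essentially the same route as the paper: induction on $|\runFIFO|$, with the send case handled by a direct counter increment and the receive case by arguing that the zero tests on $\cnt{\ch}{j}$ for $j < \indexw{\ch}{a}$ succeed because the channel contents is an infix of the distinct-letter bounded language $L_\ch$. Your treatment of the zero-test justification (via $\contents'_\ch \in \Inf(L_\ch)$ from the induction hypothesis) is in fact slightly more explicit than the paper's, which simply asserts that letters from earlier blocks cannot be present once $a$ is at the head; and your use of Lemma~\ref{lem:traceobs} to obtain $\contents_\ch \in \Inf(L_\ch)$ in the send case is a clean alternative to the paper's appeal to $\runFIFO' \in \Pref(\sendL)$ combined with the induction hypothesis.
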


\begin{proof}\label{app:fifo-cm-proof}
	We will prove
	the statement by induction on the length of $\runFIFO$.
	In the base case, $|\runFIFO|=0$.
	The only value of $\runFIFO$ such that $|\runFIFO|=0$ is $\runFIFO =\varepsilon$. Furthermore, $\varepsilon \in \Pref(\Valid) \cap \abL{\veca}$ where $\veca = (\bot)^\Ch{\nch}$. The initial configuration $(\init, \initcontents)  \in \tscomp{\tsstates}{\fifo}$ is the only configuration reachable by $\varepsilon$.
	In $\TSof{\CS}$, the only configuration reachable via $\crun{\varepsilon} = \varepsilon$ is $(\init, \initcntcontents)$ = $(q, \crun{\initcontents})$, and $\crun{\varepsilon} \in \BCounters{\CS}$. Finally, we also see that
	$\veca \in \pairs{\initcontents}$. 
	Therefore,
	the base case is valid.
	
	\medskip
	
	We suppose that
	the statement is true for all $\runFIFO \in \fifoAlpha{\fifo}^\ast$ such that $|\runFIFO| = n$. We will now show that
	it is true for  $\runFIFO' \in \fifoAlpha{\fifo}^\ast$ where $|\runFIFO'|=n+1$.
	Let $\veca' \in \prod_{\ch \in \Ch{\nch}} \Sigmabot_\ch$ and suppose $\runFIFO' \in \Pref(\Valid) \cap \abL{\veca'}$.	

	We write $\runFIFO'=\runFIFO.\FIFOaction$ such that $\runFIFO \in \Pref(\Valid) \cap \abL{\veca}$ for some $\veca \in \prod_{\ch \in \Ch{\nch}} \Sigmabot_\ch$, $|\runFIFO| = n$, and $\FIFOaction \in \fifoAlpha{\fifo}$. There exists such a $\runFIFO$ since the set $\Pref(\Valid)$ is prefix-closed.
	
	Let $(q', \contents') \in \tscomp{\tsstates}{\fifo}$ such that $(\init,\initcontents) \xrightarrow{ \runFIFO' }_\fifo (q', \contents')$. Then, there is
	$(q, \contents)  \in \tscomp{\tsstates}{\fifo}$ such that $(\init,\initcontents) \xrightarrow{ \runFIFO }_\fifo (q, \contents) \xrightarrow{\FIFOaction}_\fifo (q', \contents')$.
	Hence, there exists  $t = (q, \FIFOaction, q') \in \fifotransition$ in the FIFO machine.
	
	\subsection*{Case (1):}
	Suppose $\FIFOaction = \sendaction{a}{\ch}$, for some $\ch \in \Ch{\nch}$ and $a \in \Sigma_\ch$.
	Hence, we have $\contentspp{\ch} = \contentsp{\ch}.a$,
	and $\contentspp{\chb} = \contentsp{\chb}$ for all $\chb \in \Ch{\nch} \setminus \{\ch\}$.
	Moreover, since $\runFIFO' = \runFIFO.\FIFOaction$ and $\FIFOaction = \sendaction{a}{\ch}$, we can deduce that 
	$\veca'_\ch = a$ and $\veca'_d = \veca_d$ for all $d \neq \ch$. 	 This is because the only change in the last sent letters between $\runFIFO$ and $\runFIFO'$ is in the channel $\ch$.
	
	By construction of $\CS$, we know that there is a transition $t' = (q,\inc{\cnt{\ch}{\indexw{\ch}{a}}},\emptyset,q') \in \fifotransition'$ in $\CS$, and by the definition of $\crun{\,.\,}$, we also have $\crun{\FIFOaction} = (\inc{\cnt{\ch}{\indexw{\ch}{a}}},\emptyset)$.
	By induction hypothesis, we have $(\init,\initcntcontents) \xrightarrow{\crun{\runFIFO}} (q, \cscontents)$ where $\cscontents = \cvalue{\contents}$. Since $\crun{\beta}$ increases a counter, we have $(q,\cscontents) \xrightarrow{\crun{\FIFOaction}} (q', \cscontents')$ for the counter valuation $\cscontents'$
	such that $\cscontentspp{\counter} = \cscontentsp{\counter} + 1$ for $\counter =  \cnt{\ch}{\indexw{\ch}{a}}$ and
	$\cscontentspp{\counterb} = \cscontentsp{\counterb}$ for all $\counterb \in \Cnt \setminus \{\counter\}$. Hence, $\cvalue{\contents'} = \cscontents'$ and we have $(\init,\initcntcontents) \xrightarrow{\crun{\runFIFO.\FIFOaction}} (q', \cvalue{\contents'})$.
	Note that, by Lemma~\ref{lem:traceobs}, we have $\crun{\runFIFO.\FIFOaction} \in \BCounters{\CS}$.

	From the induction hypothesis, we know that $\veca \in \pairs{\contents}$. In order to show that  $\veca' \in \pairs{\contents'}$, we only need to address the case of the channel $\ch$, since the values of $\veca_\chb, \contents_\chb$ remain unchanged for all $\chb \neq \ch$.
	We know that $\contents'_\ch = \contents_\ch.\veca'_\ch$.
	By induction hypothesis, $\contents_\ch \in \Inf(L_\ch)$.
	Since $\runFIFO' \in \Pref(\sendL)$, we also have $\contents_\ch.\veca'_\ch \in \Inf(L_\ch)$.
	Hence, $\veca' \in \pairs{\contents'}$.

	\subsection*{Case (2):}
	
	Suppose $\FIFOaction = \recaction{a}{\ch}$, for some $\ch \in \Ch{\nch}$ and $a \in \Sigma_\ch$.
	We have $\contentsp{\ch} = a . \contentspp{\ch}$,
	and $\contentspp{\chb} = \contentsp{\chb}$ for all $\chb \in \Ch{\nch} \setminus \{\ch\}$.
	Since $\runFIFO' = \runFIFO.\FIFOaction$ and $\FIFOaction = \recaction{a}{\ch}$, we can deduce that $\veca' = \veca$. This is because there is no change in the last letter sent between $\runFIFO$ and $\runFIFO'$.
	
	By construction of $\CS$, we know that there is a transition $t' = (q,\dec{\cnt{\ch}{\indexw{\ch}{a}}},Z,q') \in \fifotransition'$ in $\CS$ where $Z =  \{(\cnt{\ch}{j}) \mid j < \indexw{\ch}{a}\}$. By the definition of $\crun{\,.\,}$, we also have $\crun{\FIFOaction} = (\dec{\cnt{\ch}{\indexw{\ch}{a}}},Z)$. By the induction hypothesis, we have  $(\init,\initcntcontents) \xrightarrow{\crun{\runFIFO}} (q, \cvalue{\contents})$. Furthermore, recall that $\contentsp{\ch} = a . \contentspp{\ch}$. This implies that $a$ is at the head of the channel $\ch$ in the configuration $(q, \contents)$. Hence, all the letters $b$ such that $\indexw{\ch}{b} < \indexw{\ch}{a}$ are not present in the channel. Therefore, in the configuration $(q, \cvalue{\contents})$, all the counters $\cnt{\ch}{\indexw{\ch}{b}}$ are equal to zero for $\indexw{\ch}{b} < \indexw{\ch}{a}$. Hence, we can execute the transition $t'$ from $(q, \cvalue{\contents})$, and we have $(\init,\initcntcontents) \xrightarrow{\crun{\runFIFO}} (q, \cvalue{\contents})  \xrightarrow{\crun{\FIFOaction}} (q', \cscontents')$ for some counter valuation $\cscontents'$. By Lemma~\ref{lem:traceobs}, we have $\crun{\runFIFO.\FIFOaction} \in \smash{\BCounters{\CS}}$.
	We write $\cscontents = \cvalue{\contents}$, and from the counter machine transition relation, we know that $\cscontentspp{\counter} = \cscontentsp{\counter} - 1$ for $\counter =  \cnt{\ch}{\indexw{\ch}{a}}$ and
	$\cscontentspp{\counterb} = \cscontentsp{\counterb}$ for all $\counterb \in \Cnt \setminus \{\counter\}$. Hence, $\cvalue{\contents'} = \cscontents'$.

	From the induction hypothesis, we know that $\veca \in \pairs{\contents}$. In order to show that  $\veca' \in \pairs{\contents'}$, we only need to address the case of the channel $\ch$, since the values of $\veca_\chb, \contents_\chb$ remain unchanged for all $\chb \neq \ch$. 
	We know from the  induction hypothesis that
	$\contents_\ch \in  \Inf(L_\ch)$. Hence, we can immediately deduce that $\contents'_\ch \in \Suf(\contents_\ch) \subseteq \Inf(L_\ch)$.
	
	Furthermore, we know that $\contents_\ch = u. \veca'_\ch$ for some $u \in \Sigma_\ch^\ast$. If $u = \varepsilon$, then we can deduce that $\contents'_\ch = \varepsilon$. If $u \neq \varepsilon$, then we have $\contents'_\ch = u'.\veca'_\ch$ such that $a.u' = u$.
	Hence,
	$\veca' \in \pairs{\contents'}$.
\end{proof}

Conversely, we can show that runs of the counter machine
can be retrieved in the FIFO machine (again, the proof proceeds by induction
on the length of the trace):

\begin{prop}
	\label{prop:cm-fifo}
	Let $\runCS  \in \csAlpha{\CS}^\ast$. For all  $(q, \cscontents) \in \tscomp{\tsstates}{\CS}$ and
	$\veca \in \prod_{\ch \in \Ch{\nch}} \Sigmabot_\ch$
	such that $\runCS \in \crun{\Pref(\Valid) \cap \abL{\veca}}$, we have:
	$
	(\init,\initcntcontents) \xrightarrow{\runCS}_\CS (q, \cscontents)
	\implies
	(\init,\initcontents) \xrightarrow{ \frun{\runCS} }_\fifo (q, \fvalue{\cscontents}{\veca})\,.
	$
\end{prop}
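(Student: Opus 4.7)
The plan is to mirror the inductive strategy of Proposition~\ref{prop:fifo-cm}, proceeding by induction on the length of $\runCS$. For the base case $\runCS = \varepsilon$, we have $\frun{\varepsilon} = \varepsilon \in \Pref(\Valid)$, the only reachable configuration in $\TSof{\CS}$ via $\varepsilon$ is $(\init, \initcntcontents)$, and taking $\veca = (\bot)^{\Ch{\nch}}$ one checks directly that $\fvalue{\initcntcontents}{\veca} = \initcontents$, so the statement holds trivially.

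For the inductive step, I would decompose $\runCS = \tau.\alpha$ with $\alpha \in \csAlpha{\CS}$. Since $\runCS \in \crun{\Pref(\Valid) \cap \abL{\veca'}}$ and $\frun{\runCS}$ is the unique preimage in $\Pref(\Valid)$, write $\runFIFO := \frun{\runCS} = \runFIFO'.\FIFOaction$ with $\crun{\FIFOaction} = \alpha$. Because $\Pref(\Valid)$ is prefix-closed, $\runFIFO' \in \Pref(\Valid)$, and setting $\veca$ to be the tuple of last-sent messages in $\runFIFO'$ gives $\runFIFO' \in \abL{\veca}$ and $\tau = \crun{\runFIFO'}$. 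The induction hypothesis applied to $\tau$ and $\veca$ yields an intermediate FIFO configuration $(q,\fvalue{\cscontents}{\veca})$ reachable via $\runFIFO'$, where $(q,\cscontents)$ is the $\CS$-configuration reached by $\tau$. Then I split into two cases according to whether $\FIFOaction = \sendaction{a}{\ch}$ or $\FIFOaction = \recaction{a}{\ch}$, reading off the existence of the FIFO transition directly from the construction of $\CS$ (which is transition-for-transition).

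In the send case, the increment of $\cnt{\ch}{\indexw{\ch}{a}}$ in $\CS$ matches appending $a$ to channel $\ch$ in $\fifo$, and since $\veca'_\ch = a$ and $\veca'_d = \veca_d$ for $d \neq \ch$, a short computation using $\cvalue{\,.\,}$ and the definition of $\fvalue{\,.\,}{\,.\,}$ shows that $\fvalue{\cscontents'}{\veca'}$ is exactly $\fvalue{\cscontents}{\veca}$ with $a$ appended to its $\ch$-component. In the receive case, $\veca' = \veca$ and the zero tests guarded by $\alpha$ enforce $\cscontents_{\cnt{\ch}{j}}=0$ for all $j < \indexw{\ch}{a}$; combined with the distinct-letter normal form of $L_\ch$ and $\fvalue{\cscontents}{\veca}_\ch \in \Inf(L_\ch)$, this forces $a$ to be the head letter of $\fvalue{\cscontents}{\veca}_\ch$, so consuming it in $\fifo$ is enabled and the resulting contents again coincide with $\fvalue{\cscontents'}{\veca'}$.

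The main obstacle will be the receive case, specifically the step where one must argue that the head letter of $\fvalue{\cscontents}{\veca}_\ch$ is precisely $a$. This requires combining three ingredients from the normal form: the distinct-letter property (so that each letter belongs to a unique component $w_{\ch,i}$ and hence to a unique counter), the zero-test information in $\alpha$ (no letter from an earlier component remains), and $\fvalue{\cscontents}{\veca}_\ch \in \Inf(L_\ch)$ together with the good-for-$\veca_\ch$ constraint (which pins down the pattern $u_i(w_{\ch,i})^{k_i}\cdots(w_{\ch,j})^{k_j}v_j$ unambiguously once the counters and the last-sent letter $\veca_\ch$ are known). Once uniqueness of $\fvalue{\,.\,}{\,.\,}$ is invoked and the updates match on both sides, the induction closes.
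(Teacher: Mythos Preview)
Your proposal is correct and follows essentially the same approach as the paper: induction on $|\runCS|$, decomposition $\runCS = \tau.\alpha$, use of the unique preimage $\frun{\runCS} \in \Pref(\Valid)$, and a send/receive case split.

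The only noteworthy difference is in the receive case. You argue that the head of $\fvalue{\cscontents}{\veca}_\ch$ equals $a$ by invoking the zero tests (no letter from an earlier component remains) together with the distinct-letter infix structure. The paper instead argues it from $\runFIFO'.\FIFOaction \in \Pref(\recL)$ combined with the FIFO equation $\projrec{\runFIFO'}{\ch}\cdot\contents_\ch = \projsend{\runFIFO'}{\ch}$. In fact both ingredients are needed for a complete argument: the zero tests (plus the fact that $\cnt{\ch}{i}$ is being decremented, hence positive) pin the head letter to component $i$, and then the $\Pref(\recL)$ constraint on $\projrec{\runFIFO'}{\ch}.a$ together with distinct-letterness fixes which letter of $w_{\ch,i}$ it is. You have access to $\Pref(\recL)$ via $\frun{\runCS}\in\Pref(\Valid)$, so your sketch closes once you make that last step explicit; conversely, the paper's one-line deduction implicitly relies on the zero-test information you highlight. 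Neither emphasis changes the structure of the proof.
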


\begin{proof}\label{app:cm-fifo-proof}
	We proceed by induction on the length of $\runCS$. In the base case, $|\runCS|=0$. The only value of $\runCS$ such that $|\runCS|=0$ is $\runCS =\varepsilon$. Furthermore, $\varepsilon \in \crun{\Pref(\Valid) \cap \abL{\veca}}$ where $\veca_\ch = \bot$ for all $\ch \in \Ch{\nch}$. The only configuration reachable via $\varepsilon$ is $(\init,\initcntcontents)$. In the FIFO machine, the configuration $(\init,\initcontents)$ is the only configuration reachable via $\frun{\varepsilon} = \varepsilon$. We know that the initial contents is $\initcontents = \fvalue{\initcntcontents}{\veca}$. Hence, the base case is valid.
	
	Let us suppose that
	the statement holds for $\runCS \in \csAlpha{\CS}^\ast$ where $|\runCS|=n$. We show that it is true for $\runCS'$ with $|\runCS'| = n+1$. Let $\veca'$ such that $\runCS' \in \crun{\Pref(\Valid) \cap \abL{\veca'}}$. Then we can write $\runCS' = \runCS.\CSaction$ for some  $\runCS \in \fifoAlpha{\CS}^\ast$ and $\CSaction \in \fifoAlpha{\CS}$.
	There exists $\runFIFO' \in \Pref(\Valid) \cap \abL{\veca'}$ such that $\crun{\runFIFO'} = \runCS'$.
	Furthermore, since $\crun{\,.\,}$ is a homomorphism, we can express $\runFIFO' = \runFIFO.\FIFOaction$ for some $\runFIFO \in \fifoAlpha{\fifo}^\ast$ and $\FIFOaction \in \fifoAlpha{\fifo}$ where $\crun{\FIFOaction} = \CSaction$ and $\crun{\runFIFO} = \runCS$. Since $\Pref(\Valid)$ is prefix-closed, we have
	$\sigma \in \Pref(\Valid) \cap \abL{\veca}$ for some $\veca \in \prod_{\ch \in \Ch{\nch}} \Sigmabot_\ch$.
	Therefore, $\runCS \in \crun{\Pref(\Valid) \cap \abL{\veca}}$. 
	
	Let $(\init,\initcntcontents) \xrightarrow{ \runCS' }_\CS (q', \cscontents')$.
	Note that, by Lemma~\ref{lem:traceobs}, we have $\runCS' \in \BCounters{\CS}$.
	We will prove that $(\init,\initcontents) \xrightarrow{ \frun{\runCS'} }_\fifo (q, \contents')$ where $\contents' = \fvalue{\cscontents'}{\veca'}$. Since $\runCS' = \runCS.\CSaction$, there is $(q, \cscontents) \in \tscomp{\tsstates}{\CS}$ such that $(\init, \initcntcontents) \xrightarrow{\runCS} (q, \cscontents) \xrightarrow{\CSaction} (q', \cscontents')$. Hence, there exists a transition $t = (q, \CSaction, q') \in \fifotransition'$.
	
	By the induction hypothesis, we know that
	$(\init,\initcontents) \xrightarrow{ \smash{\frun{\runCS}} }_\fifo (q, \contents)$ where $	\contents = \fvalue{\cscontents}{\veca}$.

	\subsection*{Case (1):}
	Suppose  $\CSaction =(\inc{\cnt{\ch}{i}},\emptyset)$ for $\ch \in \Ch{\nch}$ and $i \in \{1,\ldots,n_\ch\}$.
	We have $\cscontentspp{\counter} = \cscontentsp{\counter} + 1$ for $\counter = \cnt{\ch}{i}$ and
	$\cscontentspp{\counterb} = \cscontentsp{\counterb}$ and for all $\counterb \in \Cnt \setminus \{\counter\}$.
	
	By construction of $\CS$, we know that there is a transition $t' = (q, \gamma, q') \in \fifotransition$ in $\fifo$ with $\gamma = \sendaction{a}{\ch}$ for some $a \in \Sigma_\ch$ such that $\indexw{\ch}{a} = i$.
	Thanks to the trace property (Definition~\ref{def:normal-form} (2.)), we have
	$\beta = \gamma$.
	Let $\contents'$ be given by $\contentspp{\ch} = \contentsp{\ch}.a$
	and $\contentspp{\chb} = \contentsp{\chb}$ for all $\chb \in \Ch{\nch} \setminus \{\ch\}$.
	Then, $(q,\contents) \xrightarrow{ \beta }_\fifo (q', \contents')$.
	Furthermore, since $\indexw{\ch}{a} = i$ and $\cvalue{\contents} = \cscontents$, we can deduce that  $\cvalue{\contents'} = \cscontents'$. Moreover, recall that $\crun{\runFIFO.\FIFOaction} = \runCS.\CSaction$, hence, $\runFIFO' = \runFIFO.\FIFOaction = \frun{\runCS'}$. 
	
	Since $\runFIFO' = \runFIFO.\FIFOaction$ and $\FIFOaction = \sendaction{a}{\ch}$, we can deduce 
	that $\veca'_\ch = a$ and $\veca'_d = \veca_d$ for all $d \neq \ch$. 	 This is because the only change in the last sent letters between $\runFIFO$ and $\runFIFO'$ is in the channel $\ch$.
	
	We recall that $\cvalue{\contents'} = \cscontents'$. From the induction hypothesis, we know that $\fvalue{\cscontents}{\veca}{} = \contents$. Hence, in order to show that  $\fvalue{\cscontents'}{\veca'}{} = \contents'$, we only need to
	address the case of the channel $\ch$, since the values of $\veca_\chb, \contents_\chb$ remain unchanged for all $\chb \neq \ch$.
	We know that $\contents'_\ch = \contents_\ch.\veca'_\ch$.
	Since $\runFIFO' \in \Pref(\sendL)$, we have $\contents_\ch.\veca'_\ch \in \Inf(L_\ch)$.
	Therefore, $\contents' = \fvalue{\cscontents'}{\veca'}{}$.

	\subsection*{Case (2):}
	Suppose $\CSaction = (\dec{\cnt{\ch}{i}},Z)$,  for $\ch \in \Ch{\nch}$, $i \in \{1,\ldots,n_\ch\}$ and $Z = \{\cnt{\ch}{j} \mid j <i\}$.
	We have $\cscontentspp{\counter} = \cscontentsp{\counter} - 1$ for $\counter = \cnt{\ch}{i}$ and
	$\cscontentspp{\counterb} = \cscontentsp{\counterb}$ for all $\counterb \in \Cnt \setminus \{\counter\}$.
	
	By construction of $\CS$, we know that there is a transition $t' = (q, \gamma, q') \in \fifotransition$ in $\fifo$ with $\gamma = \recaction{a}{\ch}$ for some $a \in \Sigma_\ch$ such that $\indexw{\ch}{a} = i$.
	Again, by the trace property (Definition~\ref{def:normal-form} (2.)), we get
	$\beta = \gamma$.
	In order to execute $t'$ from $(q, \contents)$, it is necessary that we have $\contents_\ch = a.u$ for some word $u$.

	Since $\runFIFO.\FIFOaction \in \Pref(\recL)$ and  $\projrec{\runFIFO}{\ch}.\contentsp{\ch} = \projsend{\runFIFO}{\ch}$, we can deduce that $\contentsp{\ch} = a.u$ for some word $u$. 	Hence, the transition $t'$ can be executed to reach a configuration $(q', \contents')$ such that $\contentsp{\ch} = a \cdot \contentspp{\ch}$,
	and $\contentspp{\chb} = \contentsp{\chb}$ for all $\chb \in \Ch{\nch} \setminus \{\ch\}$.  Furthermore, since $\indexw{\ch}{a} = i$ and $\cvalue{\contents} = \cscontents$, we can deduce that  $\cvalue{\contents'} = \cscontents'$. Moreover, recall that $\crun{\runFIFO.\FIFOaction} = \runCS.\CSaction$, hence, $\runFIFO' = \runFIFO.\FIFOaction = \frun{\runCS'}$. 
	
	Since $\runFIFO' = \runFIFO.\FIFOaction$ and $\FIFOaction = \recaction{a}{\ch}$, we can deduce that $\veca' = \veca$. This is because no letters are sent
	between $\runFIFO$ and $\runFIFO'$.
	
	We also know from the  induction hypothesis that $\contents = \fvalue{\cscontents}{\veca}{}$. Also recall that $\cvalue{\contents'} = \cscontents'$. In order to show that  $\contents' = \fvalue{\cscontents'}{\veca'}{}$, we only need to address the case of the channel $\ch$, since the values of $\veca_\chb, \contents_\chb$ remain unchanged for all $\chb \neq \ch$. 
	
	We know from the induction hypothesis that $\contents_\ch$ is contained in $\Inf(L_\ch)$ and, thus, so is $\contents_\ch'$.		
	Furthermore, we know that $\contents_\ch = u. \veca'_\ch$ for some $u \in \Sigma_\ch^\ast$. If $u = \varepsilon$, then we can deduce that $\contents'_\ch = \varepsilon$. On the other hand, if $u \neq \varepsilon$, then we know that $\contents'_\ch = u'.\veca'_\ch$ such that $a.u' = u$. We also recall that $\cvalue{\contents'} = \cscontents'$. Hence, $\contents' = \fvalue{\cscontents'}{\veca'}{}$.
	\qedhere
\end{proof}

From Propositions~\ref{prop:fifo-cm} and \ref{prop:cm-fifo} and Lemma~\ref{lem:traceobs}, we obtain the following corollary.

\begin{cor}\label{reachprop}
	For all $(q, \contents) \in \tscomp{\tsstates}{\fifo}$,
	we have:
	$
	(q, \contents) \in \reachsetL{\fifo}{\sendL}
	\Longleftrightarrow
	(q, \cvalue{\contents}) \in
	\textstyle \reachsetL{\CS}{\BCounters{\CS} \cap \cvalue{\Valid
			\cap \bigcup_{\veca \in \pairs{\contents}} \abL{\veca}}}\,.
	$
\end{cor}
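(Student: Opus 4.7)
The plan is to derive both directions of the biconditional by a direct combination of Propositions~\ref{prop:fifo-cm} and \ref{prop:cm-fifo} together with Lemma~\ref{lem:traceobs}; no new induction should be needed.

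For the forward direction, I would start with a witness $\runFIFO \in \sendL$ such that $(\init,\initcontents) \xrightarrow{\runFIFO}_\fifo (q,\contents)$. The first step is to upgrade membership from $\sendL$ to $\Valid = \sendL \cap \Pref(\recL)$: because the FIFO policy forces every received message to have been sent earlier, any $\runFIFO$ executable in $\fifo$ lies in $\Pref(\recL)$, hence $\runFIFO \in \Valid$. Next, I would define $\veca \in \prod_{\ch \in \Ch{\nch}} \Sigmabot_\ch$ as the tuple recording, for each channel $\ch$, the last message sent on $\ch$ in $\runFIFO$ (with $\bot$ if none), giving $\runFIFO \in \abL{\veca}$ by construction. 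Applying Proposition~\ref{prop:fifo-cm} then delivers both $(\init,\initcntcontents) \xrightarrow{\crun{\runFIFO}}_\CS (q,\cvalue{\contents})$ and $\veca \in \pairs{\contents}$. Together with Lemma~\ref{lem:traceobs}, which yields $\crun{\runFIFO} \in \BCounters{\CS}$, we get $\crun{\runFIFO} \in \BCounters{\CS} \cap \cvalue{\Valid \cap \abL{\veca}} \subseteq \BCounters{\CS} \cap \cvalue{\Valid \cap \bigcup_{\veca' \in \pairs{\contents}} \abL{\veca'}}$, which is exactly the desired membership.

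For the backward direction, I would unpack the hypothesis to obtain $\runCS \in \BCounters{\CS}$ with $(\init,\initcntcontents) \xrightarrow{\runCS}_\CS (q,\cvalue{\contents})$ and $\runCS = \crun{\runFIFO}$ for some $\runFIFO \in \Valid \cap \abL{\veca}$ and some $\veca \in \pairs{\contents}$. Since $\Valid \subseteq \Pref(\Valid)$, we have $\runCS \in \crun{\Pref(\Valid) \cap \abL{\veca}}$, so Proposition~\ref{prop:cm-fifo} applies and yields $(\init,\initcontents) \xrightarrow{\frun{\runCS}}_\fifo (q,\fvalue{\cvalue{\contents}}{\veca})$. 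The one routine check is that $\fvalue{\cvalue{\contents}}{\veca} = \contents$: this is immediate from the uniqueness property of $\fvalue{\cdot}{\cdot}$ noted just after its definition, because $\veca \in \pairs{\contents}$ and $\cvalue{\contents}$ is by definition the counter abstraction of $\contents$. Finally, since $\frun{\runCS} \in \Valid \subseteq \sendL$, we conclude $(q,\contents) \in \reachsetL{\fifo}{\sendL}$.

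The only substantive bookkeeping — and the point where one has to be a bit careful — is ensuring that the language on the right-hand side is the correct one: we need the union over $\veca \in \pairs{\contents}$ rather than a fixed $\veca$, because in the backward direction the $\veca$ is supplied by the witness $\runCS$ and need not be predetermined, while in the forward direction we need to exhibit a specific $\veca$ in $\pairs{\contents}$. Propositions~\ref{prop:fifo-cm} and \ref{prop:cm-fifo} are stated precisely so that both quantifications line up with this union, so there is no real obstacle, merely a careful matching of quantifiers.
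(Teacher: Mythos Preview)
Your proposal is correct and follows exactly the approach the paper intends: the corollary is stated without proof, merely as a consequence of Propositions~\ref{prop:fifo-cm} and~\ref{prop:cm-fifo} and Lemma~\ref{lem:traceobs}, and your argument spells out precisely this combination. The paper's proof of the closely related Corollary~\ref{corr:reacheq} proceeds along the same lines (choose $\veca$ from the witness, apply the appropriate proposition, use $\fvalue{\cvalue{\contents}}{\veca} = \contents$ via $\veca \in \pairs{\contents}$), so your derivation matches both the spirit and the details.
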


From Theorem~\ref{thm:reach-counters}, we know that verifying whether $(q, \cvalue{\contents}) \in \reachsetL{\CS}{\BCounters{\CS} \cap L} $ where $L=  \crun{\Valid
	\cap \bigcup_{\veca \in \pairs{\contents}} \abL{\veca}}$ is decidable. Hence, we can
already deduce decidability of the (configuration-)reachability problem.
In fact, using Propositions~\ref{prop:fifo-cm} and \ref{prop:cm-fifo},
we can solve the more general \IBounded rational-reachability problem.
For this, it is actually enough to check, in the counter machine, the reachability of a counter value that belongs to a semi-linear set. For $\veca \in \prod_{\ch \in \Ch{\nch}} \Sigmabot_\ch$ and
a rational relation $\Rel \subseteq \ChContents$,
let $\TestL{\veca}{} = \{\cscontents \in \N^\Cnt \mid \fvalue{\cscontents}{\veca}{} \in \textscrb{R}\}$.

\begin{lem}
	\label{lem:semilinear}
	The set $\TestL{\veca}{}$ is effectively semi-linear.
\end{lem}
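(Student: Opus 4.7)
The plan is to exhibit $\TestL{\veca}{}$ as the image, under an $\N$-linear map, of the Parikh image of a suitable rational relation, and then invoke standard closure properties.

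First, I unfold the definitions: $\cscontents \in \TestL{\veca}{}$ holds iff there is some $\contents \in \Rel$ with $\cvalue{\contents} = \cscontents$ and $\veca \in \pairs{\contents}$. The condition $\veca \in \pairs{\contents}$ decomposes channel-wise; setting
\[
\Rel_\veca \;=\; \prod_{\ch \in \Ch{\nch}} R_\ch^\veca,
\qquad\text{where } R_\ch^\veca = \{w \in \Sigma_\ch^\ast \mid \veca_\ch \text{ is good for } w\},
\]
we have $\contents \in \Rel_\veca$ iff $\veca \in \pairs{\contents}$. As observed in the paragraph introducing ``good for'', each $R_\ch^\veca$ is a regular language effectively constructible from an automaton for $L_\ch$, so $\Rel_\veca$ is a \emph{recognizable} relation. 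Consequently $\TestL{\veca}{}$ is precisely the image of $\Rel \cap \Rel_\veca$ under the counter-encoding map $\contents \mapsto \cvalue{\contents}$.

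Next, I invoke the classical Mezei-style closure property: rational relations are closed under intersection with recognizable relations, and a transducer for $\Rel \cap \Rel_\veca$ can be obtained effectively as a product of the transducer for $\Rel$ with the automata for the $R_\ch^\veca$. Hence $\Rel \cap \Rel_\veca$ is rational, and by the fact already recalled in the preliminaries, its Parikh image $\Parikhimg{\Rel \cap \Rel_\veca} \subseteq \N^\Sigma$ is effectively semi-linear.

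Finally, the encoding $\cvalue{\,\cdot\,}$ depends only on Parikh data: let $\phi : \N^\Sigma \to \N^\Cnt$ be the $\N$-linear map sending $(\pi_a)_{a \in \Sigma}$ to the tuple whose $(\cnt{\ch}{i})$-component is $\sum_{a \in \ciAlph{\ch}{i}} \pi_a$. Then $\cvalue{\contents} = \phi(\Parikhimg{\{\contents\}})$ for every $\contents \in \ChContents$, so
\[
\TestL{\veca}{} \;=\; \phi\bigl(\Parikhimg{\Rel \cap \Rel_\veca}\bigr).
\]
Since semi-linear sets are effectively closed under $\N$-linear images (apply $\phi$ to the basis and the periods of each linear component), $\TestL{\veca}{}$ is effectively semi-linear. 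No step poses a serious obstacle; the key conceptual observation is simply that the side condition imposed by $\veca$ is \emph{recognizable} (not merely rational), which is exactly what keeps its intersection with $\Rel$ within the rational class.
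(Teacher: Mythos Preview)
Your proof is correct and follows essentially the same approach as the paper: the paper likewise defines the recognizable relation $\Good = \prod_\ch \mathcal{G}_\ch$ of tuples for which $\veca$ is good, intersects it with $\Rel$ to obtain a rational relation, takes its Parikh image, and then applies the linear aggregation $\pi \mapsto (\sum_{a \in \ciAlph{\ch}{i}} \pi_a)_{\cnt{\ch}{i}}$. The only cosmetic difference is that the paper phrases the last step via an explicit Presburger formula rather than invoking closure of semi-linear sets under $\N$-linear images.
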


\begin{proof}
	\label{app:semilinear-proof}
	For $\ch \in \Ch{\nch}$, let $\mathcal{G}_\ch$ be the set of words $w \in \Sigma_\ch^\ast$
	such that $\veca_\ch$ is good for $w$.
	Moreover, let $\Good = \prod_{\ch \in \Ch{\nch}} \mathcal{G}_\ch$.

	As $\mathcal{G}_\ch$ is regular for every $\ch \in \Ch{\nch}$, the relation
	$\Good$ is recognizable. As the intersection of a rational
	and a recognizable relation is rational, we have that
	$\Rel \cap \Good$ is rational. It follows that
	$\Parikhimg{\Rel \cap \Good}$ is semi-linear.
	From the definitions, we obtain
	\[\begin{array}{rlll}
		
		\fvalue{\cscontents}{\veca}{} \in \textscrb{R} & \Longleftrightarrow &
		\exists \contents \in \Rel \cap \Good:
		&
		\forall \cnt{\ch}{i} \in \Cnt:
		\cscontents_{\cnt{\ch}{i}} = \sum_{a \in \ciAlph{c}{i}} |\contents_\ch|_a
		\\
		& \Longleftrightarrow &
		\exists \pi \in \Parikhimg{\Rel \cap \Good}:\!\!\!
		&
		\forall \cnt{\ch}{i} \in \Cnt:
		\cscontents_{\cnt{\ch}{i}} = \sum_{a \in \ciAlph{c}{i}} \pi_a
	\end{array}\]

	Thus,
	\[
	\TestL{\veca}{} = \{\cscontents \in \N^\Cnt \mid \fvalue{\cscontents}{\veca}{} \in \textscrb{R}\}
	=
	\textstyle \bigl\{\bigl(\sum_{a \in \ciAlph{\ch}{i}} \pi_a\bigr)_{\cnt{\ch}{i} \in \Cnt} \mid \pi \in \Parikhimg{\Rel \cap \Good}\bigr\}\,.
	\]
	Let $\varphi((X_a)_{a \in \Sigma})$ be a Presburger formula defining $\Parikhimg{\Rel \cap \Good}$. Then, by the above equivalence, the following Presburger formula defines $\TestL{\veca}{\vecb}$:
	\[\psi_\ch((Z_y)_{y \in \Cnt}) =
	\exists (X_a)_{a \in \Sigma}:
	\Bigl(
	\varphi((X_a)_{a \in \Sigma})
	\wedge
	\bigwedge_{y \in \Cnt}
	Z_{y} = \sum_{a \in \cAlph{y}} X_a
	\Bigl)
	\]
	where, for $y = \cnt{\ch}{i} \in \Cnt$, we let $\cAlph{y} = \ciAlph{\ch}{i}$.
	It follows that $\TestL{\veca}{\vecb}$ is effectively semi-linear.
\end{proof}

Using this property, we finally reduce the \IBounded rational-reachability problem
to a reachability problem in counter machines:

\begin{cor}
	\label{corr:reacheq}
	For every $q \in \fifostates$, we have:
	$(q, \contents) \in \reachsetL{\fifo}{\sendL}
	\text{ for some $\contents \in \textscrb{R}$}
	\Longleftrightarrow 
	(q, \cscontents) \in
	\textstyle \reachsetL{\CS}{\BCounters{\CS} \cap \crun{\Valid
			\cap  \abL{\veca}{}}}
	\textup{ for some } \veca \in \prod_{\ch \in \Ch{\nch}} \Sigmabot_\ch \text{ and } \cscontents \in \TestL{\veca}{}\,.$
\end{cor}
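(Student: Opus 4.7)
The plan is to derive Corollary~\ref{corr:reacheq} by combining the forth-and-back simulation already established in Propositions~\ref{prop:fifo-cm} and \ref{prop:cm-fifo}, together with Lemma~\ref{lem:traceobs} to keep traces inside $\BCounters{\CS}$. Both directions are essentially bookkeeping on top of the simulation machinery, so the proof is short; the only non-trivial ingredient I would need to set up carefully is the coherent choice of $\veca$, which is what links the two sides of the equivalence.

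For the forward direction, I would pick a witnessing run $(\init,\initcontents) \xrightarrow{\runFIFO}_\fifo (q,\contents)$ with $\runFIFO \in \sendL$ and $\contents \in \textscrb{R}$. The FIFO discipline forces $\runFIFO \in \Pref(\recL)$, so $\runFIFO \in \Valid$. Define $\veca \in \prod_{\ch \in \Ch{\nch}} \Sigmabot_\ch$ by letting $\veca_\ch$ be the last message sent to $\ch$ along $\runFIFO$ (and $\bot$ if nothing was sent on $\ch$). Then $\runFIFO \in \abL{\veca}$ by construction, and $\veca \in \pairs{\contents}$, since each $\contents_\ch$ is a suffix of the word in $L_\ch$ that was written to $\ch$ and therefore either is empty or ends with $\veca_\ch$. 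Applying Proposition~\ref{prop:fifo-cm} yields $(\init,\initcntcontents) \xrightarrow{\crun{\runFIFO}}_\CS (q,\cvalue{\contents})$, and Lemma~\ref{lem:traceobs} ensures $\crun{\runFIFO} \in \BCounters{\CS}$. Setting $\cscontents = \cvalue{\contents}$, I get $\fvalue{\cscontents}{\veca} = \contents \in \textscrb{R}$, hence $\cscontents \in \TestL{\veca}{}$, and $\crun{\runFIFO} \in \crun{\Valid \cap \abL{\veca}}$ by choice of $\veca$.

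For the backward direction, I would start from $\veca$ and $\cscontents \in \TestL{\veca}{}$ together with a witness $\runCS \in \BCounters{\CS} \cap \crun{\Valid \cap \abL{\veca}}$ such that $(\init,\initcntcontents) \xrightarrow{\runCS}_\CS (q,\cscontents)$. By definition of $\crun{\,.\,}$ and since $\Valid \subseteq \Pref(\Valid)$, we have $\runCS \in \crun{\Pref(\Valid) \cap \abL{\veca}}$. Proposition~\ref{prop:cm-fifo} then gives $(\init,\initcontents) \xrightarrow{\frun{\runCS}}_\fifo (q,\fvalue{\cscontents}{\veca})$, and $\frun{\runCS}$ is the unique word in $\Pref(\Valid)$ with $\crun{\frun{\runCS}} = \runCS$, hence $\frun{\runCS} \in \Valid \subseteq \sendL$. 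Finally, $\cscontents \in \TestL{\veca}{}$ gives $\fvalue{\cscontents}{\veca} \in \textscrb{R}$, which is the required witness $\contents$.

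The main (very mild) obstacle is the consistency check that the $\veca$ chosen in the forward direction actually satisfies $\veca \in \pairs{\contents}$ and that the value $\fvalue{\cscontents}{\veca}$ on the counter side is a well-defined partial function whose image lies in $\textscrb{R}$; both facts are ensured by the normal-form assumption and the definition of $\TestL{\veca}{}$. Once Corollary~\ref{corr:reacheq} is in place, the \IBounded rational-reachability problem reduces to finitely many instances (one per $\veca$) of the counter-reachability problem of Theorem~\ref{thm:reach-counters}, since $\TestL{\veca}{}$ is effectively semi-linear by Lemma~\ref{lem:semilinear} and $\crun{\Valid \cap \abL{\veca}}$ is regular as a homomorphic image of a regular language.
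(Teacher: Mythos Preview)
Your proof is correct and follows essentially the same approach as the paper: both directions combine Propositions~\ref{prop:fifo-cm} and~\ref{prop:cm-fifo} with Lemma~\ref{lem:traceobs}, choosing $\veca$ as the tuple of last-sent messages and using the definition of $\TestL{\veca}{}$ to pass between $\contents \in \textscrb{R}$ and $\cscontents \in \TestL{\veca}{}$. The only cosmetic difference is that the paper obtains $\veca \in \pairs{\contents}$ directly from the conclusion of Proposition~\ref{prop:fifo-cm} rather than arguing it separately, but the logic is the same.
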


\begin{proof}
	\label{app:reach-eq-proof}
	Suppose $(q, \contents) \in \reachsetL{\fifo}{\sendL}$ with
	$\contents \in \textscrb{R}$.
	There are $\veca \in \prod_{\ch \in \Ch{\nch}} \Sigmabot_\ch$ and $\runFIFO \in \Valid \cap \abL{\veca}{}$ such that
	$(\init,\initcontents) \xrightarrow{ \runFIFO }_\fifo (q, \contents)$.
	By Proposition~\ref{prop:fifo-cm} and Lemma~\ref{lem:traceobs}, we have
	$(\init,\initcntcontents) \xrightarrow{\crun{\runFIFO}}_\CS (q, \crun{\contents})$
	and $\crun{\runFIFO} \in \BCounters{\CS}$ and $\veca \in \pairs{\contents}$.
	By definition, the latter implies $\contents = \fvalue{\cvalue{\contents}}{\veca}{}$
	and hence $\cvalue{\contents} \in \TestL{\veca}{}$.
	
	\smallskip
	
	Conversely, suppose we have
	$(\init,\initcntcontents) \xrightarrow{\crun{\runFIFO}}_\CS (q, \chcontents)$
	where $\runFIFO \in \Valid
	\cap  \abL{\veca}{}$, $\crun{\runFIFO} \in \BCounters{\CS}$,
	and $\cscontents \in \TestL{\veca}{}$.
	By Proposition~\ref{prop:cm-fifo}, we get
	$(\init,\initcontents) \xrightarrow{ \frun{\crun{\runFIFO}} }_\fifo (q, \fvalue{\cscontents}{\veca}{})$.
	Note that $\frun{\crun{\runFIFO}} = \runFIFO \in \sendL$.
	Moreover, $\cscontents \in \TestL{\veca}{}$ implies $\fvalue{\cscontents}{\veca}{} \in \textscrb{R}$, which concludes the proof.
\end{proof}

By Theorem~\ref{thm:reach-counters}, we can now deduce
Theorem~\ref{thm:general-I-bounded-reach}, i.e.,
decidability of \IBounded rational-reachability.

\section{Reachability and Deadlock}\label{sec:variants}

We now address some other commonly studied reachability problems,
which, as it turns out, can be reduced to the \IBounded
rational-reachability problem studied in the previous section.

A configuration $(q,\contents)$ of a FIFO machine $\fifo$
is a \emph{deadlock} if there is no $(q',\contents')$ such that $(q,\contents) \tscomp{\tstrans}{\fifo} (q',\contents')$.

\begin{defi}[\IBounded decision problems]
	Given a FIFO machine $\fifo = (\fifostates, \Ch{\nch}, \alphabet, \fifotransition, \init)$, a control-state $q \in Q$, a configuration $s \in \tscomp{\tsstates}{\fifo}$, and a tuple
	$\boundedL = (L_\ch)_{\ch \in \Ch{\nch}}$ of non-empty regular bounded languages $L_\ch \subseteq \Sigma_\ch^\ast$.
	\begin{itemize}
		\item \emph{\IBounded reachability}: Do we have $s \in \reachsetL{\fifo}{\sendL}$\,?
		\item \emph{\IBounded control-state reachability}: Do we have
		$(q,\contents) \in \reachsetL{\fifo}{\sendL}$ for some $\contents$\,?
		\item \emph{\IBounded deadlock}: Does $\reachsetL{\fifo}{\sendL}$ contain a deadlock\,?

	\end{itemize}
\end{defi}

In \cite{DBLP:conf/concur/FinkelP19}, it was shown that reachability reduces to control-state reachability for flat FIFO machines but the converse is not true. However, using the same reductions as in \cite{DBLP:conf/concur/FinkelP19}, we obtain
the following results: 

\begin{prop}
	\label{prop:reach-equiv-csr}
	\IBounded  reachability is
	\begin{enumerate}
		\item[\textup{(a)}] recursively equivalent to  \IBounded control-state reachability for FIFO machines, and
		\item[\textup{(b)}] recursively reducible to \IBounded deadlock for FIFO machines.
	\end{enumerate}
\end{prop}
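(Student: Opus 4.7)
The plan is to adapt the standard reductions for flat FIFO machines from \cite{DBLP:conf/concur/FinkelP19} to the input-bounded setting, using two useful features of our framework: first, the restriction $\sendL$ constrains only the send projection $\projsendmap{\ch}$ of a run, so additional receive transitions can be introduced without altering the admissible sequences of actions; and second, bounded languages are closed under appending a single fresh letter, which lets us introduce per-channel sentinels while staying in the IB class.

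For (a) I handle both directions separately. For \IBounded reachability $\Rightarrow$ \IBounded control-state reachability, given the target configuration $(q,\contents)$ I introduce a fresh sentinel letter $\$_\ch \notin \Sigma_\ch$ for every channel and extend each bounded language to $L_\ch' := L_\ch \cdot (\varepsilon + \$_\ch)$, which is still bounded (over the tuple $(w_{\ch,1},\ldots,w_{\ch,n_\ch},\$_\ch)$). At the original state $q$ I attach, through fresh intermediate states, a deterministic gadget that first executes $\sendaction{\$_\ch}{\ch}$ on every channel in some fixed order, then sequentially receives $\contents_\ch \cdot \$_\ch$ on each channel, and finally enters a fresh control state $\qreach$. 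Because the sentinels are emitted exclusively by this gadget, the final receive of $\$_\ch$ on channel $\ch$ succeeds precisely when channel $\ch$ contained exactly $\contents_\ch$ at the moment the gadget was entered; hence $(q,\contents) \in \reachsetL{\fifo}{\sendL}$ iff $\qreach$ is \IBounded control-state reachable in the modified machine. Conversely, for \IBounded control-state reachability $\Rightarrow$ \IBounded reachability, given the target state $q$ I form $\fifo'$ by adding a self-loop $\recaction{a}{\ch}$ at $q$ for every $\ch \in \Ch{\nch}$ and $a \in \Sigma_\ch$; these are pure receives, so $\boundedL$ is unchanged, and from any configuration $(q,\contents)$ reachable in $\fifo'$ one can fire these self-loops in FIFO order to arrive at $(q,\initcontents)$. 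Since the new transitions are attached at $q$, they cannot be used to reach $q$ itself, so \IBounded control-state reachability of $q$ in $\fifo$ is equivalent to \IBounded reachability of $(q,\initcontents)$ in $\fifo'$.

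For (b) I extend the reachability-to-CSR construction above and make $\qreach$ the unique potentially-deadlocked state of a further machine $\fifo''$. Concretely, I introduce a fresh channel $c'$ with singleton alphabet $\{\clive\}$ and bounded language $\clive^\ast$, attach a self-loop $\sendaction{\clive}{c'}$ to every control state of $\fifo''$ other than $\qreach$, and declare $\qreach$ to be a sink state $\qdead$ with no outgoing transitions. Since channels are unbounded, the tick self-loop is always enabled, so no reachable configuration whose control state differs from $\qdead$ can be a deadlock. Moreover, the ticks do not interact with the original channels and their projection on $c'$ lies in $\clive^\ast$, so they respect the extended IB restriction. It follows that a deadlock is \IBounded reachable in $\fifo''$ iff $\qdead$ is \IBounded reachable in $\fifo''$ iff $(q,\contents) \in \reachsetL{\fifo}{\sendL}$, yielding the desired recursive reduction of \IBounded reachability to \IBounded deadlock.

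The main point of care throughout is that each modification to the bounded languages must preserve boundedness, which is immediate here because we only append a single fresh letter per original channel and use $\clive^\ast$ on the new one. The principal conceptual obstacle is that FIFO machines cannot directly test a channel for emptiness: this is what makes the reachability-to-CSR direction non-trivial, and the sentinel trick is the standard way around it; once that is in place, (b) follows almost immediately via the tick self-loops that make every non-target state non-deadlocking.
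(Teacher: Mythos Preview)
Your proposal is correct and follows essentially the same approach as the paper: the sentinel-and-drain gadget for reachability $\Rightarrow$ control-state reachability, the receive self-loops at $q$ for the converse, and the fresh ``tick'' channel with send self-loops everywhere except the target state for part~(b). The only cosmetic differences are that the paper uses a single fresh symbol $\$$ (the channel alphabets being disjoint already) and a slightly different extended bounded language $L_\ch \cdot \contents_\ch \cdot \$$, whereas your choice $L_\ch \cdot (\varepsilon + \$_\ch)$ is arguably cleaner and your justification that the backward direction recovers membership in $L_\ch$ (because $\$_\ch \notin \Sigma_\ch$) is explicit.
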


\begin{proof}\label{app:reach-eq-csr-proof}
	We show both parts of the lemma.
	\subsection*{Part (a):}
	Let us consider a FIFO machine $\fifo  = (\fifostates, \Ch{\nch}, \alphabet, \fifotransition, \init)$
	with $\Ch{\nch} = \{1,\ldots,\nch\}$, a control-state $q$, a configuration $(q,\contents)$, and a tuple $\boundedL= (L_\ch)_{\ch \in \Ch{\nch}}$ of non-empty regular bounded languages $L_\ch \subseteq \alphabet_\ch^\ast$.
	Suppose $\contents = (\contents_\ch)_{\ch \in \Ch{\nch}}$
	with $\contents_\ch = \contents_\ch^1 \ldots \contents_\ch^{n_\ch}$.

\medskip

	We first reduce \IBounded reachability to \IBounded control-state reachability. Another machine $\fifo_{(q,\contents)}  = (\fifostates', \Ch{\nch}, \alphabet', \fifotransition', \init)$ is constructed as follows.
	We set $\fifostates' = \fifostates \cup \{q_{\text{end}}\}$ such that $q_{\text{end}} \notin \fifostates$, and $\alphabet' = \alphabet \cup \{\$\}$ such that $\$ \not\in \alphabet$.
	Moreover, a ``path'' $q \xrightarrow{\runFIFO} q_{\text{end}}$ with $\runFIFO = \runFIFO_1\ldots \runFIFO_\nch$ is added from the control state $q$ as follows. For $\ch \in \Ch{\nch}$, we have
	\[
	\runFIFO_\ch =
	\begin{cases}
		\sendaction{\$}{c} \recaction{\contents_\ch^1}{c} \ldots \recaction{\contents_\ch^{n_\ch}}{c} \recaction{\$}{c} & \text{if } |\contents_\ch| > 0,\\
		\sendaction{\$}{c} \recaction{\$}{c} & \text{otherwise. }
	\end{cases}\] 

Then, $(q,\contents) \in \reachsetL{\fifo}{\sendL}$ iff $(q_{\text{end}},\contents') \in \reachsetL{\fifo_{(q,\contents)}}{\sendL'}$ for \emph{some} $\contents'$, where $\boundedLp= (L_\ch.\contents_\ch.\$)_{\ch \in \Ch{\nch}}$. Furthermore, $\boundedLp$ is bounded if $\boundedL$ is bounded, since concatenation of a finite word with a bounded language results in a bounded language. Therefore, \IBounded reachability reduces to \IBounded control-state reachability for FIFO machines.

\medskip

Conversely, in order to show that \IBounded control-state reachability is reducible to \IBounded reachability, we construct $\fifo_{q}$ as follows. To $\fifo$, we add $|\Sigma| \times \nch$ self-loops from and to the control state $q$ as follows: $q \xrightarrow{\ch?a} q$ for all $\ch \in \Ch{\nch}$ and $a \in \alphabet_\ch$. 
	
	The control-state $q$ is reachable in $\fifo$ iff there exists $\contents$ such that $(q,\contents)$ is reachable in $\fifo$ iff $(q,\initcontents)$ is reachable in $\fifo_{q}$. Furthermore, consider $\runFIFO \in  \Pref(\sendL)$ such that $ (\init,\initcontents) \xrightarrow{ \runFIFO }_\fifo (q, \contents)$ for some channel contents $\contents$. Let us append to $\runFIFO$ a sequence of actions $\runFIFO' = \runFIFO_1 \ldots \runFIFO_\nch$ such that $\runFIFO_\ch = \recaction{\contents_\ch}{\ch}$ for all $\ch \in \Ch{\nch}$, where $\recaction{\contents_\ch}{\ch}$ is to be understood as a sequence of transitions whose effect is to consume the string $\contents_\ch$ from the channel $\ch$. By construction, we have, $ (\init,\initcontents) \xrightarrow{ \runFIFO\cdot \runFIFO' }_{\fifo_q} (q, \initcontents)$.  Furthermore, $\projsend{\runFIFO}{\ch} = \projsend{\runFIFO\cdot\runFIFO'}{\ch}$ for all $\ch \in \Ch{\nch}$. Hence, $\runFIFO\cdot \runFIFO' \in  \Pref(\sendL)$ and we can conclude that $(q, \contents) \in \reachsetL{\fifo}{\boundedL_!}$ iff $(q, \initcontents) \in \reachsetL{\fifo_{q}}{\boundedL_!}$.

	Therefore, \IBounded control-state reachability reduces to \IBounded reachability for FIFO machines.

	\subsection*{Part (b):}
	
	Given a FIFO machine $\fifo = (\fifostates, \Ch{\nch}, \alphabet, \fifotransition, \init)$, a configuration $(q, \contents)$, and a tuple $\boundedL = (L_\ch)_{\ch \in \Ch{\nch}}$ of non-empty regular bounded languages $L_\ch \subseteq \Sigma_\ch^\ast$, we construct $\fifo_{(q,\contents)}$ as in the case of reducing reachability to control-state reachability (see proof of Proposition~\ref{prop:reach-equiv-csr}(a)). We then modify $\fifo_{(q,w)}$ to $\fifo'$ as follows. We add a new channel $\clive$ to the existing set of channels $\Ch{\nch}$ (the set of channels is now $\Ch{\nch}'$). For all $q \neq \qdead$, we add the following transition:  $(q, \sendaction{\$}{\clive}, q)$. Hence, except for $\qdead$, every control state has at least one send action. Finally, we also construct a new tuple $\boundedLp = (L'_\ch)_{\ch \in \Ch{\nch}'}$ such that $L'_\ch = L_\ch.\$$ for all $\ch \in \Ch{\nch}$ and $L'_\clive = \$^\ast$.

We claim that $(q, \contents) \in \reachsetL{\fifo}{\sendL}$ iff $\reachsetL{\fifo'}{\sendL'}$ contains a deadlock.
	To see this, first, we observe that, if there is a deadlock in $\reachsetL{\fifo'}{\sendL'}$, then the associated control state would be $\qdead$ since if we are in any configuration $s'$ such that the associated control state $q'$ is not $\qdead$, the transition  $(q', \sendaction{\$}{\clive}, q')$ can always be taken, and hence, there will never be a deadlock. 
	
	Let us now suppose that the configuration $(q, \contents)$ is in $\reachsetL{\fifo}{\sendL}$. We can execute the same set of transitions as in $\fifo$ and reach the control state $q$ with the channel contents $\contents$ via an execution $\runFIFO'$ in $\fifo'$. Having done that, we can then execute the path $q \xrightarrow{\runFIFO} \qdead$ as described in $\fifotransition'$ in order to reach $\qdead$. Also observe that $\runFIFO' \cdot \runFIFO \in \sendL'$. Since there are no transitions from this control state, we reach a deadlock. 
	
	Suppose now that $(q, \contents)$ is not in $\reachsetL{\fifo}{\sendL}$. Hence, we cannot reach $(q, \contents)$ in $\fifo'$ and thus, cannot execute $q \xrightarrow{\runFIFO} \qdead$. Furthermore, as we saw previously, we can never be in a deadlock as we can always send $\$$ to the channel $\clive$. 
\end{proof}

\medskip

If, for a given $q \in \fifostates$, we set $\Rel$ to be the universal relation $\ChContents$, \IBounded  rational-reachability captures \IBounded control-state reachability, and if we set $\Rel = \{\contents\}$, we can decide if the configuration $(q, \contents)$ is reachable. In order to reduce \IBounded deadlock to \IBounded rational-reachability, we first explore the control states in order to find the set of states $\fifostates' \subseteq \fifostates$ which allow only receptions (no control states with sends can be part of a deadlock). This set can easily be computed from the set of transitions of the machine. Then, for each $q \in \fifostates'$, we can see if there exists a reachable configuration $(q, \contents)$ such that, for all $\ch$, we have $\contents_\ch \in K_\ch = \{\varepsilon\} \cup \{a.u \mid u \in \Sigma_\ch^\ast \text{ and }a \in \Sigma_\ch$ such that
there is no transition $ (q, \recaction{a}{\ch}, q')$ in $\fifo\}$.
Note that $\Rel_q = \prod_{\ch \in \Ch{\nch}} K_\ch$ is recognizable and, therefore, rational. Furthermore, if there exists such a reachable $(q, \contents)$ with $\contents \in \Rel_q$, then it is a deadlock.
Hence, using the fact that generalized \IBounded  rational-reachability is decidable (Theorem \ref{thm:general-I-bounded-reach}), we immediately deduce the following corollary:

\begin{cor}
	The problems \IBounded reachability, \IBounded control-state reachability, and \IBounded deadlock are decidable for FIFO machines.
\end{cor}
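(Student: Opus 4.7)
The plan is to derive the corollary directly from Theorem \ref{thm:general-I-bounded-reach} by encoding each of the three decision problems as an instance of \IBounded rational-reachability, choosing an appropriate rational relation $\Rel \subseteq \ChContents$ in each case. Since the heavy lifting (the simulation by counter machines and the semi-linear check of Lemma \ref{lem:semilinear}) is already available, the remaining work is essentially a matter of specifying $\Rel$ and verifying that it is indeed rational.

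First, for \IBounded reachability of a given configuration $(q,\contents)$, I would instantiate the rational-reachability query with $\Rel = \{\contents\}$. A singleton tuple of words is trivially a rational relation (witnessed by the single-word language consisting of the corresponding letter-tuples in $\Theta^\ast$). For \IBounded control-state reachability at $q$, I would take $\Rel = \ChContents$, the universal relation, recognized by the product $\prod_{\ch \in \Ch{\nch}} \Sigma_\ch^\ast$; this is recognizable and hence rational. Both cases then reduce to a single call to the decision procedure of Theorem \ref{thm:general-I-bounded-reach}.

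For \IBounded deadlock, the plan is the two-step reduction sketched in the discussion preceding the corollary. First, I compute, from the transition relation of $\fifo$, the set $\fifostates' \subseteq \fifostates$ of control states that have no outgoing send transition (only such states can be the control component of a deadlock, since a send action is always enabled regardless of channel contents). Then, for each $q \in \fifostates'$, I construct the rational relation $\Rel_q = \prod_{\ch \in \Ch{\nch}} K_\ch$ where $K_\ch = \{\varepsilon\} \cup \{a \cdot u \mid a \in \Sigma_\ch,\ u \in \Sigma_\ch^\ast,\ \text{no transition } (q,\recaction{a}{\ch},q') \in \fifotransition\}$. Each $K_\ch$ is regular (a finite union of regular sets), so $\Rel_q$ is recognizable and hence rational. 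A configuration $(q,\contents)$ with $q \in \fifostates'$ is a deadlock precisely when $\contents \in \Rel_q$, so deadlock reachability in $\reachsetL{\fifo}{\sendL}$ amounts to testing, for each of the finitely many $q \in \fifostates'$, whether some $(q,\contents)$ with $\contents \in \Rel_q$ is reachable.

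There is no real obstacle remaining: the only thing to double-check is that each $\Rel$ above is given effectively in terms of a finite automaton for the underlying word language over $\Theta$, as required by the input format of the \IBounded rational-reachability problem. This is immediate in the first two cases and straightforward in the deadlock case, since the local alphabets $\Sigma_\ch$ and the relevant transitions from $q$ are finite and directly inspectable. Hence a finite number of applications of Theorem \ref{thm:general-I-bounded-reach} suffice to decide each of the three problems, yielding the corollary.
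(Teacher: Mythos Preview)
Your proposal is correct and follows essentially the same approach as the paper: for \IBounded reachability and control-state reachability you instantiate $\Rel$ as $\{\contents\}$ and $\ChContents$ respectively, and for \IBounded deadlock you compute the set $\fifostates'$ of states without outgoing sends and, for each $q \in \fifostates'$, test rational-reachability with $\Rel_q = \prod_{\ch} K_\ch$. This is exactly the reduction the paper gives in the paragraph preceding the corollary, down to the definition of $K_\ch$.
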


\begin{rem}
	Since input-bounded FIFO machines subsume VASS (a VASS can be seen as an input-bounded FIFO machine with an alphabet reduced to a unique letter), the complexity of \IBounded reachability is not elementary, which is inherited from the lower bound for VASS \cite{CzerwinskiLLLM19}.
	
\end{rem}

\section{Unboundedness and Termination} \label{sec:unboundterm}

We now introduce two new decision problems, as follows.

\begin{defi}[IB finiteness problems]
	Given a FIFO machine $\fifo = (\fifostates, \Ch{\nch}, \alphabet, \fifotransition, \init)$, a control-state $q \in Q$, a configuration $s \in \tscomp{\tsstates}{\fifo}$, and a tuple
	$\boundedL = (L_\ch)_{\ch \in \Ch{\nch}}$ of non-empty regular bounded languages $L_\ch \subseteq \Sigma_\ch^\ast$.
	\begin{itemize}
		\item \emph{\IBounded unboundedness}: Is $\reachsetL{\fifo}{\Pref(\sendL)}$ infinite?
		\item \emph{\IBounded termination}: Is there no infinite execution of the form $\tsinit_\fifo \tstransp{\smash{\beta_1}}_M s_1 \tstransp{\smash{\beta_2}}_M s_2 \tstransp{\smash{\beta_3}}_M \ldots$ such that, for all $i \in \Nat$, we have $s_i \in \tscomp{\tsstates}{\fifo}$, $\beta_i \in \fifoAlpha{\fifo}$, and $\beta_1\ldots \beta_i \in \Pref(\sendL)$?
	\end{itemize}
	
\end{defi}

Decidability of the unboundedness and termination problems can be deduced from the reachability decision problems. We detail the construction in the sections below.

\subsection*{Unboundedness}\label{finiteness-cmzt}

\IBounded unboundedness in FIFO machines reduces to an equivalent problem in counter machines. Given a FIFO machine $\hat\fifo$ and $\hat\boundedL$, the associated FIFO machine $\fifo$ in normal form (with the corresponding tuple $\boundedL$ of
distinct-letter languages), as well as the associated counter machine $\CS$, the following result can be derived.
\begin{prop} \label{prop:boundedness-cm-fifo}
	$\reachsetL{\hat\fifo}{\hat\sendL}$ is infinite iff $\reachsetL{\fifo}{\sendL}$ is infinite iff $\reachsetL{\CS}{\BCounters{\CS} \cap \crun{\Valid}}$ is infinite.
\end{prop}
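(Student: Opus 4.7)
The plan is to establish the two claimed equivalences in turn, reusing the normal-form reduction and the FIFO-to-counter-machine correspondence from earlier in the section.

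First I would prove that $\reachsetL{\hat\fifo}{\hat\sendL}$ is infinite iff $\reachsetL{\fifo}{\sendL}$ is infinite. Instantiating Lemma~\ref{lem:normal-form} with the singleton rational relation $\hat\Rel = \{\hat\contents\}$ gives, for each $\hat\contents$, that $(\hat q, \hat\contents)$ is reachable in $\hat\fifo$ via $\hat\sendL$ iff some $((\hat q, q_\A), \contents)$ with $h(\contents) = \hat\contents$ is reachable in $\fifo$ via $\sendL$. Because $h$ is a letter-wise homomorphism, it preserves length and is finite-to-one on inputs of any fixed length. Since the control-state sets on both sides are finite, infiniteness of either reachability set must arise from channel contents of unbounded total length, and this unboundedness transfers through $h$ in both directions.

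Next I would prove $\reachsetL{\fifo}{\sendL}$ infinite iff $\reachsetL{\CS}{\BCounters{\CS} \cap \crun{\Valid}}$ infinite, using Propositions~\ref{prop:fifo-cm} and~\ref{prop:cm-fifo} together with Lemma~\ref{lem:traceobs}. In the forward direction, each reachable $(q, \contents) \in \reachsetL{\fifo}{\sendL}$ is witnessed by some $\runFIFO \in \Valid$; letting $\veca$ be the tuple of last-sent messages of $\runFIFO$ gives $\runFIFO \in \Valid \cap \abL{\veca}$ and $\veca \in \pairs{\contents}$, so Proposition~\ref{prop:fifo-cm} puts $(q, \cvalue{\contents})$ into $\reachsetL{\CS}{\BCounters{\CS} \cap \crun{\Valid}}$. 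Conversely, each reachable $(q, \cscontents)$ of $\CS$ via some $\runCS \in \BCounters{\CS} \cap \crun{\Valid}$ has a preimage $\runFIFO \in \Valid$ with $\crun{\runFIFO} = \runCS$; choosing $\veca$ as its last-sent messages places $\runCS$ in $\crun{\Valid \cap \abL{\veca}}$, so Proposition~\ref{prop:cm-fifo} yields $(q, \fvalue{\cscontents}{\veca})$ reachable in $\fifo$.

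The main obstacle will be controlling the many-to-one nature of both $h$ and $\cvalue{\,.\,}$ so that infiniteness transfers in both directions. The clean way is size-counting: the identity $\sum_{i} \cvalue{\contents}_{\cnt{\ch}{i}} = |\contents_\ch|$ shows that each counter valuation has only finitely many preimages under $\cvalue{\,.\,}$, so infinitely many reachable FIFO configurations produce infinitely many reachable counter configurations; conversely $\cvalue{\fvalue{\cscontents}{\veca}} = \cscontents$ holds regardless of $\veca$, so distinct reachable counter configurations yield distinct reachable FIFO configurations. The analogous length-preservation plus finite-to-one argument for $h$ settles the first equivalence, completing the chain of iffs.
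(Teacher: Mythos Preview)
Your proposal is correct and follows essentially the same approach as the paper: the first equivalence via Lemma~\ref{lem:normal-form} and the length-preservation of $h$, and the second via the forth-and-back correspondence between $\fifo$ and $\CS$ together with a finite-fibers argument. The only cosmetic difference is that the paper packages the second equivalence through Corollary~\ref{reachprop} rather than invoking Propositions~\ref{prop:fifo-cm} and~\ref{prop:cm-fifo} directly, and states the finite-to-one property of $\cvalue{\,.\,}$ without your explicit size identity $\sum_i \cvalue{\contents}_{\cnt{\ch}{i}} = |\contents_\ch|$; your version is arguably clearer on this point.
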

\begin{proof}
	
	We first show that unboundedness is preserved by the normal-form construction. 
	This essentially follows from Lemma~\ref{lem:normal-form} and the fact that $h$ is length-preserving.
	
	If $(q,\hat\contents) \in \reachsetL{\hat\fifo}{\hat\sendL}$,
	then $((q,q_\A),\contents) \in \reachsetL{\fifo}{\sendL}$ for some
	$q_\A$ and $\contents$ such that $h(\contents) = \hat\contents$.
	Thus, if $\reachsetL{\hat\fifo}{\hat\sendL}$ is infinite, then so is
	$\reachsetL{\fifo}{\sendL}$.
	
	Conversely, if $((q,q_\A),\contents) \in \reachsetL{\fifo}{\sendL}$,
	then $(q,h(\contents)) \in \reachsetL{\hat\fifo}{\hat\sendL}$.
	Thus, if $\reachsetL{\fifo}{\sendL}$ is infinite, then so is
	$\reachsetL{\hat\fifo}{\hat\sendL}$.

	\medskip
	
	Now, let us assume that 	$\reachsetL{\fifo}{\sendL}$ is infinite. Hence, there are infinitely many configurations $(q, \contents) \in \tscomp{\tsstates}{\fifo}$ which are reachable from $(\init, \initcontents)$. From Corollary \ref{reachprop}, for each of these configurations, $(q, \cvalue{\contents}) \in
	\textstyle \reachsetL{\CS}{L}$, where $L = \BCounters{\CS} \cap \crun{\sendL \cap \Pref(\recL)
		\cap \bigcup_{\veca \in \pairs{\contents}} \abL{\veca}}$.
	Since $L \subseteq \BCounters{\CS} \cap \crun{\sendL \cap \Pref(\recL)}$, we have $(q, \cvalue{\contents}) \in \reachsetL{\CS}{\BCounters{\CS} \cap \crun{\sendL \cap \Pref(\recL)}}$. Furthermore, there are only finitely many configurations $(q, \contents) \in \tscomp{\tsstates}{\fifo}$ that correspond to a configuration $(q, \cvalue{\contents}) \in \tscomp{\tsstates}{\CS}$. Hence, if $\reachsetL{\fifo}{\sendL}$ is infinite, $\reachsetL{\CS}{\BCounters{\CS} \cap \crun{\sendL \cap \Pref(\recL)}}$ is infinite.
	
	For the converse direction, let us assume that $\reachsetL{\CS}{\BCounters{\CS} \cap \crun{\sendL \cap \Pref(\recL)}}$ is infinite. Hence, there are infinitely many configurations $(q, \cscontents)$ reachable from $(\init, \initcntcontents)$ via some $\runCS$ such that $\runCS \in \BCounters{\CS} \cap \crun{\sendL \cap \Pref(\recL)}$. Pick one such $(q, \cscontents)$ and $\runCS$. There exists a unique $\runFIFO$ such that $\crun{\runFIFO} = \runCS$. Consider the vector $\veca$ such that $\runFIFO \in \abL{\veca}{}$ and let $\contents = \fvalue{\cscontents}{\veca}{}$. Thus, $\cvalue{\contents} = \cscontents$. Hence, $(q, \cvalue{\contents}) \in
	\textstyle \reachsetL{\CS}{\BCounters{\CS} \cap \crun{\sendL \cap \Pref(\recL)
			\cap \bigcup_{\veca \in \pairs{\contents}} \abL{\veca}}}$. From Corollary~\ref{reachprop}, we can deduce that $(q, \contents) \in \reachsetL{\fifo}{\sendL}$. Therefore, for every $\cscontents$ such that $(q, \cscontents)$ is reachable, there is a corresponding configuration $(q, \contents)$ reachable in $\reachsetL{\fifo}{\sendL}$. 
\end{proof}

In particular, this statement applies to prefix-closed languages, i.e., we can reduce checking whether $\reachsetL{\hat\fifo}{\Pref(\hat\sendL)}$ is infinite to checking whether $\reachsetL{\CS}{\BCounters{\CS} \cap \crun{\Valid}}$ is infinite for some $\CS$ and prefix-closed language $\Valid$. The latter is decidable as we establish in the following. Recall that, by the construction of $\CS$, we then have $\reachset{\CS} = \reachsetL{\CS}{\BCounters{\CS} \cap \crun{\Pref(\Valid)}} = \reachsetL{\CS}{\BCounters{\CS} \cap \crun{\Valid}}$.

The main idea that follows is the reduction of unboundedness of the counter machine to reachability in a modified counter machine. It is not immediate that we can use the results in the literature (for example  \cite{DBLP:conf/fsttcs/DufourdF97}) which reduce boundedness to reachability in Petri nets/VASS\@. This is because the property of monotonicity does not extend to zero tests; if one can execute a zero test at $(q, \cscontents)$, it is not necessarily the case that it can be executed at $(q, \cscontents')$ with $\cscontents \leq \cscontents'$, where we let $\chcontents \leq \chcontents'$ if $\chcontents_x \leq \chcontents'_x$ for all $x \in \Cnt$. However, we show that, for the counter machine that we construct from the FIFO machine, this property does hold. If we are able to show this, the constructions used in the case of VASS can be adapted.

\begin{lem}\label{prop:vass-cm}
	For every execution in $\CS$ of the form $(\init, \initcntcontents) \xrightarrow{\sigma} (q, \chcontents) \xrightarrow{\runFIFO'} (q, \chcontents')$ such that $\cscontents \leq \cscontents'$, the following holds: The only counters that are tested to zero during $\runFIFO'$ already evaluate to zero at $(q, \cscontents)$, and do not change their value throughout the execution of $\runFIFO'$.
\end{lem}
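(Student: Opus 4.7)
The plan is to reduce the lemma to the following structural subclaim: whenever an action of $\runFIFO'$ tests $\cnt{\ch,k}$ for zero, no action of $\runFIFO'$ modifies $\cnt{\ch,k}$. The subclaim immediately yields both parts of the conclusion, since constancy of $\cnt{\ch,k}$ along $\runFIFO'$, combined with the zero test forcing the value to be $0$ at the test position, gives that $\cnt{\ch,k}$ equals $0$ throughout $\runFIFO'$, and in particular at $(q,\cscontents)$.

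The engine will be a cyclic-repetition argument driven by the normal form. Because $\runFIFO'$ is a loop at $q$ in the control graph of $\CS$, and that graph coincides, up to the relabeling $\crun{\,.\,}$, with the control graph of $\fifo$, the sequence $\sigma(\runFIFO')^n$ is a trace of $\CS$'s control graph for every $n\ge 0$. Definition~\ref{def:normal-form}\,(2) then places the corresponding FIFO word in $\Pref(\Valid)\subseteq \Pref(\sendL)\cap\Pref(\recL)$, so for every channel $\ch$ and every $n$, both $\projsendmap{\ch}$ and $\projrecmap{\ch}$ of that word lie in $\Pref(L_\ch)\subseteq \Pref(w_{\ch,1}^\ast \cdots w_{\ch,n_\ch}^\ast)$. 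Because $L_\ch$ is distinct-letter by Definition~\ref{def:normal-form}\,(1), every letter belongs to a unique $\Sigma_{\ch,i}$, and in either projection a letter of $\Sigma_{\ch,k}$ may never follow a letter of $\Sigma_{\ch,k'}$ whenever $k<k'$.

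With this tool in hand, the subclaim proceeds as follows. A zero test on $\cnt{\ch,k}$ inside $\runFIFO'$ originates, by the construction of $\CS$, from a receive $\recaction{b}{\ch}$ with $\indexw{\ch}{b}=k'>k$, so $\projrec{\runFIFO'}{\ch}$ contains a letter of $\Sigma_{\ch,k'}$. I then rule out both sends and receives of $\Sigma_{\ch,k}$-letters inside $\runFIFO'$. For sends: if $\runFIFO'$ contained a send of a $\Sigma_{\ch,k}$-letter, then either $\runFIFO'$ also sends some $\Sigma_{\ch,k'}$-letter, in which case $\projsend{\sigma(\runFIFO')^2}{\ch}$ exhibits the forbidden pattern $\cdots k\cdots k'\cdots k\cdots k'\cdots$, or every $\Sigma_{\ch,k'}$-send sits in $\sigma$, whence $\projsend{\sigma\runFIFO'}{\ch}$ already contains a $k'$-letter followed by a $k$-letter; either contradicts the non-decreasing property. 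A symmetric argument rules out a receive of a $\Sigma_{\ch,k}$-letter in $\runFIFO'$: together with the guaranteed $\Sigma_{\ch,k'}$-receive, one iteration produces $\cdots k\cdots k'\cdots k\cdots k'\cdots$ inside $\projrec{\sigma(\runFIFO')^2}{\ch}$, again forbidden. Hence no action of $\runFIFO'$ touches $\Sigma_{\ch,k}$, proving the subclaim.

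The principal obstacle is the move from the single given execution $\sigma\runFIFO'$ to the iterated control-graph paths $\sigma(\runFIFO')^n$: nothing in the hypothesis guarantees that the counter semantics of $\CS$ would permit actually re-executing $\runFIFO'$. The key is to invoke the normal-form trace property at the level of the bare control graph, independent of counter valuations, after which the combinatorics of distinct-letter bounded languages finishes the argument.
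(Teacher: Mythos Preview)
Your argument is correct and takes a genuinely different route from the paper. The paper's proof is \emph{semantic}: it fixes a counter $\cnt{\ch}{i}$ that is both tested for zero and modified in $\runFIFO'$, and splits according to whether its value at $(q,\cscontents)$ is positive or zero. In the first case, since $\sigma\runFIFO'\in\BCounters{\CS}$ (Lemma~\ref{lem:traceobs}) the counter stays at $0$ after the test, contradicting $\cscontents\le\cscontents'$; in the second case, the paper replays the relevant prefix of $\runFIFO'$ from $(q,\cscontents')$ using monotonicity, and the increment of $\cnt{\ch}{i}$ now occurs after it was tested for zero along $\sigma\runFIFO'$, again contradicting $\BCounters{\CS}$. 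Your proof is \emph{syntactic}: you lift the loop $\runFIFO'$ to the control graph of $\fifo$, iterate it to obtain $\tilde\sigma(\tilde\runFIFO')^n\in\Pref(\Valid)$ via Definition~\ref{def:normal-form}(2), and then exploit the distinct-letter structure of $L_\ch$ to rule out any $\Sigma_{\ch,k}$-action in $\runFIFO'$ once a $\Sigma_{\ch,k'}$-receive with $k'>k$ is present. A pleasant byproduct is that your argument never uses the hypothesis $\cscontents\le\cscontents'$: you have actually established the stronger fact that \emph{any} control loop at $q$ containing a zero test on $\cnt{\ch,k}$ leaves $\cnt{\ch,k}$ untouched. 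The paper's approach, by contrast, reuses tools already in place ($\BCounters{\CS}$ and VASS monotonicity) without needing the control-graph iteration; your approach pays a small price in the lifting step (the relabeling $\crun{\,.\,}$ is not injective on actions, so you should explicitly say that you pick \emph{some} preimage path in $\fifo$), but delivers a cleaner and more general statement.
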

\begin{proof}
	
	Let us assume to the contrary that there is at least one counter which is incremented or decremented during $\runFIFO'$ and also tested to zero during the execution. Without loss of generality, let us consider $\cnt{\ch}{i}$ to be the first counter along the execution $\runFIFO'$ that is tested to zero during $\runFIFO'$ and also incremented/decremented before it was tested to zero.
	\begin{itemize}
		\item Case (1): It has a non-zero value at $(q, \cscontents)$, and is then either decremented, or first incremented and then decremented, and finally tested to zero. Since we know that $\sigma.\sigma' \in \BCounters{\CS}$, no counter tested to zero can then be incremented. Hence, its value will remain zero. But this is a contradiction to our assumption that $\chcontents \leq \chcontents'$. Hence, all the counters with non-zero values at $(q, \cscontents)$ cannot be tested to zero during $\runFIFO'$.
		\item Case (2): It has value zero in $(q, \cscontents)$, and is incremented, then decremented, then tested to zero during $\runFIFO'$. This implies that it first has to be incremented.
		Consider now some
		sub-execution $\runFIFO'' \in \Pref(\runFIFO')$ where $(q, \cscontents) \xrightarrow{\runFIFO''} (q_1, \cscontents_1)$ such that the value of $\cnt{\ch}{i}$ in the configuration $(q_1, \cscontents_1)$ is non-zero. Since there are no ``new'' zero-tests along the execution $\runFIFO''$ (by our assumption),
		we can execute $\runFIFO''$ from $(q, \cscontents')$ (by the monotonicity and trace property). However, we cannot increment the counter $\cnt{\ch}{i}$ along $\runFIFO''$, because it was tested to zero during the run $(\init, \initcntcontents) \mathrel{\smash{\xrightarrow{\runFIFO.\runFIFO'}}} (q, \cscontents')$. Hence, we once again have a contradiction. \qedhere
	\end{itemize}
\end{proof}

Now, we can use results from \cite{DBLP:journals/tcs/FinkelS01} to show the following:

\begin{prop}\label{prop:unbounded-vass}
	The set $\reachset{\CS}$ is infinite iff there exist $\runFIFO, \runFIFO' \in \fifoAlpha{\CS}^\ast$, $q \in Q$, and $\cscontents, \cscontents' \in \N^\Cnt$ such that $(\init, \initcntcontents) \xrightarrow{\sigma} (q, \chcontents) \xrightarrow{\runFIFO'} (q, \chcontents')$ and $\chcontents < \chcontents'$ (i.e., $\chcontents \le \chcontents'$ and $\chcontents \neq \chcontents'$).
\end{prop}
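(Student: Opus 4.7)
The plan is to prove the two directions of the equivalence separately. The $(\Leftarrow)$ direction is a pumping argument made possible by Lemma~\ref{prop:vass-cm}; the $(\Rightarrow)$ direction is the classical König--Dickson unboundedness argument for VASS, adapted to our restricted-zero-test setting.

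For $(\Leftarrow)$, I would start from a given execution $(\init, \initcntcontents) \xrightarrow{\runFIFO}_\CS (q, \chcontents) \xrightarrow{\runFIFO'}_\CS (q, \chcontents')$ with $\chcontents < \chcontents'$ and show that $\runFIFO'$ can be fired arbitrarily many times in succession. The key observation is that Lemma~\ref{prop:vass-cm} (applicable since $\chcontents \le \chcontents'$) guarantees that every counter tested to zero somewhere inside $\runFIFO'$ is already zero at $(q, \chcontents)$ and is not modified anywhere within $\runFIFO'$; in particular such a counter is also zero at $(q, \chcontents')$. Starting from $(q, \chcontents')$, every zero test in $\runFIFO'$ therefore still passes, and every decrement remains non-negative thanks to monotonicity ($\chcontents' \ge \chcontents$). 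Moreover, the concatenation $\runFIFO\,(\runFIFO')^k$ still lies in $\BCounters{\CS}$, because the counters tested to zero inside $\runFIFO'$ are never modified within $\runFIFO'$, so no fresh zero-test/modification conflict is created across iterations. After $k$ iterations one reaches $(q, \chcontents + k(\chcontents' - \chcontents))$, and since $\chcontents' - \chcontents$ is strictly positive in at least one coordinate, these configurations are pairwise distinct, so $\reachset{\CS}$ is infinite.

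For $(\Rightarrow)$, I would unfold the reachability tree of $\CS$ rooted at $(\init, \initcntcontents)$, which is finitely branching because $\fifotransition'$ is finite, and then apply equality pruning: a node $(q, \chcontents)$ is declared a leaf if some strict ancestor on the branch already carries the identical pair $(q, \chcontents)$. First I would verify that the pruned tree still covers $\reachset{\CS}$: any reachable configuration admits a reaching path without repeated configurations (excising a cycle at a repeated configuration preserves validity, since identical endpoints make the remainder of the execution behave identically, and removing transitions cannot violate $\BCounters{\CS}$), and such a path survives in the pruned tree. Consequently, if $\reachset{\CS}$ is infinite, then so is the pruned tree, and by König's lemma it has an infinite branch. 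Along that branch, the pigeonhole principle on $Q$ combined with Dickson's lemma on $\N^{|\Cnt|}$ yields indices $i < j$ with $q_i = q_j$ and $\chcontents_i \le \chcontents_j$. Equality is forbidden by the pruning rule, so $\chcontents_i < \chcontents_j$, and taking $\runFIFO$ as the prefix reaching position $i$ and $\runFIFO'$ as the segment from $i$ to $j$ provides the sought witness.

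I expect the main obstacle to be ensuring that these standard VASS arguments genuinely transfer to counter machines with zero tests. In the $(\Leftarrow)$ direction, iterating a loop is typically blocked in general counter machines, because a zero test enabled the first time around may fail the second time; Lemma~\ref{prop:vass-cm} is precisely the ingredient that rules this out by forcing the zero-tested counters to behave as constantly-zero, inactive counters throughout $\runFIFO'$. In the $(\Rightarrow)$ direction, the cycle-excision step that underlies equality pruning is unproblematic here, because removing events only relaxes the $\BCounters{\CS}$ constraint and preserves all subsequent transitions; the pigeonhole/Dickson part then proceeds exactly as for VASS.
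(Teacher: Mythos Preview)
Your proposal is correct and follows essentially the same approach as the paper: the $(\Leftarrow)$ direction invokes Lemma~\ref{prop:vass-cm} to reduce $\runFIFO'$ to a pure VASS loop that can be iterated by monotonicity, and the $(\Rightarrow)$ direction combines finite branching, K\"onig's lemma, equality pruning, Dickson's lemma, and the pigeonhole principle on control states. The only cosmetic difference is the order in which you apply Dickson and pigeonhole in the forward direction (you fix the control state first and then use non-repetition to upgrade $\le$ to $<$, whereas the paper extracts a strictly increasing counter subsequence first); both orderings are standard and valid.
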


\begin{proof} \label{app:unbVASS}
	Let us assume that $\reachset{\CS}$ is infinite. 
	As in \cite{DBLP:journals/tcs/FinkelS01}, we consider the tree of all prefixes of computations. We prune this tree by removing all prefixes where there is at least a loop, i.e., containing two nodes that are labeled by the same configuration. 
	Since every reachable configuration can be reached without a loop, we still have an infinite number of prefixes in the pruned tree. For every configuration $(q, \chcontents)$ in the tree, there are finitely many successors (as the transition system is finitely branching). Hence, in order for the tree to be infinite, there is at least one infinitely long execution (by K{\"o}nig's lemma). This execution has no loop. Therefore, by Dickson's lemma, there is an infinite subsequence of configurations $(q_1, \chcontents_1)$, $(q_2, \chcontents_2), \ldots$ such that $\chcontents_1 < \chcontents_2 < \ldots$. 
	Once we extract this sequence, since there are only finitely many control states in $Q$, we know that there is at least one pair $(q, \chcontents), (q, \chcontents')$ such that $(\init, \initcntcontents) \xrightarrow{\sigma} (q, \chcontents) \xrightarrow{\sigma'} (q, \chcontents')$ and $\chcontents < \chcontents'$.
	
	Conversely, let us assume that there is an execution $(\init, \initcntcontents) \xrightarrow{\sigma} (q, \chcontents) \xrightarrow{\sigma'} (q, \chcontents')$ such that $\chcontents < \chcontents'$. 
	We know from Lemma~\ref{prop:vass-cm} that the only counters which may be tested for zero during $\runFIFO'$ already evaluate to zero at $(q, \cscontents)$, and do not change their value throughout the execution of $\runFIFO'$. Therefore, all the transitions of the counter system in $\sigma'$ can be considered as VASS operations. Hence, the property of monotonicity holds, i.e., if $(q, \cscontents) \xrightarrow{\sigma'} (q, \cscontents')$ and $(q, \cscontents) < (q, \cscontents')$, then we know there exists an infinite sequence $\chcontents_1 < \chcontents_2 < \ldots$ reachable from configuration $(q, \cscontents')$. 
\end{proof}

\paragraph*{Construction of modified counter system.} 
We modify the counter system $\CS$ and construct a new counter system $\CSmod$ such that $\reachset{\CS}$ is infinite iff a specific configuration is reachable in $\CSmod$. The construction is loosely based on the reduction of boundedness to reachability for Petri Nets in \cite{DBLP:conf/fsttcs/DufourdF97}.

\begin{prop}\label{countermod-boundedness}
We can effectively construct a counter machine $\CSmod$ (with bounded zero tests) and a finite set of configurations $R \subseteq S_{\CSmod}$ such that
	$\reachset{\CS}$ is infinite iff some configuration from $R$ is reachable in $\CSmod$.
\end{prop}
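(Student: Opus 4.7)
The plan is to realise the lasso characterisation from Proposition~\ref{prop:unbounded-vass} inside a derived counter machine and read off infiniteness of $\reachset{\CS}$ as configuration-reachability in $\CSmod$. I will build $\CSmod$ so that a single successful run corresponds to a lasso $(\init,\initcntcontents) \xrightarrow{\sigma}_\CS (q,\chcontents) \xrightarrow{\sigma'}_\CS (q,\chcontents')$ with $\chcontents<\chcontents'$, and take $R$ to be a singleton recording the end of such a witness.

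Concretely, $\CSmod$ uses counters $\Cnt' = \Cnt \uplus \{y_x \mid x\in\Cnt\} \uplus \{d\}$ and three groups of control states: $(q,1)$ for phase~$1$, $(q,q',2)$ for phase~$2$ (with the pivot $q'$ stored in the second component), and a check/cleanup component leading to a sink $q_f$. In phase~$1$, every step of $\CS$ is simulated while $x$ and $y_x$ are kept in lock-step, so the invariant $x=y_x$ is maintained and the zero-tests of $\CS$ can be inherited on the $x$-counters only. At a non-deterministic point we jump from $(q,1)$ to $(q,q,2)$, thereby committing to the pivot; in phase~$2$ we simulate $\CS$ on the $x$-counters, and at the same time mirror each $\inc{x}$ by an $\inc{d}$ and each $\dec{x}$ by a $\dec{d}$, so that $d$ ends up storing $\sum_x(\chcontents'_x-\chcontents_x)=|\chcontents'|-|\chcontents|$. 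A transition from a matching state $(p,p,2)$ enters the check phase.

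The check component handles one counter at a time. For each $x\in\Cnt$ in turn, it runs a subroutine that iterates $\dec{x}\dec{y_x}$ a non-deterministic (possibly zero) number of times and then zero-tests $y_x$; it succeeds exactly when $\chcontents'_x\ge\chcontents_x$ and, on success, leaves $x$ equal to $\chcontents'_x-\chcontents_x$. Rather than enforcing strict inequality by an extra $\dec{x_0}$, which could clash with an earlier zero-test of $x_0$ inherited from $\CS$, we simply perform one $\dec{d}$, which succeeds iff $|\chcontents'|-|\chcontents|\ge 1$, hence, under $\chcontents\le\chcontents'$, iff $\chcontents<\chcontents'$. A final cleanup stage drains the remaining counters by decrements and moves the control to $q_f$. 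Setting $R=\{(q_f,\initcntcontents)\}$ turns the question into a single configuration-reachability query in $\CSmod$, and correctness in both directions is immediate from Proposition~\ref{prop:unbounded-vass}.

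The main obstacle is to certify that every trace leading to $R$ lies in $\BCounters{\CSmod}$, so that Theorem~\ref{thm:reach-counters} applies. Zero-tests inherited from $\CS$ act on $x$-counters and already obey the boundedness discipline of $\CS$ itself; Lemma~\ref{prop:vass-cm} supplies the extra guarantee that, during the simulation of $\sigma'$, any counter zero-tested so far is no longer modified, so the corresponding $y_x$ stays frozen and a later zero-test on $y_x$ introduces no conflict. The freshly added zero-tests are those on the copies $y_x$ in the check phase and the terminal zero-tests in the cleanup; each fires once and is followed only by operations on other, not-yet-tested counters. Choosing $d$ rather than a guessed $x_0$ for the strictness check is precisely what sidesteps the risk of operating on a counter already tested for zero, and processing the $y_x$ subroutines sequentially ensures that the overall trace stays within the restricted zero-test fragment.
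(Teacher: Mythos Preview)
Your overall architecture matches the paper's: duplicate the counters, freeze one copy at the pivot, let the other continue, then compare. The paper's check phase simply decrements the two copies in tandem (introducing no new zero tests) and then non-deterministically shrinks the surviving copy, taking as $R$ the finite set of configurations in which exactly one second-copy counter equals $1$ and everything else is $0$; your variant instead zero-tests each $y_x$ and aims at a singleton target. Both routes are legitimate in principle.

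The gap is in the auxiliary counter $d$. In phase~$2$ you mirror every $\dec{x}$ by a $\dec{d}$, but $d$ enters phase~$2$ with value $0$, so if the particular loop $\sigma'$ being simulated has some prefix with more decrements than increments, the run of $\CSmod$ blocks right there. Proposition~\ref{prop:unbounded-vass} gives you \emph{some} lasso with $\chcontents<\chcontents'$; it does not promise one whose loop keeps the running balance $\sum_x(\text{inc}-\text{dec})$ non-negative at every prefix. Hence the forward direction (infinite $\Rightarrow$ target reachable) is not ``immediate''. The gap is repairable: along the infinite iteration of $\sigma'$ the total counter mass $\sum_x \chcontents_x$ tends to infinity, so one can re-pivot at a position realising a future minimum of this sum and then extract, via Dickson and pigeonhole on control states, a new lasso whose running balance never dips below~$0$. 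That argument, however, is precisely what is missing.

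Incidentally, the clash you feared with a plain $\dec{x_0}$ cannot occur. By Lemma~\ref{prop:vass-cm} together with $\Traces{\CS}\subseteq\BCounters{\CS}$, any counter $x_0$ with $\chcontents'_{x_0}>\chcontents_{x_0}$ is never zero-tested along $\sigma\sigma'$, so guessing such an $x_0$ and performing a single $\dec{x_0}$ after the tandem phase (or, equivalently, following the paper and leaving one second-copy counter at value~$1$ in the target) would have avoided the whole $d$-issue.
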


\begin{proof}
We modify the counter machine $\CS$ and construct a new counter machine $\CSmod$ such that $\reachset{\CS}$ is infinite iff a configuration belonging to a finite set is reachable in $\CSmod$. The construction is loosely based on the reduction of boundedness to reachability for Petri Nets in \cite{DBLP:conf/fsttcs/DufourdF97}. Since we do not know the values of $\chcontents$ and $\chcontents'$ a priori, we will try to characterize the general condition. The difference $\chcontents' - \chcontents$
is a non negative vector, with at least one strictly positive component.
We add a duplicate set of counters for every counter in the system. The intuition is that the counter machine non-deterministically moves from operating on both sets to a configuration from where it only operates on this second set. The first set will remain unchanged (with the value $\cscontents$), and the second set will keep track of the values (until it reaches $\cscontents'$). From this configuration (which represents $(q, \cscontents')$), we move to a new control state, $q_\textsf{reach}$. Here, we check for the condition $\chcontents' - \chcontents > \initcntcontents$ by first decrementing each counter in the first set which has a non-zero value in tandem with the corresponding counter in the second set. We do this until all the counters in the first set are equal to zero. If $\chcontents' - \chcontents > \initcntcontents$, then there is at least one counter in the second set with a non-zero counter value. We non-deterministically decrement all the counters in the second set until we reach a configuration that has some counter $c$ in the second set with a value of $1$, and all other counters evaluate to zero. Since there are finitely many such configurations, we can just check every case.
\end{proof}

From the above results, we have the following theorem.

\begin{thm}
	The \IBounded unboundedness problem is decidable for FIFO machines.
\end{thm}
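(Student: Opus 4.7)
The plan is to chain together the reductions that have already been established in the section, terminating in a decidable reachability query for a counter machine with restricted zero tests (Theorem~\ref{thm:reach-counters}). Given an input-bounded FIFO machine $\hat\fifo$ together with $\hat\boundedL$, I would first transform it into the normal-form pair $(\fifo, \boundedL)$ and build the associated counter machine $\CS$ as in Section~\ref{sec:bounded-reachability}. By Proposition~\ref{prop:boundedness-cm-fifo}, $\reachsetL{\hat\fifo}{\hat\sendL}$ is infinite iff $\reachsetL{\CS}{\BCounters{\CS} \cap \crun{\Valid}}$ is infinite. Since we are concerned with \IBounded unboundedness, where the language under consideration is $\Pref(\hat\sendL)$, and since by construction $\reachset{\CS} = \reachsetL{\CS}{\BCounters{\CS} \cap \crun{\Pref(\Valid)}}$, the question reduces to deciding whether $\reachset{\CS}$ is infinite.

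Next, I would invoke Proposition~\ref{prop:unbounded-vass}, which characterizes $\reachset{\CS}$ being infinite by the existence of a strictly increasing loop $(\init,\initcntcontents) \xrightarrow{\sigma} (q,\cscontents) \xrightarrow{\sigma'} (q,\cscontents')$ with $\cscontents < \cscontents'$. The key enabling observation here, provided by Lemma~\ref{prop:vass-cm}, is that along such a loop all zero-tested counters are already zero at $(q,\cscontents)$ and remain zero throughout $\sigma'$, so that $\sigma'$ essentially behaves like a sequence of VASS operations. This monotonicity is what allows the standard VASS argument to go through despite the presence of (bounded) zero tests.

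Finally, I would apply Proposition~\ref{countermod-boundedness}, which effectively constructs a counter machine $\CSmod$ and a finite set $R$ of target configurations such that $\reachset{\CS}$ is infinite iff some configuration in $R$ is reachable in $\CSmod$. The idea of the construction is to duplicate the counters, nondeterministically ``fork'' to start a simulated loop, and then at the reachable control state $\qreach$ compare the two copies coordinate-wise in order to detect $\cscontents' - \cscontents$ having at least one strictly positive entry; the finiteness of $R$ comes from enumerating which coordinate witnesses the strict inequality. Since $\CSmod$ still has only restricted zero tests, reachability of any fixed configuration in $\CSmod$ is decidable by Theorem~\ref{thm:reach-counters} (taking $L = \csAlpha{\CSmod}^\ast$ and the singleton semi-linear set $V$), and testing membership in $R$ amounts to finitely many such queries.

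The main obstacle is making sure the chain actually preserves the exact language restriction: \IBounded unboundedness asks about configurations reachable along $\Pref(\sendL)$, not $\sendL$, so I need to verify that Proposition~\ref{prop:boundedness-cm-fifo} (which is stated for $\sendL$) transfers to the prefix-closed version, and that the traces of $\CSmod$ still respect the normal-form trace property underlying Lemma~\ref{prop:vass-cm}. Once that is checked, decidability follows immediately by composing the reductions.
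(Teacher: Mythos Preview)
Your proposal is correct and follows essentially the same route as the paper: reduce to the normal form, pass to the counter machine $\CS$ via Proposition~\ref{prop:boundedness-cm-fifo} (applied to the prefix-closed language, as the paper also explicitly remarks), characterize infiniteness of $\reachset{\CS}$ by a strictly increasing loop using Lemma~\ref{prop:vass-cm} and Proposition~\ref{prop:unbounded-vass}, and finally invoke Proposition~\ref{countermod-boundedness} together with Theorem~\ref{thm:reach-counters}. The obstacle you flag about transferring Proposition~\ref{prop:boundedness-cm-fifo} to $\Pref(\hat\sendL)$ is exactly the point the paper handles by noting that the statement applies to prefix-closed languages, so your plan matches the paper's argument.
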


\begin{rem}
	Gouda et al, stated that unboundedness is in EXPSPACE for letter-bounded systems \cite{gouda1987deadlock}. However, they only give an idea of the proof, stating that it can be done in a similar fashion as for the deadlock problem. In the construction for solving the deadlock problem, they reduce the input language to \emph{tally} letter-bounded languages (tally means that the input-language is included in $a^*$ where $a$ is a letter). They add as many channels as letters in the original letter-bounded-language. Furthermore, in order to ensure that no channel is non-empty before the next channel is read, they ensure that in all control states where a later channel is being read, there are reception transitions of previous channel contents which lead to a sink state (where there is never a deadlock). Notice that it is still possible to leave a channel non-empty before the next channel is read. But one never reaches a deadlock in such an ``incorrect'' run, since there is always the option of reading the unread channel contents of the previous channels and reach the sink state. 
	
	However, when we consider this model for unboundedness, there may exist unbounded ``incorrect'' runs since we can leave a channel non-empty and proceed to the next and may have an unbounded run there. Hence, it seems that one still needs some reachability test to check if the runs are correct because we cannot ensure that some channels are zero in an unbounded run.
\end{rem}

\subsection*{Termination}

For termination, we take a similar approach as for unboundedness.
Suppose again that we are given $\hat\fifo$ and $\hat\boundedL$,
the associated FIFO machine $\fifo$ in normal form (with the corresponding tuple $\boundedL$ of
distinct-letter languages), and the associated counter machine $\CS$.
We first show that the normal form preserves the (non-)termination property.
In the following, $\beta_i$ will denote a send or receive action and
$\alpha_i$ will denote an increment or decrement action.

We obtain the following equivalence.

\begin{prop}
	\label{prop:normal-form-term}
	The following statements are equivalent:
\begin{itemize}\itemsep=1ex
\item There is an infinite execution of the form $\tsinit_{\hat\fifo} \tstransp{\smash{\beta_1}}_{\hat\fifo} s_1 \tstransp{\smash{\beta_2}}_{\hat\fifo} s_2 \tstransp{\smash{\beta_3}}_{\hat\fifo} \ldots$ such that, for all $i \in \Nat$, we have $\beta_1\ldots \beta_i \in \Pref(\hat\sendL)$.
	
\item There is an infinite execution of the form $\tsinit_\fifo \tstransp{\smash{\beta_1}}_\fifo s_1 \tstransp{\smash{\beta_2}}_\fifo s_2 \tstransp{\smash{\beta_3}}_\fifo \ldots$ such that, for all $i \in \Nat$, we have $\beta_1\ldots \beta_i \in \Pref(\sendL)$.
\end{itemize}
\end{prop}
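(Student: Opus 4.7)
Plan: I handle the two directions of the equivalence separately, extending the reachability correspondence of Lemma~\ref{lem:normal-form} from finite runs to infinite executions.

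For the direction from $\fifo$ to $\hat\fifo$ (the easy one), I would simply project the infinite execution via the homomorphism $h$. Given $\tsinit_\fifo \tstransp{\beta_1}_\fifo s_1 \tstransp{\beta_2}_\fifo s_2 \tstransp{\beta_3}_\fifo \cdots$ with every prefix in $\Pref(\sendL)$, the sequence $h(\beta_1) h(\beta_2) h(\beta_3) \cdots$ is an infinite $\hat\fifo$-execution: each step is a legitimate $\hat\fifo$-transition by the construction of $h^{-1}(\hat\fifotransition)$, and the $\hat\fifo$-channel contents at each step are obtained from the $\fifo$-contents by applying $h$ component-wise (routine induction on step count, exactly as in the backward direction of Lemma~\ref{lem:normal-form}). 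For the language condition, the inclusion $\projsend{\beta_1\ldots\beta_i}{\ch} \in \Pref(L_\ch) \subseteq \Pref(h_\ch^{-1}(\hat L_\ch))$ yields $\projsend{h(\beta_1\ldots\beta_i)}{\ch} = h_\ch(\projsend{\beta_1\ldots\beta_i}{\ch}) \in \Pref(\hat L_\ch)$, so every prefix of the projected execution lies in $\Pref(\hat\sendL)$.

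For the direction from $\hat\fifo$ to $\fifo$ (the harder one), I would apply K\"onig's lemma to a tree of partial lifts. Given the infinite $\hat\fifo$-execution with actions $\hat\beta_1 \hat\beta_2 \cdots$, define a tree $T$ whose nodes at depth $n$ are all valid $\fifo$-runs $\beta_1 \ldots \beta_n$ satisfying $h(\beta_i) = \hat\beta_i$ for $i \le n$, with the natural one-step extension edge relation. The tree is finitely branching since each $\hat\beta_{i+1}$ has only finitely many preimages under $h$ and the automaton $\A$ has only finitely many states. To see that each level is non-empty, use $\hat\beta_1\ldots\hat\beta_n \in \Pref(\hat\sendL)$ to pick an extension $\hat u$ with $\hat\beta_1\ldots\hat\beta_n\hat u \in \hat\sendL$ reaching some $\hat\fifo$-configuration $(\hat q, \hat\chcontents)$; Lemma~\ref{lem:normal-form}, applied to the reachability query with $\hat\Rel = \{\hat\chcontents\}$, then delivers a $\fifo$-execution $\beta_1\ldots\beta_n u \in \sendL$ with $h$-projection equal to $\hat\beta_1\ldots\hat\beta_n\hat u$, whose length-$n$ prefix populates level $n$ of $T$. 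K\"onig's lemma then yields an infinite branch, i.e., an infinite $\fifo$-execution $\beta_1 \beta_2 \cdots$. Since $\fifo$ is in normal form, $\Traces{\fifo} \subseteq \Pref(\Valid) \subseteq \Pref(\sendL)$, so every finite prefix automatically satisfies the required input-bounded condition.

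The main obstacle is the forward direction, specifically bridging the gap between Lemma~\ref{lem:normal-form}'s statement in terms of complete runs in $\sendL$ and the current statement in terms of arbitrary prefixes in $\Pref(\sendL)$. The extension trick---choosing $\hat u$ to complete a prefix into a word of $\hat\sendL$, invoking the lemma, then truncating the lifted run---together with the fact that prefixes of valid $\fifo$-runs are themselves valid $\fifo$-runs, resolves this cleanly and makes the K\"onig argument go through.
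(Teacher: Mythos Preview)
Your backward direction (projecting the infinite $\fifo$-execution through the letter-to-letter homomorphism $h$) coincides with the paper's argument.

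For the forward direction the paper simply asserts that a coherent infinite lift $\beta_1\beta_2\cdots$ exists, without explaining why the per-prefix lifts can be made compatible; your K\"onig's-lemma argument is a more explicit way to obtain exactly that. So the overall strategy is sound and, if anything, more careful than the paper. There is, however, one genuine gap in how you populate level $n$ of the tree. You pick $\hat u$ so that $\hat\beta_1\ldots\hat\beta_n\hat u \in \hat\sendL$ and then invoke Lemma~\ref{lem:normal-form}; but that lemma's forward direction requires $\hat\beta_1\ldots\hat\beta_n\hat u$ to be an actual $\hat\fifo$-execution, and a $\hat u$ chosen merely to complete the send-projections need not be compatible with the control graph of $\hat\fifo$. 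Nor can you count on some longer prefix of the given infinite execution lying in $\hat\sendL$ itself (take $\hat L_\ch = (ab)^\ast b$ with an execution that sends $(ab)^\omega$: every finite prefix is in $\Pref(\hat\sendL)$ but none is in $\hat\sendL$).

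The repair is not to use Lemma~\ref{lem:normal-form} as a black box but to rerun its proof on the prefix directly. Given the valid $\hat\fifo$-execution $\hat\beta_1\ldots\hat\beta_n \in \Pref(\hat\sendL)$, each $\projsend{\hat\beta_1\ldots\hat\beta_n}{\ch} \in \Pref(\hat L_\ch)$ lifts to some $w_\ch \in \Pref(L_\ch)$ (extend to a word of $\hat L_\ch$, lift that to $L_\ch$, then truncate, using that $h_\ch$ is length-preserving). Fixing these send-lifts determines the receive-lifts by FIFO and yields a valid $h^{-1}(\hat\fifo)$-execution $\beta_1\ldots\beta_n$ with the same control states $\hat q_i$; this word lies in $\Pref(\sendL) \cap \Pref(\recL) = \Pref(\Valid)$, hence is also a run of $\A$ and therefore of $\fifo$. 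With this correction your K\"onig argument goes through.
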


\begin{proof}
Assume that there is an infinite execution of the form $\tsinit_{\hat\fifo} \tstransp{\smash{\hat\beta_1}}_{\hat\fifo} \hat s_1 \tstransp{\smash{\hat\beta_2}}_{\hat\fifo} \hat s_2 \tstransp{\smash{\hat\beta_3}}_{\hat\fifo} \ldots$ in $\hat{\fifo}$ such that, for all $i \in \Nat$, we have $\hat\beta_1\ldots \hat\beta_i \in \Pref(\hat\sendL)$. Hence,
there are $\beta_1,\beta_2,\beta_3, \ldots \in A_\fifo$ such that,
for all $i \in \Nat$, letting $\runFIFO_i = \beta_1 \ldots \beta_i$, we have $h(\runFIFO_i) =  \hat\beta_1\ldots \hat\beta_i $ and $\runFIFO_i \in \Pref(\Valid)$. Hence, we know that, in the FIFO machine $h^{-1}({\hat\fifo})$, one has $(\hat{q}_0, \initcontents)
\xrightarrow{\smash{\beta_1}} (\hat q_1,\contents_1)
\xrightarrow{\smash{\beta_2}} (\hat q_2,\contents_2)
\xrightarrow{\smash{\beta_3}} \ldots$
for suitable $(\hat q_i,\contents_i)$ (by construction of $h^{-1}({\hat\fifotransition})$), and that $\runFIFO_i \in \Pref(L(\A))$.  Therefore, since $\fifo$ is a product of the two machines, we can deduce that there is an infinite execution $\tsinit_\fifo \tstransp{\smash{\beta_1}}_\fifo s_1 \tstransp{\smash{\beta_2}}_\fifo s_2 \tstransp{\smash{\beta_3}}_\fifo \ldots$ as required.

\smallskip

Conversely, assume that there is an infinite execution of the form $\tsinit_\fifo \tstransp{\smash{\beta_1}}_\fifo s_1 \tstransp{\smash{\beta_2}}_\fifo s_2 \tstransp{\smash{\beta_3}}_\fifo \ldots$ such that, for all $i \in \Nat$, we have $\beta_1\ldots \beta_i \in \Pref(\sendL)$. 	Let $\hat{\runFIFO_i} = h(\beta_1\ldots \beta_i)$ for all $i \in \Nat$.
Since $\beta_1\ldots \beta_i  \in \Pref(\sendL)$, we have
$\projsend{\beta_1\ldots \beta_i }{\ch} \in \Pref(L_\ch)$ for all $\ch \in \Ch{\nch}$.
In particular, $\projsend{\beta_1\ldots \beta_i}{\ch} \in h_\ch^{-1}(\Pref(\hat L_\ch))$ and,
therefore, $h_\ch(\projsend{\beta_1\ldots \beta_i}{\ch}) = \projsend{h(\beta_1\ldots \beta_i)}{\ch} \in \Pref(\hat L_\ch)$. We deduce $\hat{\runFIFO_i} \in \Pref(\hat{\sendL})$. 	Furthermore, we can execute $\hat{\runFIFO}$ in $\hat{\fifo}$ (by construction) for all $i \in \Nat$. Hence, we can build an infinite execution in $\hat{M}$, such that $\hat{\runFIFO_i} \in \Pref(\hat\sendL)$ for all $i \in \Nat$.
\end{proof}

The latter property can be reduced to checking a decidable property in the counter machine as follows:

\begin{prop}
	\label{term-cm-term}
	The following statements are equivalent:
\begin{itemize}
\item There is an infinite execution of the form $\tsinit_\fifo \tstransp{\smash{\beta_1}}_\fifo s_1 \tstransp{\smash{\beta_2}}_\fifo s_2 \tstransp{\smash{\beta_3}}_\fifo \ldots$ such that, for all $i \in \Nat$, we have $\beta_1\ldots \beta_i \in \Pref(\sendL)$.
	
\item There is an infinite execution of the form $\tsinit_\CS \tstransp{\smash{\alpha_1}}_\CS s_1 \tstransp{\smash{\alpha_2}}_\CS s_2 \tstransp{\smash{\alpha_3}}_\CS \ldots$ such that, for all $i \in \Nat$, we have $\alpha_1\ldots \alpha_i \in \BCounters{\CS} \cap \crun{\Pref(\Valid)}$.

\item There exist $\runFIFO \in \fifoAlpha{\CS}^\ast$, $\runFIFO' \in \fifoAlpha{\CS}^+$, $(q, \cscontents) \in  \tscomp{\tsstates}{\CS}$, and $\cscontents'$ such that $(\init, \initcntcontents) \xrightarrow{\sigma}_\CS (q, \chcontents) \xrightarrow{\runFIFO'}_\CS (q, \chcontents')$ and $\chcontents \leq \chcontents'$.
\end{itemize}
\end{prop}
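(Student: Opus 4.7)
The plan is to establish (1)$\,\Leftrightarrow\,$(2) and (2)$\,\Leftrightarrow\,$(3) separately, reusing the machinery already developed. For (1)$\,\Leftrightarrow\,$(2), I would apply the forth-and-back correspondence of Propositions~\ref{prop:fifo-cm} and \ref{prop:cm-fifo} prefix by prefix. Starting from an infinite $\fifo$-execution $\tsinit_\fifo \tstransp{\beta_1}_\fifo s_1 \tstransp{\beta_2}_\fifo \ldots$, each finite prefix $\beta_1\ldots\beta_i$ lies in $\Pref(\Valid)$ by Lemma~\ref{lem:traceobs} and, by Proposition~\ref{prop:fifo-cm}, translates to a counter-run $\crun{\beta_1\ldots\beta_i}$ in $\BCounters{\CS} \cap \crun{\Pref(\Valid)}$ ending at the matching counter valuation. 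Since $\crun{\cdot}$ is a homomorphism, these finite counter-runs are pairwise prefix-compatible and assemble into the required infinite $\CS$-execution. Conversely, given an infinite $\CS$-execution whose prefixes lie in $\BCounters{\CS} \cap \crun{\Pref(\Valid)}$, each prefix has a unique preimage $\frun{\alpha_1\ldots\alpha_i} \in \Pref(\Valid)$, and Proposition~\ref{prop:cm-fifo} shows this preimage is executable in $\fifo$; the preimages are coherent under prefixing and therefore yield an infinite $\fifo$-execution.

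For (3)$\,\Rightarrow\,$(2), given a lasso $(\init, \initcntcontents) \xrightarrow{\runFIFO}_\CS (q, \cscontents) \xrightarrow{\runFIFO'}_\CS (q, \cscontents')$ with $\cscontents \leq \cscontents'$ and $\runFIFO' \neq \varepsilon$, I would iterate $\runFIFO'$ indefinitely. Lemma~\ref{prop:vass-cm} guarantees that every zero test inside $\runFIFO'$ targets a counter that is already zero at $(q, \cscontents)$ and remains zero throughout $\runFIFO'$; by $\cscontents \leq \cscontents'$ these counters are still zero at $(q, \cscontents')$, so $\runFIFO'$ is re-executable from $(q, \cscontents')$ and VASS-style monotonicity on the remaining counters delivers $(q, \cscontents'')$ with $\cscontents'' \geq \cscontents'$. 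By induction the concatenation $\runFIFO \cdot {\runFIFO'}^\omega$ is realizable in $\CS$, and by the trace property of the normal form its prefixes stay in $\BCounters{\CS} \cap \crun{\Pref(\Valid)}$. For (2)$\,\Rightarrow\,$(3), given an infinite $\CS$-execution I would use the finiteness of $\fifostates$ to extract an infinite sub-sequence of configurations sharing a common control state $q$, yielding an infinite sequence $\cscontents_1, \cscontents_2, \ldots \in \N^\Cnt$; Dickson's lemma then supplies indices $k < \ell$ with $\cscontents_k \leq \cscontents_\ell$, and the sub-run connecting the $k$-th and $\ell$-th occurrences is the non-empty loop required by~(3).

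The main obstacle will be the (3)$\,\Rightarrow\,$(2) direction, where naive loop-pumping would fail for arbitrary counter machines with zero tests, since iterating a loop could reactivate a zero test on a counter that was meanwhile touched. Lemma~\ref{prop:vass-cm}, which is specific to the counter machines arising from our normal-form construction rather than a general fact about counter machines, rules this out precisely and thereby imports VASS-style reasoning inside loops.
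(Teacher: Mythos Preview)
Your proposal is correct and follows essentially the same route as the paper: the equivalence (1)$\Leftrightarrow$(2) is derived from Propositions~\ref{prop:fifo-cm} and~\ref{prop:cm-fifo}, (2)$\Rightarrow$(3) uses Dickson's lemma together with the finiteness of $\fifostates$, and (3)$\Rightarrow$(2) iterates the loop via Lemma~\ref{prop:vass-cm} to recover VASS-style monotonicity. The only cosmetic difference is that the paper applies Dickson's lemma first and then pigeonholes on control states, whereas you pigeonhole first and then apply Dickson; both orderings are valid.
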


\begin{proof}
The equivalence of the first two items is obtained as a corollary of
Propositions~\ref{prop:fifo-cm} and \ref{prop:cm-fifo}.
So let us show that the latter two are equivalent.

	Consider a non-terminating execution as described above. By Dickson's lemma, there is an infinite subsequence of configurations $(q_1, \chcontents_1), (q_2, \chcontents_2), \ldots$ such that $\chcontents_1 \leq \chcontents_2 \leq \ldots$. 
	Once we extract this sequence, since there are only finitely many control states in $Q$, we know that there is at least one pair $(q, \chcontents), (q, \chcontents')$ such that $(\init, \initcntcontents) \xrightarrow{\sigma} (q, \chcontents) \xrightarrow{\sigma'} (q, \chcontents')$ and $\chcontents \leq \chcontents'$.
	
	Conversely, assume $(\init, \initcntcontents) \xrightarrow{\sigma} (q, \chcontents) \xrightarrow{\runFIFO'} (q, \chcontents')$ and $\chcontents \leq \chcontents'$. 
	From the same argument as for boundedness, we can deduce that no counter is being tested to zero for the first time in $\sigma'$ and all the counters previously tested stay unchanged. Hence, all the transitions of the counter system in $\sigma'$ can be considered as VASS operations. Therefore,
	we know there exists $\cscontents''$ such that
	$(q, \cscontents') \xrightarrow{\sigma'} (q, \cscontents'')$ and
	$\cscontents' \leq \cscontents''$. Repeating this reasoning, we can build an infinite sequence 
	starting from $(q, \cscontents')$. 
\end{proof}

Finally, we construct a modified counter machine as in the case for boundedness (Proposition~\ref{countermod-boundedness}) and get the following:

\begin{prop}
We can effectively construct a counter machine $\CS''$ (with bounded zero tests) and a configuration $s \in S_{\CSmodp}$ such that the following statements are equivalent:
\begin{itemize}
\item There exist $\runFIFO \in \fifoAlpha{\CS}^\ast$, $\runFIFO' \in \fifoAlpha{\CS}^+$, $(q, \cscontents) \in  \tscomp{\tsstates}{\CS}$, and $\cscontents'$ such that $(\init, \initcntcontents) \xrightarrow{\sigma}_\CS (q, \chcontents) \xrightarrow{\runFIFO'}_\CS (q, \chcontents')$ and $\chcontents \leq \chcontents'$.

\item Configuration $s$ is reachable in $\CS''$.
\end{itemize}
\end{prop}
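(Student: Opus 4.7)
The plan is to adapt the construction of Proposition~\ref{countermod-boundedness} almost verbatim, with two modifications to accommodate the weaker condition $\cscontents \leq \cscontents'$ (rather than the strict $<$ used for unboundedness) and the requirement $\runFIFO' \in \fifoAlpha{\CS}^+$ that the middle segment be non-empty. Because we only need $\cscontents \leq \cscontents'$, certifying the witness will reduce to reaching a \emph{single} target configuration $s$ rather than one of a finite family.

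For each counter $x$ of $\CS$ I will introduce a shadow counter $x'$, and extend the control state of $\CSmodp$ with a mode in $\{1, 2, \textsf{reach}\}$ together with a Boolean flag. In mode~$1$, $\CSmodp$ simulates $\CS$ on the original counters. At any reached control state $q$, $\CSmodp$ may nondeterministically switch to mode~$2$, in which every transition of $\CS$ is simulated on the primed copies while the originals are frozen; the flag is set as soon as a step is taken in mode~$2$. From any state $q$ encountered in mode~$2$ with the flag already set, $\CSmodp$ may nondeterministically move to $(q, \textsf{reach})$. In mode~$\textsf{reach}$, for every counter $x$ there is a self-loop that decrements $x$ and $x'$ together; finally, $\CSmodp$ transitions to a sink configuration $s$, the transition being guarded by a zero test on every original counter.

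The correctness of this construction is straightforward. A witness $(\init,\initcntcontents) \xrightarrow{\runFIFO}_\CS (q,\cscontents) \xrightarrow{\runFIFO'}_\CS (q,\cscontents')$ with $\cscontents \leq \cscontents'$ and $\runFIFO'$ non-empty translates into a run of $\CSmodp$ by playing $\runFIFO$ in mode~$1$, playing $\runFIFO'$ on the primed counters in mode~$2$, and then firing, for each $x$, exactly $\cscontents_x$ tandem decrements to zero out the originals (which succeed since $\cscontents_x \leq \cscontents'_x$), before taking the transition to $s$. Conversely, any execution of $\CSmodp$ reaching $s$ decomposes into exactly these three phases, and the values of $\cscontents, \cscontents'$ can be read off at the switch point and at the end of mode~$2$; completion of the tandem-decrement phase certifies $\cscontents \leq \cscontents'$, while the mandatory flip of the flag ensures $\runFIFO' \in \fifoAlpha{\CS}^+$.

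The step that requires care, and is the main technical obstacle, is verifying that $\CSmodp$ remains a counter machine with \emph{bounded} zero tests. By Lemma~\ref{prop:vass-cm}, any primed counter $x'$ tested for zero during mode~$2$ must satisfy $\cscontents_x = 0$ at the switch point, so the tandem self-loop on $(x, x')$ is never fired in mode~$\textsf{reach}$ (the original $x$ is already $0$); hence $x'$ is not modified after its zero test. The final zero tests guarding entry into $s$ apply to the original counters, which are no longer touched afterwards. Thus bounded zero tests are preserved throughout $\CSmodp$, as required.
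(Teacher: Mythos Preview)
Your approach is essentially the same as the paper's and is sound, but the description contains two slips that you should repair.

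First, in mode~$1$ you must simulate $\CS$ on \emph{both} sets of counters in lockstep, not only on the originals. Your own correctness argument relies on this: you claim that after mode~$2$ the primed copies hold $\cscontents'$, which requires them to hold $\cscontents$ at the switch point. If mode~$1$ leaves the primed counters at $\initcntcontents$, then mode~$2$ may be unable to replay $\runFIFO'$ at all (decrements could fail), and even when it can, the primed values at the end of mode~$2$ would be the effect vector of $\runFIFO'$ rather than $\cscontents'$. The paper's construction (Proposition~\ref{countermod-boundedness}) updates both copies in the first phase for exactly this reason.

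Second, as written you do not actually reach a \emph{single} configuration~$s$: your final transition zero-tests only the original counters, so the primed copies still carry the residue $\cscontents' - \cscontents$, which can be any non-negative vector. To match the statement you must, as the paper does, add a phase that freely decrements each primed counter and then zero-tests all of them before entering the sink; this does not endanger the bounded-zero-test property since nothing is touched afterwards.

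On the positive side, your Boolean flag enforcing $\runFIFO' \in \fifoAlpha{\CS}^+$ is a point the paper's proof glosses over (in the unboundedness construction the strict inequality already forces $\runFIFO'$ to be non-empty, but here with $\cscontents \leq \cscontents'$ one does need to rule out the trivial loop explicitly).
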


\begin{proof}

We can adapt the construction of $\CS'$ for unboundedness.
The difference is that we now allow for $\chcontents = \chcontents'$.
Hence, there is no need anymore to check that, after
decrementing both sets of counters in tandem, there is
still a positive counter left in the second set.
We can therefore also empty the second set of counters in a new
control-state $q$ and check whether $(q,\initcntcontents)$ is reachable.
\end{proof}

Thus, we have the following theorem.

\begin{thm}
	The \IBounded termination problem is decidable for FIFO machines.
\end{thm}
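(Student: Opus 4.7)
The plan is to assemble the decidability result as a direct chain of reductions using the three propositions just established, and finally invoke Theorem~\ref{thm:reach-counters}. Given an input instance $(\hat\fifo, \hat\boundedL)$, the first step will be to construct the normal form $\fifo$ together with its distinct-letter tuple $\boundedL$, exactly as in Section~\ref{sec:bounded-reachability} (Lemma~\ref{lem:normal-form}), and then the associated counter machine $\CS$. By Proposition~\ref{prop:normal-form-term}, the existence of a non-terminating $\Pref(\hat\sendL)$-execution in $\hat\fifo$ is equivalent to the existence of such an execution in $\fifo$, so it suffices to decide \IBounded termination for the normal-form machine.

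Next, I would apply Proposition~\ref{term-cm-term} to move from $\fifo$ to the counter world: $\fifo$ admits a non-terminating execution respecting $\Pref(\sendL)$ if and only if there exist $\runFIFO \in \fifoAlpha{\CS}^\ast$, $\runFIFO' \in \fifoAlpha{\CS}^+$, a control state $q$, and counter valuations $\cscontents \le \cscontents'$ such that $(\init,\initcntcontents) \xrightarrow{\runFIFO}_\CS (q,\cscontents) \xrightarrow{\runFIFO'}_\CS (q,\cscontents')$. The crucial point here, inherited from the proof of Proposition~\ref{term-cm-term} via Lemma~\ref{prop:vass-cm}, is that along the increasing loop no zero test is triggered for the first time, so the loop behaves like a VASS loop and can be iterated forever, producing a genuine infinite execution.

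Then I would invoke the last proposition of the section, which constructs a counter machine $\CSmodp$ (with bounded zero tests) and a target configuration $s$ such that the existence of the required increasing loop in $\CS$ is equivalent to reachability of $s$ in $\CSmodp$. The construction is a mild adaptation of the one used for \IBounded unboundedness (Proposition~\ref{countermod-boundedness}): one duplicates the set of counters, non-deterministically switches to operating only on the fresh copy (which simulates the candidate loop from $(q,\cscontents)$ back to $(q,\cscontents')$), and then, in a dedicated control state $\qreach$, decrements the two copies in tandem and checks that the original copies can be fully emptied, without requiring a strictly positive remainder in the duplicate set, so that the weak inequality $\cscontents \le \cscontents'$ is captured.

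Finally, I would close the argument by applying Theorem~\ref{thm:reach-counters}: reachability of a fixed configuration in a counter machine with bounded zero tests is decidable. Composing the three equivalences gives a decision procedure for \IBounded termination of $\hat\fifo$. I do not anticipate a serious obstacle here, since each ingredient is already proved above; the only point worth double-checking is that the modified machine $\CSmodp$ indeed falls within the class covered by Theorem~\ref{thm:reach-counters}, i.e.\ that the added gadget uses its fresh counters in accordance with the bounded zero-test discipline, which follows from the fact that the duplicated counters are only decremented (never incremented again) once the switch to the checking phase has been performed.
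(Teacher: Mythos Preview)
Your proposal is correct and follows essentially the same route as the paper: reduce to normal form via Proposition~\ref{prop:normal-form-term}, pass to the counter machine and the increasing-loop criterion via Proposition~\ref{term-cm-term}, encode the loop detection as a reachability query in the modified machine $\CSmodp$, and conclude with Theorem~\ref{thm:reach-counters}. Your description of the adaptation of $\CSmod$ to $\CSmodp$ (dropping the strict-remainder check so as to capture $\cscontents \le \cscontents'$) also matches the paper's construction.
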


\section{Output-Bounded Problems} \label{sec:opbounded}

We consider the dual case of input-bounded languages in which the set of words that may be received by each channel (the output-language) is constrained to be bounded. The \emph{\OBounded problems} are defined as follows:

\begin{defi}[\OBounded decision problems]

	Given a FIFO machine $\fifo = (\fifostates, \Ch{\nch}, \alphabet, \fifotransition, \init)$,
a tuple $\boundedL = (L_\ch)_{\ch \in \Ch{\nhc}}$ of non-empty regular bounded languages
$L_\ch \subseteq \Sigma_\ch^\ast$ (each given in terms of a finite automaton),
a control state $q \in Q$, a configuration $s \in \tscomp{\tsstates}{\fifo}$, and a rational relation $\Rel \subseteq \ChContents$.

	\begin{itemize}
		\item \emph{\OBounded rational reachability}: Do we have $(q,\contents) \in \reachsetL{\fifo}{\recL}$ for some $\contents \in \Rel$?
		\item \emph{\OBounded reachability}: Do we have $s \in \reachsetL{\fifo}{\recL}$\,?
		\item \emph{\OBounded control-state reachability}: Do we have
		$(q,\contents) \in \reachsetL{\fifo}{\recL}$ for some $\contents$\,?
		\item \emph{\OBounded deadlock}: Does $\reachsetL{\fifo}{\recL}$ contain a deadlock\,?
		
				\item \emph{\OBounded unboundedness}: Is $\reachsetL{\fifo}{\Pref(\recL)}$ infinite?
				\item \emph{\OBounded termination}: Is there no infinite execution of the form $\tsinit_\fifo \tstransp{\smash{\beta_1}}_\fifo s_1 \tstransp{\smash{\beta_2}}_\fifo s_2 \tstransp{\smash{\beta_3}}_\fifo \ldots$ such that, for all $i \in \Nat$, we have $s_i \in \tscomp{\tsstates}{\fifo}$, $\beta_i \in \fifoAlpha{\fifo}$, and $\beta_1\ldots \beta_i \in \Pref(\recL)$?
	\end{itemize}
\end{defi}

\begin{thm}
	\OBounded reachability is decidable for FIFO machines.
\end{thm}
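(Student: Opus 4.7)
The plan is to reduce \OBounded reachability to \IBounded reachability, which is decidable by Theorem~\ref{thm:general-I-bounded-reach}. The key observation is that if an execution $\runFIFO$ reaches configuration $(q,\contents)$, then by the FIFO policy one has $\projsend{\runFIFO}{\ch} = \projrec{\runFIFO}{\ch} \cdot \contents_\ch$ for every channel $\ch$. Therefore, constraining the \emph{received} word in each channel to belong to the bounded language $L_\ch$ is the same, at the target configuration, as constraining the \emph{sent} word to belong to $L_\ch \cdot \{\contents_\ch\}$.

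Concretely, given an instance of \OBounded reachability consisting of $\fifo$, $\boundedL = (L_\ch)_{\ch \in \Ch{\nch}}$, and a target configuration $(q,\contents)$ with $\contents = (\contents_\ch)_{\ch \in \Ch{\nch}}$, I would define the tuple $\boundedLp = (L_\ch')_{\ch \in \Ch{\nch}}$ by $L_\ch' = L_\ch \cdot \{\contents_\ch\}$. This remains a regular bounded language: if $L_\ch \subseteq w_{\ch,1}^\ast \ldots w_{\ch,n_\ch}^\ast$ and $\contents_\ch = c_1 \ldots c_k$ with each $c_i \in \Sigma_\ch$, then $L_\ch' \subseteq w_{\ch,1}^\ast \ldots w_{\ch,n_\ch}^\ast c_1^\ast \ldots c_k^\ast$, and regularity is preserved by concatenation with a single word. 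Moreover, from a finite automaton for $L_\ch$, one immediately obtains a finite automaton for $L_\ch'$. The output of the reduction is the \IBounded reachability instance $(\fifo, \boundedLp, (q, \contents))$.

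It then remains to prove the equivalence
$(q, \contents) \in \reachsetL{\fifo}{\recL} \Longleftrightarrow (q, \contents) \in \reachsetL{\fifo}{\sendLpem}$.
For the forward direction, any witnessing run $\runFIFO \in \recL$ reaching $(q,\contents)$ satisfies $\projsend{\runFIFO}{\ch} = \projrec{\runFIFO}{\ch} \cdot \contents_\ch \in L_\ch \cdot \{\contents_\ch\} = L_\ch'$, so $\runFIFO \in \sendLpem$. For the converse, a witnessing run $\runFIFO \in \sendLpem$ satisfies $\projsend{\runFIFO}{\ch} \in L_\ch \cdot \{\contents_\ch\}$, which, combined with $\projsend{\runFIFO}{\ch} = \projrec{\runFIFO}{\ch} \cdot \contents_\ch$ and the fact that the factorization of a word as a given suffix preceded by some prefix is unique (determined by the length $|\contents_\ch|$), forces $\projrec{\runFIFO}{\ch} \in L_\ch$, hence $\runFIFO \in \recL$.

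Once the equivalence is established, decidability of \OBounded reachability follows immediately from Theorem~\ref{thm:general-I-bounded-reach}. I do not expect any significant obstacle here: the only point requiring care is the uniqueness of the factorization $w = u \cdot \contents_\ch$, which is trivial because $\contents_\ch$ is a fixed finite word and $u$ is determined as the length-$(|w| - |\contents_\ch|)$ prefix of $w$. A symmetric argument will later allow \OBounded rational-reachability to be handled by replacing the singleton $\{\contents_\ch\}$ with the suitable projection of the rational relation $\Rel$ on channel $\ch$ (which is still a bounded regular language when $\Rel$ takes values in a product of bounded languages), but the single-configuration case suffices for the statement above.
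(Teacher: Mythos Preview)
Your proof is correct and follows exactly the same route as the paper: both reduce \OBounded reachability to \IBounded reachability by setting $L_\ch' = L_\ch \cdot \contents_\ch$ and invoking the FIFO identity $\projsend{\runFIFO}{\ch} = \projrec{\runFIFO}{\ch} \cdot \contents_\ch$. Your write-up is in fact more careful than the paper's, since you explicitly verify that $L_\ch'$ is regular bounded and you spell out both directions of the equivalence (including the uniqueness-of-factorization point).

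One caution about your closing remark: the extension to \OBounded rational-reachability does \emph{not} go through by the symmetric argument you sketch, and the paper explicitly leaves that problem open. The difficulty is that for a rational target $\Rel$ the contents $\contents$ is not fixed in advance, so there is no single word to append; replacing $\{\contents_\ch\}$ by the projection of $\Rel$ onto channel $\ch$ fails both because that projection need not be a bounded language and because it loses the inter-channel correlations encoded in $\Rel$. This does not affect the correctness of your proof of the stated theorem, but you should drop or qualify that final sentence.
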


\begin{proof}
	Given a configuration $(q, \contents)$ in $\fifo$, and a tuple of bounded languages $\boundedL$, the output-bounded reachability problem asks if $(q, \contents) \in \reachsetL{\fifo}{\boundedL_?}$. Since the output language is in $\boundedL_?$, we know that the contents of the channels which have already been read is in the corresponding input-language, i.e., $\boundedL_!$. Therefore, $(q, \contents) \in \reachsetL{\fifo}{\boundedL_?}$ iff $(q, \contents) \in \reachsetL{\fifo}{\boundedLpem}$ where $\boundedLp = (L'_\ch)_{\ch \in \Ch{\nch}}$ and $L'_\ch = L_\ch.\contents_\ch$. Hence, the \OBounded reachability problem is decidable for FIFO machines.
\end{proof}

\begin{thm}\label{thm:OB-controlstate}
	\OBounded control-state reachability is decidable for FIFO machines.
\end{thm}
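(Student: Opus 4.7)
The plan is to reduce OBounded control-state reachability to IBounded rational reachability (Theorem~\ref{thm:general-I-bounded-reach}), generalising the reduction that was used for the OBounded reachability theorem immediately above.

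The starting point is the same equivalence: for a \emph{fixed} $\contents$, one has $(q,\contents) \in \reachsetL{\fifo}{\boundedL_?}$ iff $(q,\contents) \in \reachsetL{\fifo}{\boundedLpem}$ where $L'_\ch = L_\ch \cdot \contents_\ch$. The new difficulty is that, for control-state reachability, $\contents$ is existentially quantified, so the bounded language $L_\ch \cdot \contents_\ch$ varies with the witness. The naive union $\bigcup_{\contents_\ch} L_\ch \cdot \contents_\ch = L_\ch \cdot \Sigma_\ch^\ast$ is not bounded, so one cannot apply IBounded rational reachability directly on $\fifo$.

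My approach is to ``internalise'' the unknown contents into the machine using an auxiliary structure. For every channel $\ch$ I would add a fresh auxiliary channel $\ch^+$ with alphabet $\Sigma_\ch$, and in a modified machine $\fifo'$ replace each send transition $\sendaction{a}{\ch}$ by a nondeterministic choice between sending to $\ch$ (a ``matched'' send, destined to be received) and sending to $\ch^+$ (an ``unmatched'' send, destined to remain as contents). Receive transitions are kept unchanged. Consistency with FIFO in $\fifo$ is ensured because receives from $\ch$ only consume matched sends, and unmatched sends accumulate in $\ch^+$ in exactly the order in which they are emitted. This mirrors the standard split of the channel content of $\fifo$ into its received prefix (which lies in $L_\ch$) and its leftover suffix (the contents).

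I would then reduce OBounded control-state reachability of $q$ in $\fifo$ to an IBounded rational reachability query on $\fifo'$ with bounded language $\boundedLp$ given by $L'_\ch = L_\ch$ on the original channels and a bounded language $L'_{\ch^+}$ on the auxiliary channels, together with a rational target relation $\Rel$ that forces the original channels to be empty at $q$ while leaving the auxiliary channels arbitrary. The control-state reachability of $q$ in $\fifo$ under $\boundedL_?$ is then captured by deciding whether some $\contents' \in \Rel$ is IB-rationally reachable at state $q$ in $\fifo'$, which is decidable by Theorem~\ref{thm:general-I-bounded-reach}. Alternatively, following the same pattern as in Proposition~\ref{prop:reach-equiv-csr}, I would invoke the already-proved decidability of OBounded reachability by using $\Rel$ to play the role of a ``universal guess'' for $\contents$.

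The step I expect to be the main obstacle is precisely the handling of the auxiliary channels in the IBounded framework: the sends to $\ch^+$ correspond to the contents of $\ch$ in $\fifo$, which is a priori arbitrary, whereas IBounded rational reachability demands a genuinely \emph{bounded} send language on every channel. The delicate part of the argument is therefore to show that one can, without loss of generality, restrict the witnesses of OBounded control-state reachability to runs whose contents in each channel lie in a suitable bounded language computable from $\fifo$ and $\boundedL$ (using the structure of the control graph and of $L_\ch$), so that $L'_{\ch^+}$ can legitimately be chosen bounded. Once this is settled, the remainder is a bookkeeping correspondence between OB runs of $\fifo$ and IB runs of $\fifo'$, analogous to Lemma~\ref{lem:normal-form} and Propositions~\ref{prop:fifo-cm}--\ref{prop:cm-fifo}.
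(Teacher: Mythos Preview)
Your reduction has a genuine soundness gap. By letting each individual send nondeterministically go to $\ch$ or to $\ch^+$, you allow ``unmatched'' sends to be interleaved with ``matched'' ones. But in any FIFO run of $\fifo$, the received messages on a channel are a \emph{prefix} of the sent messages, so the unmatched sends necessarily form a \emph{suffix}. Your $\fifo'$ does not enforce this. Concretely, take a single channel $\ch$, states $q_0 \xrightarrow{\sendaction{a}{\ch}} q_1 \xrightarrow{\sendaction{b}{\ch}} q_2 \xrightarrow{\recaction{b}{\ch}} q_3$, and $L_\ch = b^\ast$. In $\fifo$, $q_3$ is not reachable at all (after $!a\,!b$ the head is $a$, so $?b$ blocks). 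In your $\fifo'$, the run that sends $a$ to $\ch^+$, then $b$ to $\ch$, then receives $b$ from $\ch$ reaches $q_3$ with $\ch$ empty and with sends to $\ch$ equal to $b \in L_\ch$; your IB query would wrongly answer yes. To repair this you must force, per channel, a one-way switch from $\ch$-sends to $\ch^+$-sends; this cannot be imposed by the IB tuple alone (it has no cross-channel constraint) and must be wired into the control of $\fifo'$.

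Your second obstacle---finding a bounded language for the sends to $\ch^+$---is aiming at the wrong target. You will not, in general, be able to bound the actual contents: already a single state with loops $\sendaction{a}{\ch}$ and $\sendaction{b}{\ch}$ produces all of $\{a,b\}^\ast$ as possible contents under $\recL$ with $L_\ch=\{\varepsilon\}$. The key observation you are missing is that for \emph{control-state} reachability the letters making up the leftover contents are irrelevant; only the control-state effect of the corresponding send transitions matters. The paper exploits exactly this: it stays on the \emph{same} channel, adds a fresh letter $\$$ (never receivable), and for every send transition $\sendaction{a}{\ch}$ also adds $\sendaction{\$}{\ch}$. One then solves IB control-state reachability with input language $L'_\ch = L_\ch \cdot \$^\ast$. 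The concatenation $L_\ch \cdot \$^\ast$ simultaneously (i) forces all ``real'' sends to precede all $\$$-sends, which is precisely the prefix/suffix split you need, and (ii) is trivially bounded. This avoids both your ordering problem and your boundedness problem in one stroke. If you insist on auxiliary channels, the analogous fix is to give $\ch^+$ a one-letter alphabet and to add, in the control of $\fifo'$, a per-channel flag forbidding further $\ch$-sends after the first $\ch^+$-send; but at that point you have essentially reinvented the paper's construction with extra bookkeeping.
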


\begin{proof}
	In order to show that \OBounded control-state reachability is decidable for FIFO machines, we first convert it to the normal form. This construction is similar to that as specified in Section~\ref{sec:bounded-reachability}, however we make a few changes. Firstly, when we change the alphabet to distinct letter, we add an additional letter, say $\$$, to represent all the letters which are not present in the bounded language, but are present in the transitions of the FIFO system. Then, in addition to the transitions we already add to the new FIFO system as in Section~\ref{sec:bounded-reachability}, for every send action in the original FIFO system, we add a send action with this new letter $\$$. However, for the reception, we leave it as before. This new FIFO system can now only read letters in the output language, however, it can potentially send any letter. Hence, we have only restricted the output language.
	
	Next, to the automata that is constructed to accept the bounded language for send actions, we add to all the states a transition that enables the automata to send $\$$ and go to a sink state which loops with send actions sending $\$$. Hence, once again, we have ensured that the input language is not restricted. To the reception automata, we make no such changes. Hence, the reception automata only accept the bounded language.  
	
	 In this new machine, we can solve for control state reachability, over the input-language $\boundedLp = (L'_\ch)_{\ch \in \Ch{\nch}}$ such that $L'_\ch = L_\ch.\$^\ast$ (but we let the output-language remain $\boundedL$ when we construct the trimmed automata). If a configuration $(q, \contents)$ is reachable in this new machine, then there exists a path which can be taken in the original machine to reach control state $q$ via $\sigma \in \recL$. Furthermore, if there is a path which can be taken in the original machine, it can also be taken in the new machine. Hence, we can decide if there exists a $\contents$ such that $(q, \contents)$ is reachable. 
\end{proof}

\begin{thm}
	\OBounded unboundedness and \OBounded termination problems are decidable for FIFO machines.
\end{thm}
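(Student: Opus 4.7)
The plan is to reduce each of the two \OBounded finiteness problems to its \IBounded counterpart established in Section~\ref{sec:unboundterm}, by reusing the construction from the proof of Theorem~\ref{thm:OB-controlstate}. Given $\fifo$ and $\boundedL = (L_\ch)_{\ch \in \Ch{\nch}}$, the first step normalizes the alphabet and the languages so that every $L_\ch$ is distinct-letter, then introduces a fresh symbol $\$$ and, for every send transition $(p,\sendaction{a}{\ch},q)$ of $\fifo$, adds a parallel send transition $(p,\sendaction{\$}{\ch},q)$; receive transitions are left unchanged, so $\$$ can never be received. Taking the product with the send-automaton recognizing $L_\ch \cdot \$^\ast$ and the receive-automaton recognizing $L_\ch$, I obtain a FIFO machine $\fifo'$ in normal form whose input-language $\boundedLp = (L_\ch \cdot \$^\ast)_{\ch \in \Ch{\nch}}$ is again bounded.

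The second step is a length-preserving correspondence between $\Pref(\recL)$-respecting executions of $\fifo$ and $\Pref(\boundedLpem)$-respecting executions of $\fifo'$. Given an $\fifo$-execution $\sigma$ with $\projrec{\sigma}{\ch} \in \Pref(L_\ch)$ for every $\ch$, the sends on channel $\ch$ split into an initial segment that is consumed by later receives (necessarily a prefix of $L_\ch$) and an unreceived tail that remains in the queue; I would rewrite the former as normalized sends and the latter as $\$$-sends in $\fifo'$, preserving the order of actions. This rewriting is FIFO-consistent because the unreceived tail lies strictly past the receive pointer of channel $\ch$, so every prefix of the resulting send-projection on channel $\ch$ lies in $\Pref(L_\ch \cdot \$^\ast)$. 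Conversely, any $\fifo'$-execution $\sigma'$ with $\projsend{\sigma'}{\ch} \in \Pref(L_\ch \cdot \$^\ast)$ can be turned back into an $\fifo$-execution by un-normalizing the non-$\$$ letters through the homomorphism $h$ from Section~\ref{sec:bounded-reachability} and replacing each $\$$-send with the letter attached to the $\fifo$-transition from which that $\$$-send was derived; the resulting receives form a prefix of $\recL$, and every channel content has the same length as in $\sigma'$.

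Because this correspondence preserves the control state and the length of every channel, (i) $\reachsetL{\fifo}{\Pref(\recL)}$ is infinite iff $\reachsetL{\fifo'}{\Pref(\boundedLpem)}$ is infinite, and (ii) $\fifo$ admits an infinite execution whose every prefix lies in $\Pref(\recL)$ iff $\fifo'$ admits an infinite execution whose every prefix lies in $\Pref(\boundedLpem)$. Decidability of \OBounded unboundedness and of \OBounded termination then follows by invoking the \IBounded unboundedness and \IBounded termination decision procedures on $\fifo'$ and $\boundedLp$.

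The main hurdle is verifying both directions of the correspondence carefully: one must check inductively along $\sigma$ that the partition of sends into a received prefix and an unreceived tail stays coherent with the FIFO discipline once the tail is rewritten as $\$$-sends, so that every intermediate configuration of the simulation is legal, and symmetrically that the inverse rewriting always produces valid $\fifo$-transitions whose receives remain in $\Pref(\recL)$. Once this is settled, the reduction to the \IBounded finiteness problems is routine.
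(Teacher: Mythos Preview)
Your proposal is correct and follows essentially the same route as the paper: both use the $\$$-augmented normal form from the proof of Theorem~\ref{thm:OB-controlstate}, so that the extra counter for $\$$ is only incremented and never decremented or zero-tested. The only difference is presentational: the paper passes directly from this normal form to the associated counter machine and then invokes the reachability-based criteria of Section~\ref{sec:unboundterm}, whereas you add one layer of indirection by first reducing to the \IBounded unboundedness and termination problems for $\fifo'$ and $\boundedLp$ and then invoking those results as black boxes. Your explicit length- and control-state-preserving correspondence between $\Pref(\recL)$-executions of $\fifo$ and $\Pref(\boundedLpem)$-executions of $\fifo'$ makes the reduction more self-contained than the paper's terser argument; the one point worth spelling out carefully (as you note) is that for infinite executions the ``eventually received'' sends on each channel form a prefix of the send sequence by the FIFO discipline, so the rewriting is coherent on every finite prefix.
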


\begin{proof}
		We can decide the \OBounded unboundedness and termination problems as well. We can construct a  counter system for the normal form described in the proof of Theorem~\ref{thm:OB-controlstate}. Note that the counter corresponding to $\$$ will only have increments and no decrements, which is in line with the fact that  the contents corresponding to this counter do not belong to the output-language. The FIFO machine has an infinite run iff this newly constructed counter machine has also an infinite run. We then construct the modified counter machine, as is the case for boundedness (see Section~\ref{finiteness-cmzt}), and test if there is a run $(\init, \initcntcontents) \xrightarrow{\sigma} (q, \chcontents) \xrightarrow{\sigma'} (q, \chcontents')$ such that $\chcontents < \chcontents'$. Since this modified counter system has bounded zero tests (the added counter has no decrements or zero tests associated to it), we can decide the reachability of the configuration, and hence, decide if the FIFO machine is unbounded. A similar explanation can be made for termination.
	\end{proof}

\begin{rem}
The rational reachability problem cannot be directly reduced to the input-bounded case. For the output bounded case, we do not know precisely the reachability set, since we only restrict the output-language. Hence, in order to check if some $\contents \in \Rel$ is reachable, we need to be able to compute the reachability set. Similarly, unlike the input-bounded case, we cannot determine the deadlock problem a priori, since the deadlock problem is reduced to rational reachability for the input-bounded case. Hence, these two problems have been left open.
	\end{rem}

\section{FIFO Machines with a Single Channel} \label{sec:singlechannel}

When we restrict the communication to a single channel, we obtain better upper bound for reachability.

\paragraph*{Upper bound for reachability: EXPTIME}

We consider the model of ordered multi-pushdown systems, studied in \cite{DBLP:journals/ijfcs/AtigBH17}. We define it using our counter systems. A counter system can be seen as a multi-pushdown system with a unary alphabet. Unary ordered multi-pushdown systems (UOMPDSs) are multi-pushdown systems which impose a total order on the counters and limit decrements to the lowest non-empty counter. We use the following result of reachability of UOMPDS.

\begin{thm}[\cite{DBLP:journals/ijfcs/AtigKS14}, Theorem 13]
The reachability problem in UOMPDS is solvable in EXPTIME.
\end{thm}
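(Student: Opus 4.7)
The statement is quoted from \cite{DBLP:journals/ijfcs/AtigKS14} and is applied here as a black box, so my plan is not to reprove it in full but to outline the approach I would take if I had to reconstruct it. The key structural feature to exploit is that a UOMPDS may only decrement the lowest non-empty counter, which forces any execution into a disciplined sequence of phases.

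First I would decompose any run into maximal \emph{phases} during which the index $i$ of the lowest non-empty counter is constant. Within such a phase only counter $x_i$ can be decremented; counters $x_{i+1}, \ldots, x_k$ evolve monotonically (increments only) while counters $x_1, \ldots, x_{i-1}$ remain zero. A phase ends in only one of two ways: either $x_i$ is driven to zero and $i$ jumps to some larger index, or an increment to some $x_j$ with $j<i$ occurs and the active index drops to $j$. Thus, within a single phase the machine behaves like a one-counter machine on $x_i$ equipped with monotone side-counters, a fragment for which reachability summaries are tractable (essentially one-counter automata reachability).

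The hard part is stitching phases together: the total number of phases is a priori unbounded, and the ``state'' at the end of a phase is a full counter valuation, so a direct product argument does not suffice. The approach taken in the cited paper is to compute, by a saturation procedure, a finite abstraction of reachable phase-summaries whose size is exponential in the input; this abstraction captures exactly which (control-state, counter-profile) pairs are reachable after finitely many phase transitions and can be computed in EXPTIME. Designing this abstraction so that it remains finite while being precise enough to avoid false positives is, as I anticipate it, the crux of the proof, and is what keeps the overall complexity in EXPTIME rather than climbing higher in the complexity hierarchy; without the ordering constraint and the unary alphabet, the corresponding multi-pushdown reachability problems are known to require at least 2ETIME.
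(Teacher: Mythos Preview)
The paper does not prove this theorem at all: it is stated purely as a citation of \cite{DBLP:journals/ijfcs/AtigKS14}, Theorem~13, and used as a black box to derive the subsequent EXPTIME upper bound for single-channel IB reachability. You recognize this correctly in your first sentence, which is the only thing that needed to be said to match the paper.

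Your additional sketch is a reasonable outline of how one might approach ordered multi-pushdown reachability, and the phase decomposition keyed on the least non-empty stack index is indeed the structural observation underlying this line of work. That said, be aware that the sketch is speculative about the internals of \cite{DBLP:journals/ijfcs/AtigKS14}: you describe the main step as a ``saturation procedure'' on phase summaries without specifying what the finite abstraction actually records or why it stabilizes in exponential time, and you should not assert that this is ``the approach taken in the cited paper'' unless you have checked it. For the purposes of the present paper none of this matters, since the result is invoked, not reproved.
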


Consider the counter machine $\CS$ with the semantic restriction $\BCounters{\CS}$ which can be built from a single channel FIFO machine $\fifo$. We see that a counter $x_i$ can only be decremented if the previous counters $x_j$ such that $j<i$ are equal to zero. Therefore, the counter system $\CS$ is a UOMPDS, where the order of the counters is defined as $x_i < x_j$ iff $i<j$. Hence, we have the following proposition.

\begin{prop}
	\IBounded reachability in single channel FIFO machines is polynomially reducible to reachability in UOMPDS.
\end{prop}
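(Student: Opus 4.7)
The plan is to exhibit an explicit polynomial translation from a single-channel IB reachability instance to a UOMPDS reachability instance, building directly on the counter machine $\CS$ from Section~\ref{sec:bounded-reachability}. Given a single-channel FIFO machine $\fifo$ with bounded language $L$ over $(w_1,\ldots,w_n)$ and a target configuration $(q,\contents)$, I would first verify that the normal-form construction remains polynomial in the single-channel case: DFAs for $\sendL$ and $\Pref(\recL)$ are obtained by instrumenting the DFA for $L$ so as to track progress under send (respectively receive) actions while ignoring the others, so the automaton $\A$ for $\Valid = \sendL \cap \Pref(\recL)$, and hence the normal-form FIFO machine itself, have polynomial size. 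The associated counter machine $\CS$ has the same state space as $\fifo$ and one counter $x_i$ per block $w_i$.

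The core observation is that $\CS$ is already a UOMPDS under the counter ordering $x_1 < x_2 < \cdots < x_n$. Identify each unary stack with a counter, read $\inc{x_i}$ as a push onto stack $i$, and read $(\dec{x_i},\{x_j\mid j<i\})$ as a pop of stack $i$. The zero-test guard fires iff $x_1,\ldots,x_{i-1}$ are all zero, which is precisely the UOMPDS requirement that $i$ be the lowest non-empty stack; under the natural bijection between counter valuations and unary stack profiles, the two semantics coincide step for step. By Lemma~\ref{lem:traceobs}, every trace of $\CS$ already satisfies $\BCounters{\CS}$, so the UOMPDS reinterpretation introduces no spurious behaviour relative to the restricted counter semantics.

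Finally, I would reduce the IB reachability query to a UOMPDS control-state reachability query. By Corollary~\ref{reachprop}, $(q,\contents)\in\reachsetL{\fifo}{\sendL}$ amounts to reaching $(q,\cvalue{\contents})$ in $\CS$ along traces from an effectively constructible regular language $L'$; this language constraint is folded into the UOMPDS by synchronous product with a DFA for $L'$, affecting only the finite control and remaining polynomial. To replace the counter-value target by a control-state target, attach to $q$ a gadget that, for $i=1,\ldots,n$ in order, pops stack $i$ exactly $\cvalue{\contents}_{x_i}$ times before proceeding to stack $i{+}1$, ending in a fresh state $q_{\textup{acc}}$; since the counts are hard-coded in the finite control and the stacks are emptied in increasing order of index, the UOMPDS ordering rule is automatically respected and the gadget has size polynomial in $|\contents|$. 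The EXPTIME bound of \cite{DBLP:journals/ijfcs/AtigKS14} then yields the claimed upper bound.

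The main obstacle is careful bookkeeping: verifying that the single-channel normal form truly avoids the exponential blow-up of the general case, and that the $L'$-product and the emptying gadget interact cleanly with the UOMPDS ordering constraint. Here the lowest-non-empty-stack rule actually helps rather than hinders, since it forces the gadget into the intended behaviour: one can neither over-pop (the required pop would be from an empty stack) nor under-pop (residual symbols in a lower-index stack would block subsequent pops of higher-index stacks), so the gadget succeeds exactly when every $x_i$ holds the target value $\cvalue{\contents}_{x_i}$.
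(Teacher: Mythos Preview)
Your approach is essentially the paper's: the counter machine $\CS$ from Section~\ref{sec:bounded-reachability} is already a UOMPDS under the order $x_1<\cdots<x_n$, because decrementing $x_i$ tests exactly $x_1,\ldots,x_{i-1}$ for zero, which is the lowest-non-empty-stack rule. The paper states this in one line and leaves the polynomial bound implicit; you go further and justify why the normal-form construction stays polynomial when there is a single channel (no shuffle of per-channel automata in building the DFA for $\Valid$), which is a point the paper does not spell out.

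One technical slip in your emptying gadget: popping stack $i$ exactly $\cvalue{\contents}_{x_i}$ times and then moving on enforces $x_i=\cvalue{\contents}_{x_i}$ only for those $i$ that are followed by at least one later pop, since it is the subsequent pop of a higher-index stack that forces $x_i=0$. For the largest index with a nonzero target, and for any trailing indices with target $0$, nothing in the gadget witnesses emptiness, so you would only certify $x_m\ge\cvalue{\contents}_{x_m}$ and place no constraint on $x_{m+1},\ldots,x_n$. The fix is standard: either appeal directly to configuration reachability in the UOMPDS (target $(q,\cvalue{\contents})$ and skip the gadget), or add a fresh highest stack $x_{n+1}$ that is pushed once at the start and popped once at the very end of the gadget, so that the ordering rule forces all of $x_1,\ldots,x_n$ to be empty before reaching $q_{\textup{acc}}$.
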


In \cite{DBLP:journals/ijfcs/AtigKS14}, it was also shown that repeated reachability for UOMPDS is solvable in EXPTIME\@. We know that a system is non-terminating if and only if we can reach any control state infinitely often. Therefore, we can guess a control state and verify if it is reachable repeatedly in order to verify if the system is non-terminating. Furthermore, we see that even for FIFO machines with a single channel, using the same construction as in Proposition~\ref{prop:reach-equiv-csr}, reachability and control state reachability are recursively equivalent.

We then have the following corollary.

\begin{cor}
	The  \IBounded reachability, termination and control-state reachability problems are in EXPTIME for FIFO machines with a single channel.
\end{cor}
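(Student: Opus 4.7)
The plan is to assemble the corollary from three ingredients already in place: the polynomial reduction from \IBounded reachability in single-channel FIFO machines to UOMPDS reachability (the preceding proposition), the EXPTIME upper bounds of Atig-Kumar-Saivasan for UOMPDS reachability and repeated reachability, and the reducibility of control-state reachability and non-termination to (repeated) reachability. A preliminary observation is that in the single-channel case the normal form of Section~\ref{sec:bounded-reachability} stays polynomial: the automaton for $\Valid = \sendL \cap \Pref(\recL)$ is built from the automata of a single pair of bounded languages, avoiding the Cartesian-product blow-up across channels that is the source of the exponential blow-up in the general construction. Hence the associated counter machine $\CS$ (which is a UOMPDS by the preceding proposition) has polynomial size.

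For \IBounded reachability the conclusion is then immediate: compose the polynomial reduction with the EXPTIME algorithm for UOMPDS reachability. For \IBounded control-state reachability, I would first invoke the reduction from control-state reachability to reachability given in Proposition~\ref{prop:reach-equiv-csr}(a). That construction only adds $|\Sigma|$ self-loops at the target control state and concatenates the $\recaction{a}{\ch}$-actions as a short suffix, which keeps both the FIFO machine and the tuple of bounded languages polynomial in the input; composing with the previous item yields EXPTIME.

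For \IBounded termination I would use the standard characterisation that the machine fails to terminate along $\Pref(\sendL)$ iff some control state $q \in Q$ is visited infinitely often along such a prefix. Using Propositions~\ref{prop:fifo-cm} and \ref{prop:cm-fifo}, an infinite run of $\fifo$ with valid prefixes corresponds to an infinite run of $\CS$ in which every prefix lies in $\BCounters{\CS} \cap \crun{\Pref(\Valid)}$. I would then enumerate the polynomially many candidate control states $q$ and for each ask whether $q$ is repeatedly reachable in $\CS$, invoking the EXPTIME algorithm for repeated reachability in UOMPDS.

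The main obstacle I expect is the termination case: one has to argue that the semantic restriction $\BCounters{\CS} \cap \crun{\Pref(\Valid)}$, which is baked into the definition of traces of $\CS$, is automatically satisfied along infinite runs, so that no infinite behaviour is spuriously introduced or lost when passing from $\fifo$ to $\CS$. This is handled by the trace property of the normal form (Definition~\ref{def:normal-form}(2)) together with Lemma~\ref{lem:traceobs}, which ensures that every finite trace of $\CS$ already lies in $\BCounters{\CS}$ and that control states from which a counter has been tested to zero do not allow subsequent modifications of that counter, so the single-channel counter machine behaves as a genuine UOMPDS on \emph{all} its runs, finite or infinite.
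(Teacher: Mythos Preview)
Your proposal is correct and follows the same route as the paper: reachability via the polynomial reduction to UOMPDS reachability, control-state reachability via the self-loop reduction of Proposition~\ref{prop:reach-equiv-csr}(a), and termination via guessing a control state and invoking the EXPTIME repeated-reachability algorithm for UOMPDS. Your additional justification that the single-channel normal form stays polynomial and that Lemma~\ref{lem:traceobs} ensures $\CS$ genuinely behaves as a UOMPDS on all runs fills in detail that the paper leaves implicit in the preceding proposition, but the overall argument is the same.
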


However, for the unboundedness result in Section~\ref{finiteness-cmzt}, we use the reachability for counter systems with bounded zero tests. Furthermore, the set of all counters in the modified counter system do not have a total order anymore. Hence, following our constructions, we may only say that the unboundedness problem in FIFO machines with a single channel (over a bounded language) is solvable by using the Petri net reachability (which is tower-hard).

\begin{rem}
	In the case of a single channel, we have a total order on the counters, and therefore we are able to use results from Ordered Multipushdown Systems. However, even when we consider two channels, we cannot immediately extend these results since there is no longer a total order in the decrement of counters. 
\end{rem}

\paragraph*{Lower bound for reachability: NP-hard} 

We consider a sub-problem of the IB reachability, unboundedness, and termination problems as follows. Recall that, given a bounded language $L = w_1^\ast \ldots w_n^\ast$, if $|w_1| = \cdots = |w_n| = 1$, i.e., $w_1, \ldots, w_n \in A$, then $L$ is called a letter-bounded language.

\begin{defi}[\ILBounded decision problems]
	Given a FIFO machine $\fifo = (\fifostates, \Ch{\nch}, \alphabet, \fifotransition, \init)$, a control-state $q \in Q$, a configuration $s \in \tscomp{\tsstates}{\fifo}$, and a tuple
	$\boundedL = (L_\ch)_{\ch \in \Ch{\nch}}$ of non-empty regular letter-bounded languages $L_\ch \subseteq \Sigma_\ch^\ast$.
	\begin{itemize}
		\item \emph{\ILBounded reachability}: Do we have $s \in \reachsetL{\fifo}{\sendL}$\,?
		\item \emph{\ILBounded unboundedness}: Is $\reachsetL{\fifo}{\Pref(\sendL)}$ infinite?
				\item \emph{\ILBounded termination}: Is there no infinite execution of the form $\tsinit_\fifo \tstransp{\smash{\beta_1}}_\fifo s_1 \tstransp{\smash{\beta_2}}_\fifo s_2 \tstransp{\smash{\beta_3}}_\fifo \ldots$ such that, for all $i \in \Nat$, we have $s_i \in \tscomp{\tsstates}{\fifo}$, $\beta_i \in \fifoAlpha{\fifo}$, and $\beta_1\ldots \beta_i \in \Pref(\sendL)$?
	\end{itemize}
\end{defi}

We see that most of the properties are $\NP$-hard for 
FIFO machines with a single channel over a letter-bounded language. This is proved by simulating 3-CNF formula with such machines. Our simulation follows the same ideas as the proof of $\NP$-hardness for flat FIFO machines with \emph{multiple} channels \cite{DBLP:conf/lics/EsparzaGM12,DBLP:conf/concur/FinkelP19}, except that we use a unique channel.

\begin{thm}
	\label{prop:NPhard}
	\ILBounded reachability, \ILBounded unboundedness, and \ILBounded non-termination are $\NP$-hard for machines with a single channel, even when the input language is letter bounded.
\end{thm}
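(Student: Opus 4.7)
The plan is to reduce 3-SAT to each of the three problems. Given a 3-CNF formula $\varphi$ with variables $x_1,\ldots,x_n$ and clauses $C_1,\ldots,C_m$, I construct in polynomial time a single-channel FIFO machine $\fifo$ and a letter-bounded language $L$ over the alphabet $\Sigma = \{t_i, f_i \mid 1 \le i \le n\}$, where sending $t_i$ (resp.\ $f_i$) encodes the choice $x_i = \mathit{true}$ (resp.\ $x_i = \mathit{false}$). The letter-bounded template $t_1^* f_1^* t_2^* f_2^* \cdots t_n^* f_n^*$ is concatenated $m+1$ times to yield $L$, which, after unrolling into $2n(m+1)$ singleton factors, is genuinely letter-bounded in the sense of the paper (letters may repeat across factors).

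The machine is laid out in three kinds of phases. First, a \emph{guess phase} sends, for each $i$ from $1$ to $n$, exactly one letter nondeterministically chosen from $\{t_i, f_i\}$, producing a channel of length $n$ encoding the guessed truth assignment. Next, for each $j = 1,\ldots, m$, a dedicated \emph{rotation phase} $R_j$ processes the channel by repeatedly receiving the head letter and immediately resending the same letter, for a total of $n$ receive-send pairs; the control state of $R_j$ carries a single bit that flips to $1$ whenever the letter just received corresponds to a literal of $C_j$, and the only outgoing transition from $R_j$ requires the bit to be $1$. Finally, a \emph{drain phase} performs $n$ pure receives, emptying the channel and moving to a target state $\qreach$. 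The whole machine has $O(nm)$ control states, and over any accepting run the send sequence has the shape $(t_1^? f_1^? \cdots t_n^? f_n^?)^{m+1}$, with exactly one of $t_i$, $f_i$ chosen per block and each ``$?$'' denoting either $0$ or $1$, which is a word of $L$.

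Correctness is immediate: each rotation preserves channel contents exactly, so every $R_j$ tests the same initially guessed assignment; hence $(\qreach, \initcontents)$ is reachable along $\sendL$ iff the guessed assignment satisfies every $C_j$, iff $\varphi$ is satisfiable. For \ILBounded unboundedness and \ILBounded non-termination, I augment $\qreach$ with gadgets over a fresh letter $\#$, extending the input language to $L \cdot \#^*$: a send self-loop on $\#$ makes the reachability set infinite exactly when $\qreach$ is reached, and a send-receive cycle on $\#$ yields an infinite execution of $\Pref(\sendL)$ exactly when $\qreach$ is reached. This gives \NP-hardness for all three problems simultaneously.

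The main obstacle---and the reason the rotation design is needed---is that on a single channel constrained by a letter-bounded template of the form $t_1^* f_1^* \cdots t_n^* f_n^*$, one cannot freely re-inject assignment letters in between clause checks without violating the prescribed ordering. The receive-then-resend rotation sidesteps this by consuming one copy of the template per clause check and then recreating exactly the same contents, so that assignment consistency across the $m$ checks is enforced automatically by FIFO semantics rather than by any extra bookkeeping. The price paid is unrolling the template $m+1$ times, which is still polynomial in $n$ and $m$.
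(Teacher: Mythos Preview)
Your proposal is correct and follows essentially the same reduction from 3-SAT as the paper: nondeterministically write an assignment to the single channel by sending one of $\{t_i,f_i\}$ per variable, then for each clause rotate the channel contents (receive-and-resend) while checking for a satisfying literal, drain, and constrain the send projection by the $(m{+}1)$-fold concatenation of the letter-bounded template; a fresh letter $\#$ appended to the template handles unboundedness and non-termination. The only cosmetic difference is that the paper delimits each rotation with an explicit end-marker $\#$ and uses receive-resend self-loops plus a three-way branch on the witnessed literal (flattened only in a subsequent corollary), whereas your fixed-count rotation carrying a satisfaction bit is already acyclic.
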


\newcommand{\clause}{C}

\begin{proof}
	We reduce from 3SAT\@. Given a 3-CNF formula $\clause_1 \land \ldots \land \clause_m$ over variables $x_1, \ldots , x_n$, we construct a FIFO machine with one channel. The message alphabet has $2n + 1$ letters and is as follows $\Sigma_\# = \Sigma \uplus \{\#\}$, where $\Sigma = \{ \letter{1}, \ldots, \letter{n} \} \uplus \{\notletter{1}, \ldots, \notletter{n}\}$. 
	The FIFO machine consists of the gadgets shown below. The gadget for variable $x_k$ adds either $\letter{k}$ (in the top transition) or $\notletter{k}$ (in the bottom edge) to the channel. At the end of this gadget, the channel will have either $\letter{k}$ or $\notletter{k}$. We will sequentially compose the gadgets for all variables. Starting from the initial control state of the gadget for variable $x_1$, we reach the final control state of the gadget for variable $x_n$, such that for every variable $x_k$ we either have  $\letter{k}$ or $\notletter{k}$ in the channel --- and this determines the truth valuation.
	
	We then add the gadget that adds the stop symbol to the channel, as shown in Figure~\ref{fig:vargadget}.
	\begin{figure}[!h]
		
		\begin{tikzpicture}[->, node distance=1.75cm, auto, thick]
			\node[] (p1) [circle, fill, inner sep = 1.5pt] {};
			\node[] (p2) [right= of p1, circle, fill, inner sep = 1.5pt] {};
			
			\node[] (p3) [right= 3cm of p2, circle, fill, inner sep = 1.5pt] {};
			\node[] (p4) [right= of p2, circle, fill, inner sep = 1.5pt] {};
			
			\path[->]
			
			(p1) edge [bend left] node[] {$!\letter{k}$} (p2)
			(p1) edge [bend right] node[swap] {$!\notletter{k}$} (p2)
			(p4) edge [] node[] {$!\#$} (p3)
			;
			\node [below=1cm, align= center, text width=8cm] at (p1)
			{
				(a) Gadget for variable $x_k$
			};
			
			\node [below=1cm, align=center,text width=8cm] at (p3)
			{
				(b) Gadget for stop marker
			};
		\end{tikzpicture}
		\caption{Gadget for variables\label{fig:vargadget}}
	\end{figure}
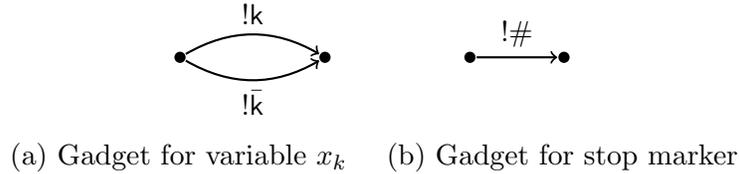
	
	Next, we add gadgets for the clauses. The gadget for the example clause $\clause_1 = x_1 \lor \neg x_2 \lor \neg x_3$ (gadgets for other clauses follow similar pattern) is shown in Figure~\ref{fig:clausegadgets}. The gadget checks that the channel has either $\letter{1}$ (in the top path) or has $\notletter{2}$ (in the middle path) or has $\notletter{3}$ (in the bottom path).  We append the clause gadgets to the end of the variable gadgets one after the other. All clauses are satisfied by the truth valuation determined by the contents of channels $x_1 , \ldots, x_n$ iff we can reach the last control state of the last clause.
	
	The gadget for cleaning up all the variables is shown below (it receives all the letters from the channel). We append the cleanup gadget to the end of the clause gadget for $\clause_m$.	
	
	\begin{figure}[!h]
		
		\begin{tikzpicture}[->, node distance=1.5cm, auto, thick]
			\node[] (p1) [circle, fill, inner sep = 1.5pt] {};
			\node[] (p2) [right= 1.5cm of p1, circle, fill, inner sep = 1.5pt] {};
			\node[] (p3) [above= 1.5cm of p2, circle, fill, inner sep = 1.5pt] {};
			\node[] (p4) [below= 1.5cm of p2, circle, fill, inner sep = 1.5pt] {};
			\node[] (p5) [right= 2cm of p2, circle, fill, inner sep = 1.5pt] {};
			\node[] (p6) [right= 2cm of p3, circle, fill, inner sep = 1.5pt] {};
			\node[] (p7) [right= 2cm of p4, circle, fill, inner sep = 1.5pt] {};
			\node[] (p8) [right= 2cm of p5, circle, fill, inner sep = 1.5pt] {};
			
			\node[] (p9) [right= 3.5cm of p8, circle, fill, inner sep = 2pt] {};
			\path[->]
			(p1) edge [bend left] node[] {$\varepsilon$} (p3)
			(p1) edge [bend right] node[] {$\varepsilon$} (p4)
			(p1) edge [] node[] {$\varepsilon$} (p2)
			(p3) edge [loop] node[swap] {$?\letter{a}!\letter{a}$} (p3)
			(p3) edge [] node[] {$?\letter{1}!\letter{1}$} (p6)
			(p6) edge [loop] node[swap] {$?\letter{a}!\letter{a}$} (p6)
			(p6) edge [bend left] node[] {$?\#!\#$} (p8)
			(p2) edge [loop] node[swap] {$?\letter{a}!\letter{a}$} (p2)
			(p2) edge [] node[]  {$?\notletter{2}!\notletter{2}$} (p5)
			(p5) edge [loop] node[swap] {$?\letter{a}!\letter{a}$} (p5)
			(p5) edge [] node[] {$?\#!\#$} (p8)
			(p4) edge [loop] node[swap] {$?\letter{a}!\letter{a}$} (p4)
			(p4) edge [] node[] (q) {$?\notletter{3}!\notletter{3}$} (p7)
			(p7) edge [loop] node[swap] {$?\letter{a}!\letter{a}$} (p7)
			(p7) edge [bend right] node[swap] {$?\#!\#$} (p8)
			
			(p9) edge [loop] node[swap] {$?\Sigma_\#$} (p9)
			;
			\node [below=1.5cm, align=right,text width=7cm] at (q)
			{
				(a) Gadget for clause $\clause_1 = x_1 \lor \neg x_2 \lor \neg x_3$
			};
			\node [below=2.9cm, align=center,text width=6cm] at (p9)
			{
				(b) Gadget for cleanup
			};
		\end{tikzpicture}
		
		\caption {Gadget for clauses (the loop labelled $?\letter{a}!\letter{a}$ represents loops for all $\letter{a} \in \Sigma$)\label{fig:clausegadgets}}
	\end{figure}
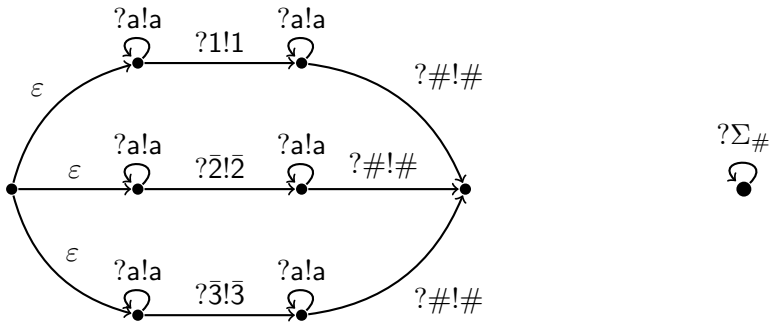
	
	Note that in the FIFO machine given above, every gadget can only be visited once, and the input language of each gadget for a variable $x_k$ is equal to $\{\letter{k}, \notletter{k}\}$ which is included in the letter-bounded language $ \letter{k}^* \notletter{k}^*$. Hence, for every execution $\runFIFO$ along the sequence of variable gadgets, we have $\runFIFO \in \sendL$ where $\boundedL = \letter{1}^*\notletter{1}^*\ldots \letter{n}^* \notletter{n}^*\#^*$. For every clause gadget, we have, once again, an execution $\runFIFO' \in \sendL$. 
	Hence, we see that the input-language of every run can be restricted to
	the bounded language $(\letter{1}^*\notletter{1}^*\ldots \letter{n}^* \notletter{n}^*\#^*)^{m+1}$. 
	Furthermore, while there are loops in the FIFO machine, it can be seen that no loop can be executed infinitely often. We can also see that along every run, the channel is bounded and the size of the channel does not exceed $n+1$.
	
	The given 3-CNF formula is satisfiable iff the control state of the cleanup gadget can be reached with the channel being empty. Hence, this constitutes a reduction to the reachability problem. Furthermore, if we add a self loop to the state of the cleanup gadget, such that it sends the letter $\#$ to the channel, then this loop can be iterated infinitely often to add unboundedly many occurrences of the letter $\#$ to the channel. Now, the given 3-CNF formula is satisfiable iff the constructed 
	FIFO machine is unbounded iff channel is unbounded iff there is a non-terminating run. Hence reachability, unboundedness and non-termination 
	are all NP-hard.
\end{proof}

\begin{rem}
	We can remove the $\varepsilon$-transitions in the above construction by instead non-deterministically choosing one of the three variables per clause. This would eliminate all $\varepsilon$-transitions, and corresponds to the model we define, which does not have them.
	
	We can adapt the proof above to the more restrictive case of FIFO machines whose input language is restricted to a distinct-letter-bounded language, by modifying the transitions in the gadget for clause $\clause_i$ as follows: For every transition sequence of the form $?\letter{a}!\letter{a}$, we replace it by $?\letter{a}_{i-1}!\letter{a}_i$, thereby ensuring that we write different letters to the channel in every gadget. The transitions for the gadgets for each variable $x_k$ (when it is set initially) would be modified by $!\letter{k}_{0}$ and $?\notletter{k}_0$.
\end{rem}

Furthermore, we can modify gadgets for the clauses (see Figure~\ref{flat-gadget}) in order to have the following corollary, which improves a similar result for flat FIFO machines with \emph{multiple} channels in  \cite{DBLP:conf/concur/FinkelP19}.

\begin{cor}\label{flatsingle}
	For flat FIFO machines with a single channel, reachability, unboundedness, and non-termination are NP-hard, hence, they are also NP-complete.
\end{cor}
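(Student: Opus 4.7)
The plan is to adapt the construction of Theorem~\ref{prop:NPhard} so that the resulting FIFO machine is \emph{flat}, i.e., every control state lies on at most one simple cycle (equivalently, the control graph, viewed as a finite automaton, accepts a bounded language). The NP upper bound is then inherited from the known NP-membership of reachability, unboundedness, and non-termination for flat FIFO machines \cite{DBLP:conf/lics/EsparzaGM12,DBLP:conf/concur/FinkelP19}, so it suffices to establish NP-hardness.

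The only obstacle is that the clause gadgets in Figure~\ref{fig:clausegadgets} contain states with several self-loops of the form $?\letter{a}!\letter{a}$ for all $\letter{a} \in \Sigma$. A state carrying two distinct self-loops already lies on two simple cycles, so these gadgets are not flat as they stand. The key observation that allows us to repair this is that, by the construction of Theorem~\ref{prop:NPhard}, whenever execution enters a clause gadget the channel contents belong to the fixed-order language $\letter{1}^\ast \notletter{1}^\ast \ldots \letter{n}^\ast \notletter{n}^\ast \#^\ast$. The letters therefore appear in a prescribed order, so the ``read-any-letter-and-rewrite'' loops can be unfolded deterministically into this order.

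Concretely, I would replace each state $s$ that originally carried the self-loops $\{?\letter{a}!\letter{a}\}_{\letter{a} \in \Sigma}$ by a chain of $2n+1$ fresh states $s_1, \ldots, s_{2n+1}$ connected by $\varepsilon$-like forward edges (or, to stay inside the $\varepsilon$-free model, by the transitions that check individual literals), where $s_i$ carries a single self-loop that reads and rewrites the $i$-th letter in the order $\letter{1}, \notletter{1}, \ldots, \letter{n}, \notletter{n}, \#$. The ``commit'' transitions that witness the satisfied literal of the clause are inserted at the appropriate position of the chain (for clause $\clause_1 = x_1 \lor \neg x_2 \lor \neg x_3$, the transition $?\letter{1}!\letter{1}$ exits the chain at position~$1$, then two further chains skip the remaining letters up to $\#$; similarly for the $\notletter{2}$- and $\notletter{3}$-branches). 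After this unfolding, each individual state of the modified gadget carries at most one self-loop, so it sits on at most one simple cycle; the rest of the construction (variable gadgets and cleanup) already enjoys this property.

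It remains to verify correctness: the chain of single-letter self-loops faithfully simulates the original multi-letter loop on any channel content in the bounded language above, because the letters can be consumed and re-enqueued in exactly the dictated order. Hence the resulting flat FIFO machine reaches the cleanup state with empty channel (resp.\ admits an infinite run, resp.\ has an unbounded channel) iff the given 3-CNF formula is satisfiable, exactly as in Theorem~\ref{prop:NPhard}. Combined with the NP upper bound for reachability, unboundedness, and non-termination in flat FIFO machines, this yields NP-completeness of all three problems even in the single-channel case.
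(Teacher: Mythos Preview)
Your proposal is correct, and the overall strategy---unfold the multi-letter self-loops so that every control state lies on at most one simple cycle, then invoke the NP upper bound from \cite{DBLP:conf/lics/EsparzaGM12,DBLP:conf/concur/FinkelP19}---matches the paper's. The implementations differ, however.

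You replace each state carrying the loops $\{?\letter{a}!\letter{a}\}_{\letter{a}\in\Sigma}$ by a chain of $2n{+}1$ states, each with a \emph{single} self-loop for one letter in the order $\letter{1},\notletter{1},\ldots,\letter{n},\notletter{n},\#$, linked by forward edges. This is flat because a state with one self-loop lies on exactly one simple cycle. The paper goes one step further and observes that, since after the variable gadgets the channel contains \emph{exactly one} symbol per variable (either $\letter{k}$ or $\notletter{k}$) followed by a single $\#$, no loops are needed at all: each branch of the clause gadget becomes a straight line of $n{+}1$ transitions, with a binary non-deterministic choice $?\letter{k}!\letter{k}$ versus $?\notletter{k}!\notletter{k}$ at every position except the one witnessing the clause, where the required literal is forced. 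The cleanup gadget is similarly linearised. The resulting clause and cleanup gadgets are \emph{acyclic}, hence trivially flat.

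Both constructions are polynomial and both are correct. Your version is slightly more general (it would still work if the channel could hold several copies of a letter), but here that generality is not needed; the paper's acyclic gadgets are smaller and make flatness immediate without having to argue about self-loops.
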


\begin{figure}[h!]	
	\begin{tikzpicture}[->, node distance=1.72cm, auto, thick, scale=0.90]
		\node[] (p1) [circle, fill, inner sep = 1.5pt] {};
		\node[] (p2) [right= 1.5cm of p1, circle, fill, inner sep = 1.5pt] {};
		\node[] (p2p) [right= 0.75cm of p2, circle, fill, inner sep = 1.5pt] {};
		\node[] (p2p3) [right= 0.75cm of p2p, circle, fill, inner sep = 1.5pt] {};
		\node[] (p3) [above= 1.5cm of p2, circle, fill, inner sep = 1.5pt] {};
		\node[] (p3p) [right= 0.75cm of p3, circle, fill, inner sep = 1.5pt] {};
		\node[] (p3p4) [right= 0.75cm of p3p, circle, fill, inner sep = 1.5pt] {};
		\node[] (p4) [below= 1.5cm of p2, circle, fill, inner sep = 1.5pt] {};
		\node[] (p4p) [right= 0.75cm of p4, circle, fill, inner sep = 1.5pt] {};
		\node[] (p4p5) [right= 0.75cm of p4p, circle, fill, inner sep = 1.5pt] {};
		\node[] (px) [right= 0.75cm of p4p5, circle, fill, inner sep = 1.5pt] {};
		\node[] (p5) [right= 3.5cm of p2, circle, fill, inner sep = 1.5pt] {};
		\node[] (p5p) [right= 0.75cm of p5, circle, fill, inner sep = 1.5pt] {};
		\node[] (p6) [right= 3.5cm of p3, circle, fill, inner sep = 1.5pt] {};
		\node[] (p6p) [right= 0.75cm of p6, circle, fill, inner sep = 1.5pt] {};
		\node[] (p7) [right= 3.5cm of p4, circle, fill, inner sep = 1.5pt] {};
		\node[] (p7p) [right= 0.75cm of p7, circle, fill, inner sep = 1.5pt] {};
		\node[] (p8) [right= 2cm of p5, circle, fill, inner sep = 1.5pt] {};
		
		\node[] (p9) [right= 2.0cm of p8, circle, fill, inner sep = 1.5pt] {};
		\node[] (p10) [right= 0.7cm of p9, circle, fill, inner sep = 1.5pt] {};
		\node[] (p11) [right= 0.7cm of p10, circle, fill, inner sep = 1.5pt] {};
		\node[] (p12) [right= 0.7cm of p11, circle, fill, inner sep = 1.5pt] {};
		\node[] (p13) [right= 0.7cm of p12, circle, fill, inner sep = 1.5pt] {};
		\path[->]
		(p1) edge [bend left] node[] {$\varepsilon$} (p3)
		(p1) edge [bend right] node[] {$\varepsilon$} (p4)
		(p1) edge [] node[] {$\varepsilon$} (p2)
		(p3) edge [] node[] {$?\letter{1}!\letter{1}$} (p3p)
		(p3p) edge [bend left] node[] {$?\letter{2}!\letter{2}$} (p3p4)
		(p3p) edge [bend right] node[swap] {$?\notletter{2}!\notletter{2}$} (p3p4)
		(p3p4) edge [draw=white] node[] {$\ldots$} (p6)
		(p6) edge [bend left] node[] {$?\letter{n}!\letter{n}$} (p6p)
		(p6) edge [bend right] node[swap] {$?\notletter{n}!\notletter{n}$} (p6p)
		(p6p) edge [bend left] node[] {$?\#!\#$} (p8)
		(p2) edge [bend left] node[] {$?\letter{1}!\letter{1}$} (p2p)
		(p2) edge [bend right] node[swap] {$?\notletter{1}!\notletter{1}$} (p2p)
		(p2p) edge [] node[] {$?\notletter{2}!\notletter{2}$} (p2p3)
		(p2p3) edge [draw=white] node[] {$\ldots$} (p5)
		(p5) edge [bend left] node[] {$?\letter{n}!\letter{n}$} (p5p)
		(p5) edge [bend right] node[swap] {$?\notletter{n}!\notletter{n}$} (p5p)
		(p5p) edge [] node[] {$?\#!\#$} (p8)
		(p4) edge [bend left] node[] {$?\letter{1}!\letter{1}$} (p4p)
		(p4) edge [bend right] node[swap] {$?\notletter{1}!\notletter{1}$} (p4p)
		(p4p) edge [bend left] node[] {$?\letter{2}!\letter{2}$} (p4p5)
		(p4p) edge [bend right] node[swap] {$?\notletter{2}!\notletter{2}$} (p4p5)
		(p4p5) edge [] node[] {$?\notletter{3}!\notletter{3}$} (px)
		(px) edge [draw=white] node[] {$\ldots$} (p7)
		(p7) edge [bend left] node[] {$?\letter{n}!\letter{n}$} (p7p)
		(p7) edge [bend right] node[swap] {$?\notletter{n}!\notletter{n}$} (p7p)
		(p7p) edge [bend right] node[swap] {$?\#!\#$} (p8)
		
		(p9) edge [bend left] node[] {$?\letter{1}$} (p10)
		(p9) edge [bend right] node[swap] {$\varepsilon$} (p10)
		(p10) edge [bend left] node[] {$?\notletter{1}$} (p11)
		(p10) edge [bend right] node[swap] {$\varepsilon$} (p11)
		(p11) edge [draw=white] node[] {$\ldots$} (p12)
		(p12) edge [] node[] {$?\#$} (p13)
		
		;
		\node [below=1.5cm, align=right,text width=7.5cm] at (p4p5)
		{
			(a) Gadget for clause $\clause_1 = x_1 \lor \neg x_2 \lor \neg x_3$ 
		};
		\node [below=3.2cm, align=center,text width=5cm] at (p11)
		{
			(b) Gadget for cleanup
		};
	\end{tikzpicture}
	
	\caption{Showing NP-hardness for flat systems with a single channel\label{flat-gadget}}
	
\end{figure}
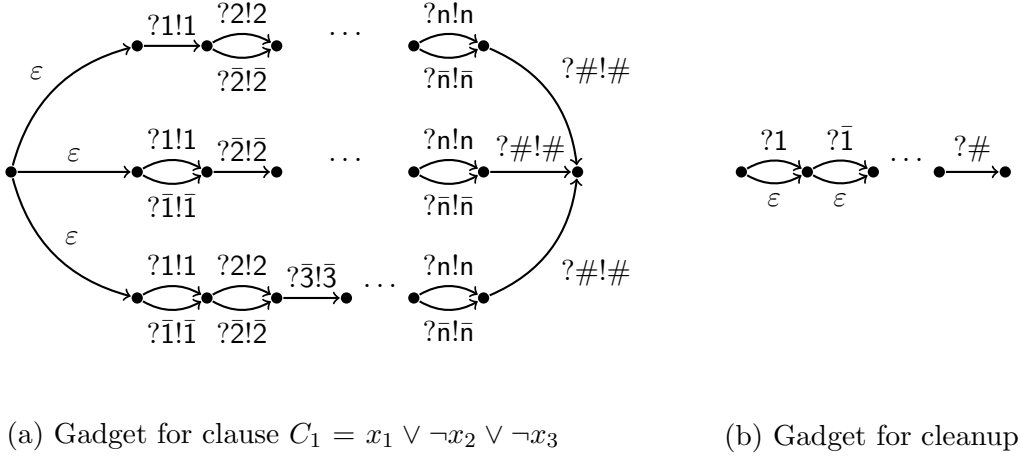

\begin{rem}
	When considering FIFO systems of communicating processes with two FIFO channels (and not FIFO machines with a single channel), the deadlock and the unboundedness are PSPACE for systems of two processes with two one-directional FIFO channels if (at least) one channel is letter-input-bounded \cite{gouda1987deadlock}. We conjecture that this result can be upgraded to input-bounded FIFO systems.
\end{rem}

\section{Conclusion and Perspectives}\label{sec:conclusion}

We extend recent results of the \emph{bounded verification} of
communicating finite-state machines
(equivalently FIFO machines) \cite{DBLP:conf/lics/EsparzaGM12} and of \emph{flat} FIFO machines \cite{DBLP:conf/concur/FinkelP19} by using bounded languages for controlling the input-languages of FIFO channels (and not for controlling the runs of the machine).
We extend old and recent results about input-bounded FIFO machines (see Table~\ref{Table-results}). In particular, we introduce the rational-reachability problem, which subsumes most of the well-known variants of reachability problems like: the (classical) reachability problem, the control-state  reachability problem, and the deadlock problem. We also unify the terminology to facilitate the comparison between results.
Moreover, note that, for most problems (except general/rational reachability), we can reduce \emph{output-bounded} reachability to an equivalent input-bounded problem.
There are still many open problems and challenges:
\begin{itemize}
	\item What is the precise complexity of the five problems for input-bounded FIFO machines with a fixed number of channels?
	\item What is the precise complexity of control-state reachability, deadlock, unboundedness, and termination for input-bounded FIFO machines?
	\item The size of the counter machine associated with a FIFO machine and a tuple of bounded languages is exponential, but only polynomial when we start from a normal form. It will be interesting to see whether the use of existing tools for counter machines is feasible for the verification of FIFO machines from case studies. Case studies shall also reveal how many FIFO machines/systems are actually boundable and/or flattable.
\end{itemize}

\begin{table}[t]
	\caption{Summary of key results; results for all other extensions are subsumed by these results (D stands for decidable).}\label{Table-results}
	\centering
	\begin{tabular}{ *{4}{c} }
		\toprule
		& \multicolumn{1}{c}{\bfseries Flat}
		& \multicolumn{1}{c}{\bfseries Letter-bounded}
		& \multicolumn{1}{c}{\bfseries Bounded}              \\
		\midrule
		UNBOUND			&	NP-C (\cite{DBLP:conf/concur/FinkelP19})	 &	D (\cite{gouda1987deadlock}) 																			&	D (\cite{DBLP:journals/tcs/JeronJ93})			\\
		
		TERM		& NP-C (\cite{DBLP:conf/concur/FinkelP19})	 &   D   											  	     & \textbf{D}   \\
		
		REACH   		& NP-C (\cite{DBLP:conf/concur/FinkelP19})		&  D   												    &  \textbf{D, not ELEM}	\\
		
		CS-REACH   	& NP-C (\cite{DBLP:conf/lics/EsparzaGM12,DBLP:conf/concur/FinkelP19})		&  D   												    &  D	\\
		
		DEADLOCK 	 	& 	D		  & 	D (\cite{gouda1987deadlock})  																 &    \textbf{D}     \\
		\bottomrule
		
	\end{tabular}
\end{table}

\noindent
{\bf Towards a theory of boundable FIFO machines.}~%
In Example~\ref{ex:boundable}, we have seen that all configurations
that are reachable in the CDP protocol are already reachable
in presence of a suitable collection $\boundedL$ of bounded input-languages.
By analogy with the well-established theory of \emph{flattable} machines \cite{BFLP-sttt08,DFGD-jancl10,CFS-atpn2011}, we propose the following definition. 
\begin{defi}
	Let $\fifo$ be a FIFO machine and let $\boundedL$ be a tuple of regular bounded languages. We say that $\fifo$ is \emph{$\boundedL$-boundable} if $\reachset{\fifo} = \reachsetL{\fifo}{\boundedL_!}$. We say that $\fifo$ is \emph{boundable} if there exists a tuple $\boundedL$ of regular bounded languages such that $\fifo$ is \emph{$\boundedL$-boundable}.
\end{defi}
Hence, we deduce that
reachability is decidable for $\boundedL$-boundable FIFO machines, which is a \emph{strictly larger} class than input-bounded machines.
CDP is not input-bounded but it is $\boundedL_{\mathit{CDP}}$-boundable with $\boundedL_{\mathit{CDP}}= ((ab)^*(a + \epsilon)(ab)^*, \msgc^\ast )$. 
Let us also remark that CDP is flattable by using the bounded set of runs
$(!a!b)^*!a!\msgc?\msgc(!a!b)^*+(!a!b)^*$ (where we omit channel information for readability), because it covers the reachability set which is equal to $(ab)^*(a + \epsilon)(ab)^*$ on control-state $(0,0)$.
It is not clear whether reachability is decidable for boundable machines. A strategy that would fairly enumerate \emph{all} regular bounded families $\boundedL_1, \boundedL_2,\ldots,\boundedL_n,\ldots$ will necessarily find the good one, if $\fifo$ is boundable, but this is not sufficient because we must be able to \emph{recognize} $Reach_M$.
Observe that boundable machines are more robust than flat machines. Consider a system $\mathcal{S}=(\A_1,\A_2,\ldots,\A_n)$ of $n$ flat finite automata $\A_i$ communicating peer to peer (P2P) through one-directional FIFO channels. Let $M_\mathcal{S}$ denote FIFO the machine obtained as the Cartesian product of all automata $\A_i$ of $\mathcal{S}$; there is no reason to assume that $M_\mathcal{S}$ is flattable but it is input-bounded and thus $M_\mathcal{S}$ is $\boundedL$-boundable where $\boundedL$ is easily computable from $\mathcal{S}$.

\bibliographystyle{alphaurl}
\bibliography{bibliography}

\end{document}